\documentclass{llncs}
\usepackage{epsfig,latexsym}


%

\usepackage{color}
\usepackage{algorithmic}
\usepackage{algorithm}
\usepackage{xfrac}
\usepackage{epsfig}
\usepackage{amssymb}
\usepackage{graphicx}
\usepackage{url}
\usepackage{amsmath}
\usepackage{hyperref}
\usepackage{slashbox}
\usepackage[font=footnotesize]{subcaption}
\captionsetup{
  font=small,          
  labelfont=bf,        
  margin=3em,          
}




\newcommand{\newstuf}[1]{{\color{black} #1}} 
\newcommand{\commentout}[1]{}


\oddsidemargin0.0cm \evensidemargin0.0cm \topmargin0.0cm
\headheight0cm \headsep0mm \textheight22.2cm \textwidth16.5cm
\pagestyle{plain}
\pagenumbering{arabic} %

\usepackage{wrapfig}

\begin{document}
\title{Metric tree-like structures in real-life networks: \\  an empirical study}
\author{Muad Abu-Ata  \and Feodor F. Dragan}
\institute{Algorithmic Research Laboratory, Department of Computer Science \\  Kent State University,
Kent, OH 44242, USA  \\ {\em \{mabuata,dragan\}@cs.kent.edu}}

\maketitle

\begin{abstract}
Based on solid theoretical foundations, we present strong evidences that a number of real-life networks, taken from different domains like Internet measurements, biological data, web graphs, social and collabora\-tion networks, exhibit tree-like structures from a metric point of view.
We investigate few graph parameters, namely, the tree-distortion and the tree-stretch, the tree-length and the tree-breadth, the Gro\-mov's hyperbolicity, the cluster-diameter and the cluster-radius in a layering partition of a graph, which capture and quantify this phenomenon of
being metrically close to a tree. \newstuf{By bringing all those parameters together, we not only provide efficient means for detecting such metric tree-like structures in large-scale networks but also show how such structures can be used, for example, to efficiently and compactly encode approximate distance and almost shortest path information and to fast and accurately estimate diameters and radii of those networks. Estimating the diameter and the radius of a graph or distances between its arbitrary vertices are fundamental primitives in many data and graph mining algorithms.}
\end{abstract}

\section{Introduction}\label{sec:intro}
Large networks are everywhere. Can we understand their structure
and exploit it? For example, understanding key structural properties of large-scale data networks is crucial for analyzing and
optimizing their performance, as well as improving their reliability and security \cite{narayan2011large}. In prior empirical  and theoretical studies researchers have mainly focused on features like small world phenomenon, power law degree distribution, navigability, high clustering coefficients, etc. (see
\cite{barabasi99emergence,barabasi2000scalefree,Boguna2009,DBLP:journals/im/ChungL03,DBLP:conf/sigcomm/FaloutsosFF99,
DBLP:conf/stoc/Kleinberg00,DBLP:conf/nips/Kleinberg01,DBLP:journals/im/LeskovecLDM09,Watts-Colective-1998}). Those nice features
were observed 
in many real-life complex networks and 
graphs arising in Internet applications, in biological and social sciences, in chemistry and physics. Although those features are interesting and important, as it is noted in \cite{narayan2011large}, the impact of intrinsic geometrical and topological features of large-scale data networks on performance, reliability and security is of much greater importance.

Recently, a few papers explored a little-studied before geometric characteristic of real-life networks, namely the
{\em hyperbolicity} (sometimes called also the {\em global curvature}) of the network (see, e.g., \cite{DBLP:conf/icdm/AdcockSM13,conf/isaac/ChenFHM12,conf/nca/MontgolfierSV11,Kennedy2013Arch,narayan2011large,DBLP:journals/ton/ShavittT08}). It was shown that a number of data networks, including Internet application networks, web networks, collaboration networks, social networks, and others, have small hyperbolicity.
It was suggested in \cite{narayan2011large} that property, observed in real-life networks,  that traffic between nodes tends to go through a
relatively small core of the network, as if the shortest path between them is curved inwards, may be due to global curvature
of the network. Furthermore, paper \cite{Kennedy2013Arch} proposes that ``hyperbolicity in conjunction with other local characteristics of networks, such as
the degree distribution and clustering coefficients, provide a more complete unifying picture of networks, and helps classify in a parsimonious way what is otherwise a bewildering and complex array of features and characteristics specific to each natural and man-made network".

The hyperbolicity of a graph/network can be viewed as a measure of how close a graph is to a tree metrically; the smaller the hyperbolicity of a graph is the closer it is metrically to a tree. Recent empirical results of \cite{DBLP:conf/icdm/AdcockSM13,conf/isaac/ChenFHM12,conf/nca/MontgolfierSV11,Kennedy2013Arch,narayan2011large,DBLP:journals/ton/ShavittT08} on hyperbolicity suggest that many
real-life complex networks and graphs may possess tree-like structures from a metric point of view.

In this paper, we substantiate this claim through analysis of a collection of real data
networks. We investigate few more, recently introduced graph parameters, namely, the {\em tree-distortion} and the {\em
tree-stretch} of a graph, the {\em tree-length} and the {\em
tree-breadth} of a graph, the Gromov's {\em hyperbolicity} of a
graph, the {\em cluster-diameter} and the {\em cluster-radius} in a
{\em layering partition} of a graph. All these parameters are trying to capture and quantify this phenomenon of being metrically close to a tree 
and can be used to measure metric tree-likeness of a real-life network. Recent advances in theory (see appropriate sections for details) allow us to calculate or accurately estimate those parameters for sufficiently large networks. By examining topologies of numerous publicly available networks, we demonstrate existence of metric tree-like structures in wide range of large-scale networks, from communication networks to various forms of social and biological
networks.

Throughout this paper we discuss these parameters and recently established relationships between them for unweighted and undirected graphs. It
turns out that all these parameters are at most constant or logarithmic factors apart from each other. Hence, a constant bound on one of them translates in a constant or almost constant bound on another. We say that a graph {\em has a tree-like structure from a metric point of view} (equivalently, {\em is metrically tree-like}) if anyone of those parameters is a small constant.

\newstuf{
Recently, paper \cite{DBLP:conf/icdm/AdcockSM13} pointed out that 
"although large informatics graphs such as social and information networks are often thought of as having hierarchical or tree-like structure, this assumption is rarely tested, and it has proven difficult to exploit this idea in practice; ... it is not clear whether such structure can be exploited for improved graph mining and machine learning ...".

In this paper, by bringing all those parameters together, we not only provide efficient means for detecting such metric tree-like structures in large-scale networks  but also show how such structures can be used, for example, to efficiently and compactly encode approximate distance and almost shortest path information and to fast and accurately estimate diameters and radii of those networks. Estimating accurately and quickly distances between arbitrary vertices of a graph is a fundamental primitive in many data and graph mining algorithms. 
}

Graphs that are metrically tree-like have many algorithmic advantages. They allow efficient approximate solutions for a number of optimization problems. For example, they admit a PTAS for the Traveling Salesman Problem~\cite{KrLe}, have an efficient approximate solution for the problem of covering and packing by balls~\cite{DBLP:conf/approx/ChepoiE07}, admit additive sparse spanners~\cite{ChDrEsRout,DoDrGaYa} and collective additive tree-spanners~\cite{DBLP:conf/sofsem/DraganA13}, enjoy efficient and compact approximate distance~\cite{ChDrEsRout,GaLy}  and routing~\cite{ChDrEsRout,DBLP:journals/jgaa/Dourisboure05} labeling schemes, have efficient algorithms for fast and accurate estimations of diameters
and radii~\cite{DBLP:conf/compgeom/ChepoiDEHV08}, etc.. We elaborate more on these results in appropriate sections.

\newstuf{
For the first time such metric parameters, as tree-length and tree-breadth, tree-distortion and tree-stretch, cluster-diameter  and  cluster-radius, were examined, and algorithmic advantages of having those parameters bounded by small constants were discussed for such a wide range of large-scale networks.

}

This paper is structured as follows. In Section \ref{sec:notions}, we give notations and basic notions used in the paper. In Section \ref{sec:datasets}, we describe our graph datasets. The next four sections are devoted to analysis of corresponding parameters measuring metric tree-likeness of our graph datasets: layering partition and its cluster-diameter and cluster-radius in Section \ref{sec:layer-partit};  hyperbolicity in Section \ref{sec:hyperbol}; tree-distortion in Section \ref{sec:td}; tree-breadth, tree-length and tree-stretch in Section \ref{sec:tb}. In each section we first give theoretical background on the parameter(s) and then present our experimental results. Additionally, an overview of implications of those results is provided.  In Section \ref{appl}, we further discuss algorithmic advantages for a graph to be metrically tree-like. Finally, in Section  \ref{sec:concl}, we give some concluding remarks.

\section{Notations and Basic Notions}\label{sec:notions}
All graphs in this paper are connected, finite, unweighted, undirected, loopless and without multiple edges. 
For a graph $G=(V,E)$, we use $n$ and $|V|$ interchangeably to denote the number of vertices in $G$. Also, we use $m$ and $|E|$ to denote the number of edges. The {\em length of a path} from a vertex $v$ to a vertex $u$ is the number of edges in the path. The {\em distance} $d_G(u,v)$ between vertices $u$ and $v$ is the length of the shortest path connecting $u$ and $v$ in $G$. The {\em ball} $B_r(s,G)$ of a graph $G$ centered at vertex $s \in V$ and with radius $r$ is the set of all vertices with distance no more than $r$ from $s$ (i.e., $B_r(s,G)=\{v\in V: d_G(v,s) \leq r \}$). We omit the graph name $G$ as in $B_r(s)$ if the context is about only one graph.

The \emph{diameter} $diam(G)$ of a graph $G=(V,E)$ is the largest distance between a pair of vertices in $G$, i.e., $diam(G)=\max_{u,v \in V}d_G(u,v)$. The \emph{eccentricity} of a vertex $v$, denoted by $ecc(v)$, is the largest distance from that vertex $v$ to any other vertex, i.e., $ecc(v)=\max_{u \in V} d_G(v,u)$.  The \emph{radius} $rad(G)$ of a graph $G=(V,E)$ is the minimum eccentricity of a vertex in $G$, i.e., $rad(G)=\min_{v \in V} \max_{u\in V}d_G(v,u)$. The \emph{center} $C(G)=\{c \in V: ecc(c)=rad(G)\}$ of a graph $G=(V,E)$ is the set of vertices with minimum eccentricity.

Definitions of graph parameters measuring metric tree-likeness of a graph, as well as notions and notations local to a section, are given in appropriate sections.

\section{Datasets}\label{sec:datasets}

Our datasets come from different domains like Internet measurements, biological datasets, web graphs, social and collaboration networks. Table~\ref{tab:datasets} shows basic statistics of our graph datasets. Each graph represents the largest connected component of the original graph as some datasets consist of one large connected component and many very small ones.

\begin{table}
\footnotesize
\begin{center}
\begin{tabular}{ | c | c | c | c | c |}
    \hline
  Graph & n= & m=  & diameter   & radius   \\
  $G=(V,E)$ &  $|V|$ &  $|E|$  &  $diam(G)$ &  $rad(G)$ \\ \hline \hline
  PPI~\cite{ppi} & 1458 & 1948 & 19 & 11\\ \hline
  Yeast~\cite{yeast} & 2224 & 6609 & 11 & 6 \\ \hline
  DutchElite~\cite{dutchElite} & 3621 & 4311 & 22  & 12 \\ \hline
  EPA~\cite{epa} &4253 &8953 &10 & 6 \\ \hline
  EVA~\cite{eva} &4475 &4664 &18 & 10 \\ \hline
  California~\cite{california} & 5925 & 15770 & 13  & 7 \\ \hline
  Erd\"os~\cite{erdos} & 6927 & 11850 & 4  & 2 \\ \hline
  Routeview~\cite{routeview} & 10515 & 21455 & 10 & 5 \\ \hline
  Homo  release 3.2.99~\cite{homo} & 16711& 115406 & 10 & 5 \\ \hline
  AS\_Caida\_20071105~\cite{AS-Caida-20071105} & 26475 & 53381 & 17 & 9 \\ \hline
  Dimes 3/2010~\cite{dimes} & 26424 & 90267 & 8  & 4 \\ \hline
  Aqualab 12/2007- 09/2008~\cite{aqualab} & 31845 & 143383 & 9  & 5 \\ \hline
  AS\_Caida\_20120601~\cite{AS-Caida-20120601} & 41203 & 121309 & 10 & 5 \\ \hline
  itdk0304~\cite{itdk0304} & 190914 & 607610 & 26  & 14 \\ \hline
  DBLB-coauth~\cite{dblbAmazon} & 317080 & 1049866 & 23 & 12 \\ \hline
  Amazon~\cite{dblbAmazon} & 334863 & 925872 & 47  & 24 \\
  \hline
\end{tabular}
\end{center}
\caption{Graph datasets and their parameters: number of vertices, number of edges, diameter, radius.}
\label{tab:datasets} \vspace*{-6mm}
\end{table}

\medskip

\noindent
\textbf{Biological Networks}

\noindent
\underline{PPI}~\cite{ppi}: It is a protein-protein interaction network in the yeast Saccharomyces Cerevisiae. Each node represents a protein with an edge representing an interaction between two proteins. Self loops have been removed from the original dataset. The dataset has been analyzed and described in~\cite{ppi}.

\noindent\underline{Yeast}~\cite{yeast}: It is a protein-protein interaction network in budding yeast. Each node represents a protein with an edge representing an interaction between two proteins. Self loops have been removed from the original dataset. The dataset has been analyzed and described in~\cite{yeast}.

\noindent\underline{Homo}~\cite{homo}: It is a dataset of protein and genetic interactions in Homo Sapiens (Human). Each node represents a protein or a gene. An edge represents an interaction between two proteins/genes. Parallel edges, representing different resources for an interaction, have been removed. The dataset is obtained from BioGRID, a freely accessible database/repositiory of physical and genetic interactions available at \url{http://www.thebiogrid.org}. The dataset has been analyzed and described in~\cite{homo}.

\medskip

\noindent
\textbf{Social and Collaboration Networks}

\noindent\underline{DutchElite}~\cite{dutchElite}: This is data on the administrative elite in Netherland, April 2006. Data collected and analyzed by De Volkskrant and Wouter de Nooy. A 2-mode network data representing person's membership in the administrative and organization bodies in Netherland in 2006. A node represents either a person or an organization body. An edge exists between two nodes if the person node belongs to the organization node.

\noindent\underline{EVA}~\cite{eva}: It is a network of interconnection between corporations where an edge exists between two companies (vertices) if one of them is the owner of the other company.

\noindent\underline{Erd\"os}~\cite{erdos}: It is a collaboration network with mathematician Paul Erd\"os. Each vertex represents an author with an edge representing a paper co-authorship between two authors.

\noindent\underline{DBLB-coauth}~\cite{dblbAmazon}: It is a co-authorship network of the DBLP computer science bibliography. Vertices of the network represent authors with edges connecting two authors if they published at least one paper together.
\medskip

\noindent
\textbf{Web Graphs}

\noindent\underline{EPA}~\cite{epa}: It is a dataset representing pages linking to \url{www.epa.gov} obtained from Jon Kleinberg's web page, \url{http://www.cs.cornell.edu/courses/cs685/2002fa/}. The pages were constructed by expanding a 200-page response set to a search engine query, as in the hub/authority algorithm. This data was collected some time back, so a number of the links may not exist anymore. The vertices of this graph dataset  represent web pages with edges representing links. The graph was originally directed. We ignored direction of edges to get undirected graph version of the dataset.

\noindent\underline{California}~\cite{california}: This graph dataset was also constructed by expanding a 200-page response set to a search engine query 'California', as in the hub/authority algorithm. The dataset was obtained from Jon Kleinberg's page, \url{http://www.cs.cornell.edu/courses/cs685/2002fa/}. The vertices of this graph dataset represent web pages with edges representing links between them. The graph was originally directed. We ignored direction of edges to obtain undirected graph version of the dataset.
\medskip

\noindent
\textbf{Internet Measurements Networks}

\noindent\underline{Routeview}~\cite{routeview}: It is an Autonomous System (AS) graph obtained by University of Oregon Route-views project using looking glass data and routing registry. A vertex in the dataset represents an AS with an edge linking two vertices if there is at least one physical link between them.

\noindent\underline{AS\_Caida}~\cite{AS-Caida-20071105,AS-Caida-20120601}: These are datasets of the Internet Autonomous Systems (AS) relationships derived from BGP table snapshots taken at 24-hour intervals over a 5-day period by CAIDA. The AS relationships available are customer-provider (and provider-customer, in the opposite direction), peer-to-peer, and sibling-to-sibling.

\noindent\underline{Dimes 3/2010}~\cite{dimes}: It is an AS relationship graph of the Internet obtained from Dimes. The Dimes project performs traceroutes and pings from volunteer agents (of about 1000 agent computers) to infer AS relation\-ships. A weekly AS snapshot is available.  The dataset \emph{Dimes 3/2010} represents a snapshot aggregated over the month of March, 2010. It provides the set of AS level nodes and edges that were found in that month and were seen at least twice.

\noindent\underline{Aqualab}~\cite{aqualab}: Peer-to-peer clients are used to collect traceroute paths which are used to infer AS intercon\-nections. Probes were made between December 2007 and September 2008 from approximately 992,000 P2P users in 3,700 ASes.


\noindent\underline{Itdk}~\cite{itdk0304}:  This is a dataset of Internet router-level graph where each vertex represents a router with an edge between two vertices if there is a link between the corresponding routers. The dataset snapshot is computed from ITDK0304 skitter and iffinder measurements. The dataset is provided by CAIDA for April 2003 (see \url{http://www.caida.org/data/active/internet-topology-data-kit}).

\medskip

\noindent
\textbf{Information network}

\noindent\underline{Amazon}~\cite{dblbAmazon}: It is an Amazon product co-purchasing network.
The vertices of the network represent products purchased from the Amazon website and the edges link ``commonly/frequently'' co-purchased products.

\section{Layering Partition, its Cluster-Diameter and Cluster-Radius} \label{sec:layer-partit}


Layering partition is a graph decomposition procedure that has been introduced in~\cite{DBLP:journals/jal/BrandstadtCD99,DBLP:journals/ejc/ChepoiD00} and has been used in~\cite{DBLP:journals/jal/BrandstadtCD99,DBLP:journals/ejc/ChepoiD00,ChepoiDNRV12} and~\cite{BaInSi} for embedding graph metrics into trees. It provides a central tool in our investigation.

A \emph{layering} of a graph $G=(V, E)$ with respect to a start vertex $s$ is the decomposition of $V$ into the layers (spheres) $L^i=\{u\in V:d_G(s,u)=i\},i=0,1,\dots,r$. A \emph{layering partition} $\mathcal{LP}(G,s)=\{L^i_1,\cdots,L^i_{p_i}:i=0,1,\dots,r\}$ of $G$ is a partition of each layer $L^i$ into clusters $L^i_1,\dots,L^i_{p_i}$ such that two vertices $u,v \in L^i$ belong to the same cluster $L^i_j$ if and only if they can be connected by a path outside the ball $B_{i-1}(s)$ of radius $i-1$ centered at $s$. See Fig. \ref{fig:layering-partition} for an illustration. A layering partition of a graph can be constructed in $O(n+m)$ time (see~\cite{DBLP:journals/ejc/ChepoiD00}).

\begin{figure}
\vspace*{-1.4cm}
        \centering
        \begin{subfigure}[b]{0.450\textwidth}
                \includegraphics[width=\textwidth]{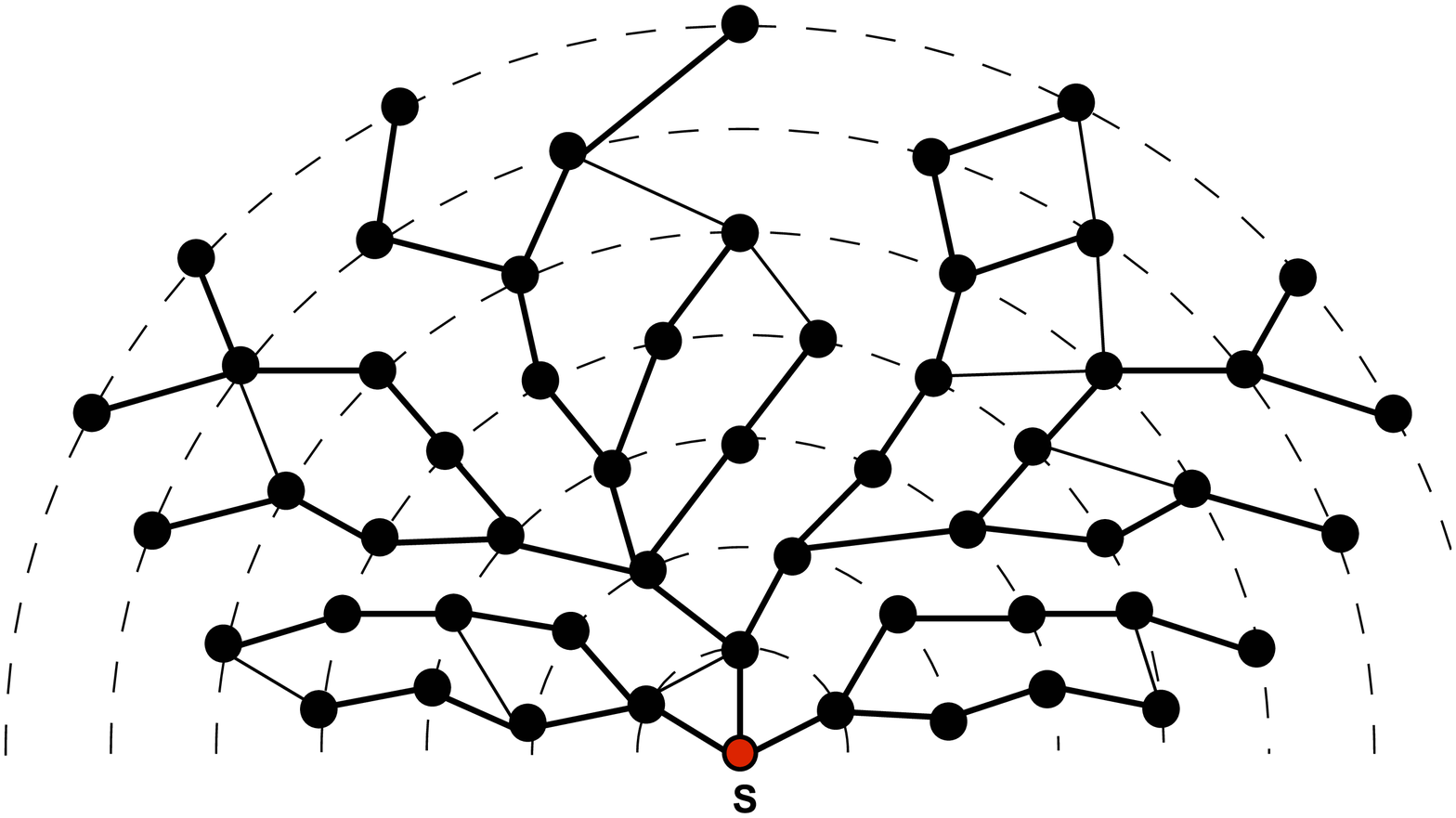}
                \vspace*{-.9cm}
                \caption{Layering of graph $G$ with respect to $s$.}
                \label{fig:layering}
        \end{subfigure}%
        \begin{subfigure}[b]{0.450\textwidth}
                \includegraphics[width=\textwidth]{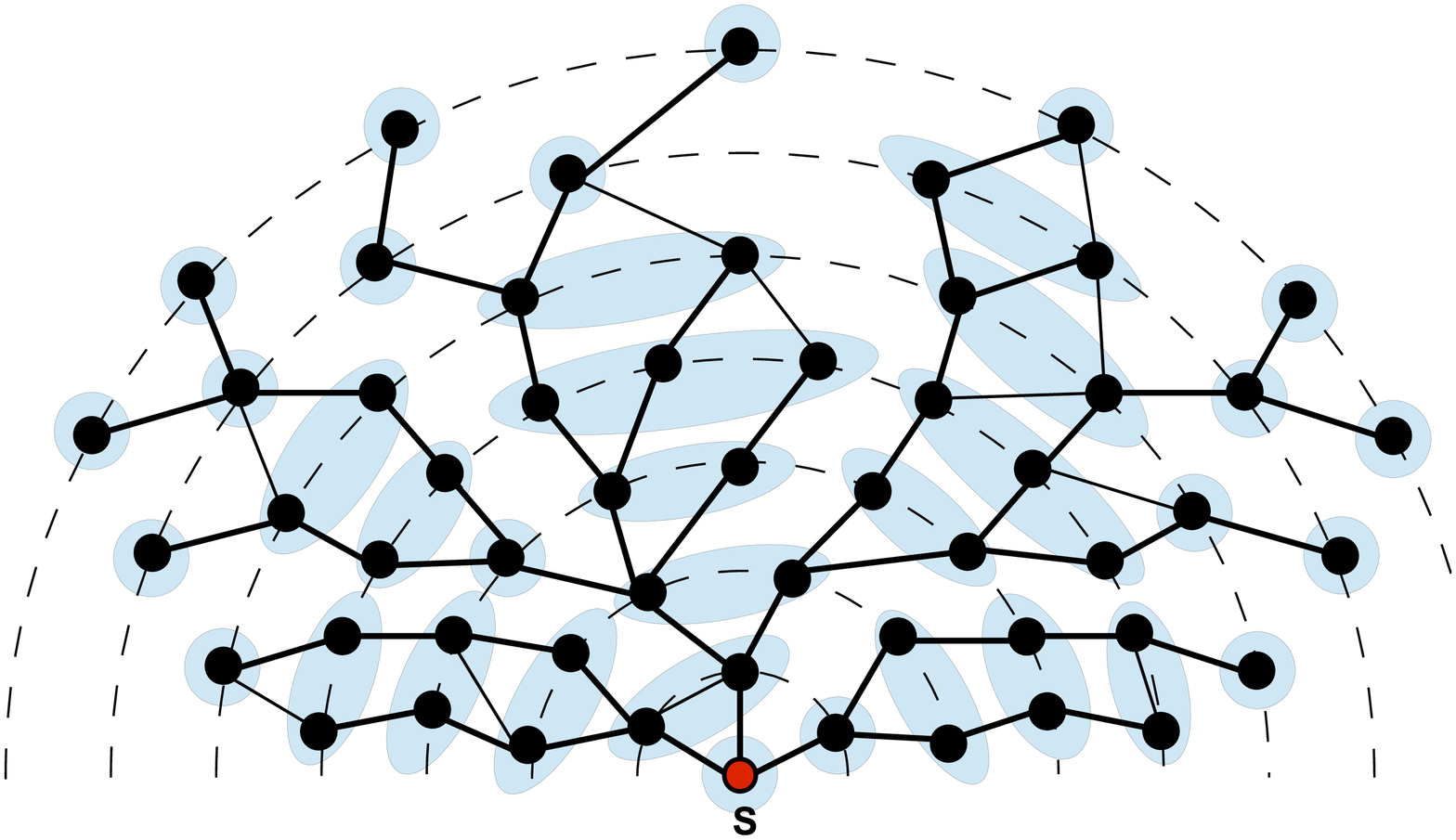}
                \vspace*{-.9cm}
                \caption{Clusters of the layering partition $\mathcal{LP}(G,s)$.}
                \label{fig:Layering-clusters}
        \end{subfigure}

         \begin{subfigure}[b]{0.450\textwidth} \vspace*{-.5cm}
                \includegraphics[width=\textwidth]{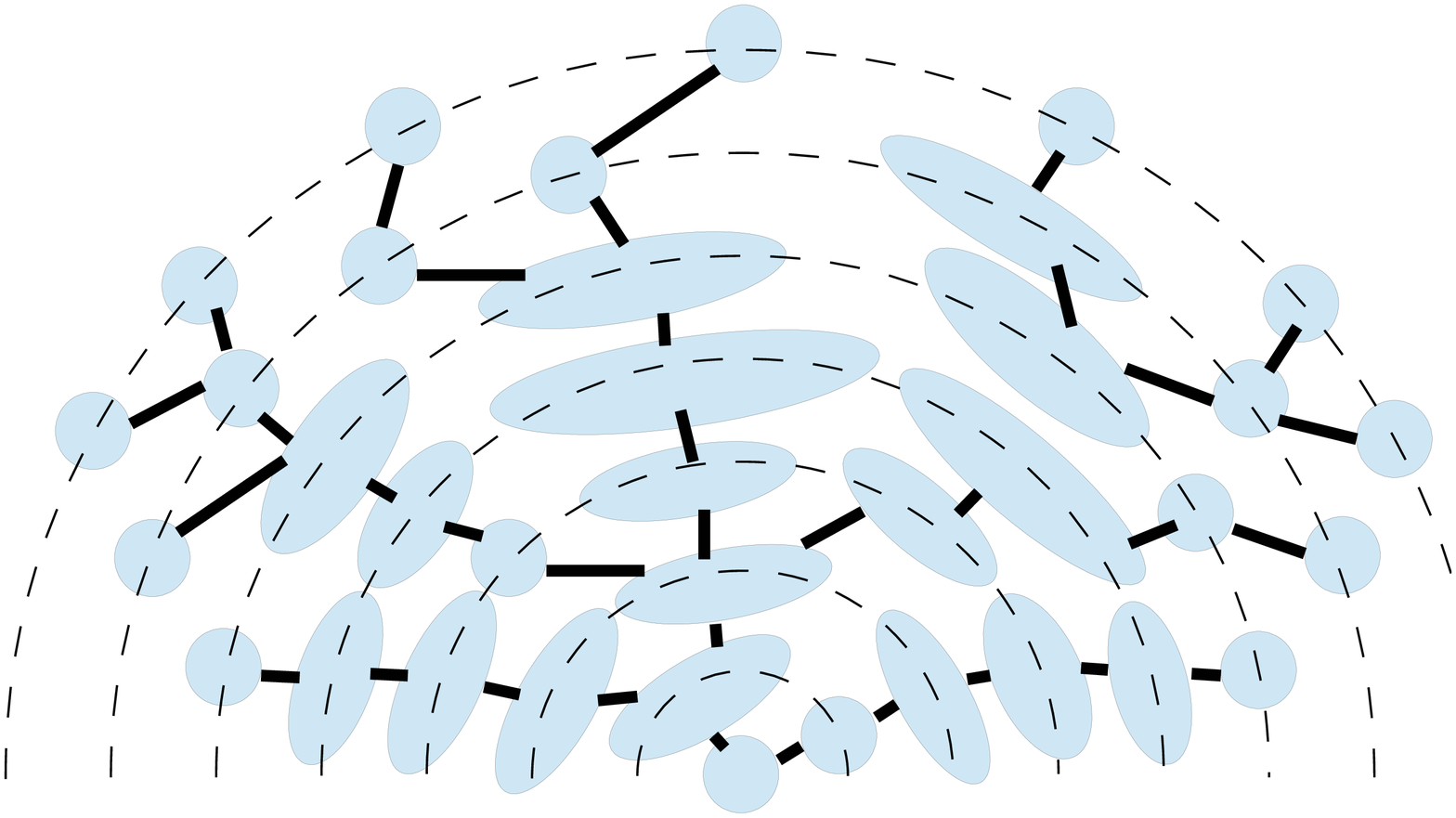}
                \vspace*{-.9cm}
                \caption{Layering tree $\Gamma(G,s)$.}
                \label{fig:gamma}
        \end{subfigure}
        \begin{subfigure}[b]{0.450\textwidth}\vspace*{-.5cm}
                \includegraphics[width=\textwidth]{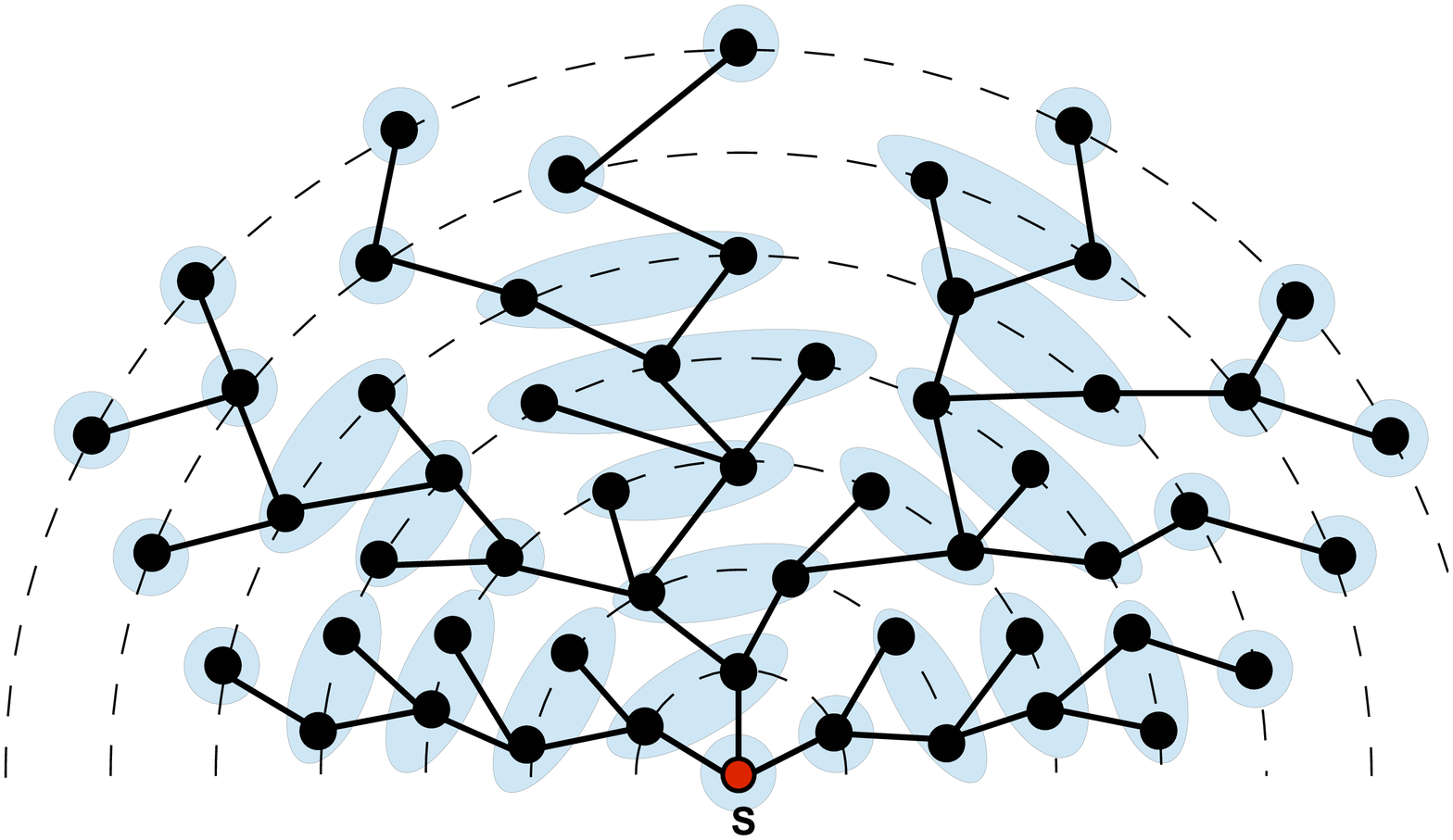}
                \vspace*{-.9cm}
                \caption{Canonic tree $H$ obtained from the layering partition.} 
                \label{fig:treeH}
        \end{subfigure}
        \caption{Layering partition and associated constructs.}\label{fig:layering-partition}
\end{figure}

A \emph{layering tree} $\Gamma(G,s)$ of a graph $G$ with respect to a layering partition $\mathcal{LP}(G,s)$  is the graph whose nodes are the clusters of $\mathcal{LP}(G,s)$ and two nodes $C=L_j^i$ and $C'=L_{j'}^{i'}$ are adjacent in $\Gamma(G,s)$ if and only if there exist a vertex $u \in C$ and a vertex $v\in C'$ such that $uv \in E$. It was shown in~\cite{DBLP:journals/jal/BrandstadtCD99} that the graph $\Gamma(G,s)$ is always a tree and, given a start vertex $s$,  can be constructed in $O(n+m)$ time~\cite{DBLP:journals/ejc/ChepoiD00}. Note that, for a fixed start vertex $s\in V$, the layering partition $\mathcal{LP}(G,s)$ of $G$ and its tree $\Gamma(G,s)$ are unique.

The \emph{cluster-diameter $\Delta_s(G)$ of layering partition $\mathcal{LP}(G,s)$ with respect to vertex $s$} is the largest diameter of a cluster in $\mathcal{LP}(G,s)$, i.e., $\Delta_s(G)=\max_{C \in \mathcal{LP}(G,s)} \max_{u,v\in C}d_G(u,v)$. The \emph{cluster-diameter $\Delta(G)$ of a graph $G$} is the minimum cluster-diameter over all layering partitions of $G$, i.e. $\Delta(G)=\min_{s \in V}\Delta_s(G)$.

The \emph{cluster-radius $R_s(G)$ of layering partition $\mathcal{LP}(G,s)$ with respect to a vertex $s$} is the smallest number $r$ such that for any cluster $C \in \mathcal{LP}(G,s)$ there is a vertex $v \in V$ with $C \subseteq B_r(v)$. The \emph{cluster-radius $R(G)$ of a graph $G$} is the minimum cluster-radius over all layering partitions of $G$, i.e., $R(G)=\min_{s \in V}R_s(G)$.

Clearly, in view of tree $\Gamma(G,s)$ of $G$, the smaller parameters $\Delta_s(G)$ and $R_s(G)$ of $G$ are, the closer graph $G$ is to a tree metrically.

Finding cluster-diameter $\Delta_s(G)$ and cluster-radius $R_s(G)$ for a given layering partition $\mathcal{LP}(G,s)$ of a graph $G$ requires $O(n m)$ time\footnote{The parameters $\Delta(G)$ and $R(G)$ can also be computed in total $O(n m)$ time for any graph $G$.}, although the construction of layering partition $\mathcal{LP}(G,s)$ itself, for a given vertex $s$, takes only $O(n+m)$ time. Since the diameter of any set is at least its radius and at most twice its radius, we have the following inequality: $$R_s(G) \leq \Delta_s(G) \leq 2R_s(G).$$


\begin{table}\vspace*{-6mm}
\footnotesize
\begin{center}
\begin{tabular}{ | c | c | c | c| c | p{2.4cm} | p{2.6cm} |}  
    \hline
  Graph     & n=    & diameter  & \# of clusters         & cluster-         & average diameter    & $\%$ of clusters \\
  $G=(V,E)$ & $|V|$ & $diam(G)$ & in $\mathcal{LP}(G,s)$ & diameter         & of clusters in      & having diameter 0  \\
            &       &           &                        & $\Delta_s(G)$    & $\mathcal{LP}(G,s)$ & or 1 (i.e., cliques)  \\ \hline \hline
  PPI & 1458  & 19 & 1017 &8 & ~~0.118977384& ~~97.05014749\% \\ \hline
  Yeast & 2224 & 11 &1838 &6 & ~~0.119575699 & ~~96.33558341\% \\ \hline
  DutchElite  & 3621  & 22 &2934 & 10& ~~0.070211316 &~~98.02317655\% \\ \hline
  EPA  &4253  &10 &2523 &6 & ~~0.06698375 & ~~98.5731272\%\\ \hline
  EVA  &4475 &18 &4266 &9 & ~~0.031879981 & ~~99.2030005\%\\ \hline
  California & 5925  & 13 &2939 &8 & ~~0.092208234 &~~97.141885\%\\ \hline
  Erd\"os  & 6927  & 4 &6288 &4 & ~~0.001113232 &~~99.9681934\% \\ \hline
  Routeview  & 10515  & 10 &6702 & 6 & ~~0.063264697 & ~~98.4482244\%\\ \hline
  Homo  release 3.2.99  & 16711 & 10 & 6817 &5 & ~~0.03432595 & ~~99.2518703\% \\ \hline
  AS\_Caida\_20071105  & 26475  & 17&17067 & 6 & ~~0.056424679 &~~98.5527626\%\\ \hline
  Dimes 3/2010  & 26424  & 8 &16065 &4 & ~~0.056582633 & ~~98.5434174\%\\ \hline
  Aqualab 12/2007- 09/2008  & 31845  & 9 &16287 & 6 & ~~0.05826733& ~~98.5816909\% \\ \hline
  AS\_Caida\_20120601  & 41203  & 10& 26562& 6 & ~~0.055568105 & ~~98.5731496\% \\ \hline
  itdk0304 & 190914  & 26 &89856 & 11 & ~~0.270377048 & ~~91.3851051\%  \\ \hline
  DBLB-coauth & 317080  & 23 &99828 &11 & ~~0.45350002 & ~~92.97091\% \\ \hline
  Amazon & 334863  & 47 &72278 & 21 & ~~0.489056144 & ~~86.049697\% \\
  \hline
\end{tabular}
\end{center}
\caption{Layering partitions of the datasets and their parameters. $\Delta_s(G)$ is the largest diameter of a cluster in $\mathcal{LP}(G,s)$, where $s$ is a randomly selected start vertex. For all datasets, the average diameter of a cluster is between 0 and 1. For most datasets,  more than 95\% of clusters are cliques.}
\label{tab:layering}
\vspace*{-0.5cm}\end{table}

In Table~\ref{tab:layering}, we show empirical results on layering partitions obtained for datasets described in Section \ref{sec:datasets}. For each graph dataset $G=(V,E)$, we randomly selected a start vertex $s$ and built layering partition $\mathcal{LP}(G,s)$ of $G$ with respect to $s$. For each dataset, Table~\ref{tab:layering} shows the cluster-diameter
$\Delta_s(G)$, the number of clusters in layering partition $\mathcal{LP}(G,s)$ and the average diameter of clusters in $\mathcal{LP}(G,s)$. It turns out that all graph datasets have small average diameter of clusters. Most clusters have diameter $0$ or $1$, i.e., they are essentially cliques (=complete subgraphs) of $G$. For most datasets,  more than 95\% of clusters are cliques.

To have a better picture on the overall distribution of diameters of clusters, in Table~\ref{tab:lay-diam-freq}, we show the frequencies of diameters of clusters for three sample datasets: PPI, Yeast, and AS\_Caida\_20071105. It is interesting to note that, in all 
datasets, the clusters with large diameters induce a connected subtree in the
tree $\Gamma(G,s)$. For example, in PPI, the cluster with diameter 8 is adjacent in $\Gamma(G,s)$ to all clusters with diameters 6 and 5. This may indicate that all those clusters are part of the well connected network core.

\begin{table}
    \footnotesize
    \centering
   \begin{subtable}{.3\textwidth}
        \centering
        \begin{tabular}{ | c | c |c|}
            \hline
        diameter   & ~frequency~ & relative \\
        ~of a cluster~ & & ~frequency~\\ \hline\hline
            0 & 966 &0.9499\\ \hline
            1 & 21 & 0.0206\\ \hline
            2 & 14 & 0.0138 \\ \hline
            3 & 5  & 0.0049 \\ \hline
            4 & 5 & 0.0049\\ \hline
            5 & 1 & 0.0001\\ \hline
            6 & 4 &0.0039\\ \hline
            7 & 0 &0\\ \hline
            8 & 1 &0.0001\\
            \hline
        \end{tabular}
    \caption{PPI}
    \end{subtable}
    \hspace*{2mm}
   \begin{subtable}{.3\textwidth}
   \centering
    \begin{tabular}{ | c | c |c|}
        \hline
        diameter   & ~frequency~ & relative \\
        ~of a cluster~ & & ~frequency~\\ \hline\hline
        0 & 981 &0.946\\ \hline
        1 & 18 &0.0174\\ \hline
        2 & 23 &0.0223 \\ \hline
        3 & 6 &0.0058 \\ \hline
        4 & 5 &0.0048\\ \hline
        5 & 2 &0.0019\\ \hline
        6 & 2 &0.0019 \\
        \hline
    \end{tabular}
   \caption{Yeast}
   \end{subtable}
   \hspace*{2mm}
    \begin{subtable}{.3\textwidth}
    \centering
    \begin{tabular}{ | c | c |c|}
        \hline
        diameter   & ~frequency~ & relative \\
        ~of a cluster~ & & ~frequency~\\ \hline\hline
        0 & 16459 &0.9644 \\ \hline
        1 & 361 & 0.0216\\ \hline
        2 & 174 &0.0102\\ \hline
        3 & 46 &0.0027\\ \hline
        4 & 21 &0.0012\\ \hline
        5 & 4 &0.0002 \\ \hline
        6 & 2 & 0.0001\\
        \hline
    \end{tabular}
   \caption{AS\_Caida\_20071105}
   \end{subtable}
   \caption{Frequency of diameters of clusters in layering partition $\mathcal{LP}(G,s)$ (three datasets).}
   \label{tab:lay-diam-freq}
\vspace*{-0.5cm}\end{table}

Most of the graph parameters discussed in this paper could be related to a special tree $H$ introduced in~\cite{ChepoiDNRV12} and produced from a layering partition of a graph $G$.

\textbf{Canonic tree} $\mathbf{H}$: A tree $H=(V,F)$ of a graph $G=(V,E)$, called a {\em canonic tree of $G$}, is constructed from a layering partition $\mathcal{LP}(G,s)$ of $G$ by identifying for each cluster $C=L^i_j \in \mathcal{LP}(G,s)$ an arbitrary vertex $x_C \in L_{i-1}$  which has a neighbor in $C = L^i_j$ and by making $x_C$ adjacent in $H$ with all vertices $v\in C$ (see Fig. \ref{fig:treeH} for an illustration). Vertex $x_C$ is called the support vertex for cluster $C= L^i_j$. It was shown in~\cite{ChepoiDNRV12} that tree $H$ for a graph $G$ can be constructed in $O(n+m)$ total time.

The following statement from~\cite{ChepoiDNRV12} relates the cluster-diameter of a layering partition of $G$ with embedability of graph $G$ into the tree $H$.
\begin{proposition} [\cite{ChepoiDNRV12}]
\label{lem:cluster-diam}
For every graph $G=(V,E)$ and any vertex $s$ of $G$, $$\forall x,y \in V, ~~d_H(x,y)-2 \leq d_G(x,y) \leq d_H(x,y)+\Delta_s(G).$$
\end{proposition}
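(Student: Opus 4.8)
The plan is to route everything through the layering tree $\Gamma=\Gamma(G,s)$, using the canonic tree $H$ only through the way its edges are defined by support vertices. Fix $s$, write $i=d_G(s,x)$ and $j=d_G(s,y)$, and let $C_x,C_y$ be the clusters of $\mathcal{LP}(G,s)$ containing $x$ and $y$, so $C_x\subseteq L^i$ and $C_y\subseteq L^j$. Two preliminary facts about layering partitions set up the machinery. First, edges of $G$ join only equal or consecutive layers; two distinct clusters of the same layer are never adjacent in $\Gamma$ (an edge between them would be a path avoiding $B_{i-1}(s)$, so both endpoints would lie in one cluster), and every cluster $C\subseteq L^q$ with $q\ge1$ has exactly one neighbor in $L^{q-1}$ (if vertices of two clusters $C_1,C_2\subseteq L^{q-1}$ had neighbors $u_1,u_2\in C$, then a path from $u_1$ to $u_2$ avoiding $B_{q-1}(s)$, extended by those two edges, avoids $B_{q-2}(s)$ and forces $C_1=C_2$). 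Hence $\Gamma$, rooted at $L^0=\{s\}$, is a tree in which each non-root cluster $C$ at level $q$ has a well-defined parent cluster at level $q-1$; write $C^{[t]}$ for its $t$-th ancestor, and note $d_\Gamma(C_x,C_y)=i+j-2p^\ast$, where $p^\ast$ is the level of $C^\ast:=\mathrm{LCA}_\Gamma(C_x,C_y)$. Second, rooting $H$ at $s$, the unique neighbor of $v\in L^q$ ($q\ge1$) in a smaller layer is the support vertex $x_{C_v}$ of its cluster; iterating this, $d_H(s,v)=q=d_G(s,v)$, and more precisely the $H$-ancestor of $v$ at level $q-t$ is $x_{C_v^{[t-1]}}\in C_v^{[t]}$ for $1\le t\le q$.

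The heart of the argument is to compare $z:=\mathrm{LCA}_H(x,y)$ with $C^\ast$; let $p$ be the level of $z$. I will show $p^\ast-1\le p\le p^\ast$. For the upper bound, $z$ is the $H$-ancestor of $x$ at level $p$ and also of $y$ at level $p$, so by the second fact $z\in C_x^{[i-p]}$ and $z\in C_y^{[j-p]}$; hence $C_x^{[i-p]}=C_y^{[j-p]}$ is a common $\Gamma$-ancestor of $C_x$ and $C_y$, so it lies on the path from $C^\ast$ to the root and $p\le p^\ast$. For the lower bound — the source of the additive $2$ — the $H$-ancestors of $x$ and of $y$ at level $p^\ast$ lie in $C_x^{[i-p^\ast]}=C^\ast$ and $C_y^{[j-p^\ast]}=C^\ast$ respectively, and both of these vertices have the same $H$-parent, namely the support vertex $x_{C^\ast}\in L^{p^\ast-1}$; thus $x_{C^\ast}$ is a common $H$-ancestor of $x$ and $y$, forcing $p\ge p^\ast-1$ (the case $p^\ast=0$ being trivial). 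Since both $H$ and $\Gamma$ are rooted trees whose level rises by one along each edge, $d_H(x,y)=i+j-2p$, and combining with $d_\Gamma(C_x,C_y)=i+j-2p^\ast$ and $p^\ast-1\le p\le p^\ast$ gives $d_\Gamma(C_x,C_y)\le d_H(x,y)\le d_\Gamma(C_x,C_y)+2$.

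It then remains to sandwich $d_G(x,y)$ between $d_\Gamma(C_x,C_y)$ and $d_\Gamma(C_x,C_y)+\Delta_s(G)$. The inequality $d_\Gamma(C_x,C_y)\le d_G(x,y)$ is immediate: along a shortest $(x,y)$-path in $G$, consecutive vertices lie in equal or $\Gamma$-adjacent clusters, so their clusters form a walk in $\Gamma$ of length at most $d_G(x,y)$ from $C_x$ to $C_y$. For the reverse estimate I use that every vertex $v$ of a cluster $C$ at level $q\ge1$ has a $G$-neighbor in the parent cluster of $C$ — a first edge of a shortest $(v,s)$-path leads to a level-$(q-1)$ vertex whose cluster is $\Gamma$-adjacent to $C$, hence equals the parent of $C$. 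Starting at $x$ and repeatedly stepping into the parent cluster produces a $G$-path of length $i-p^\ast$ from $x$ to some $x'\in C^\ast$, and similarly a $G$-path of length $j-p^\ast$ from $y$ to some $y'\in C^\ast$; since $x',y'$ lie in one cluster, $d_G(x',y')\le\Delta_s(G)$, and concatenating gives $d_G(x,y)\le(i-p^\ast)+\Delta_s(G)+(j-p^\ast)=d_\Gamma(C_x,C_y)+\Delta_s(G)$. Chaining the estimates yields $d_H(x,y)-2\le d_\Gamma(C_x,C_y)\le d_G(x,y)\le d_\Gamma(C_x,C_y)+\Delta_s(G)\le d_H(x,y)+\Delta_s(G)$, which is exactly the claim.

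The main obstacle is the second preliminary fact together with the comparison $p^\ast-1\le p\le p^\ast$: one must track carefully which vertex of which cluster plays the role of each $H$-ancestor and see exactly why two vertices sharing a cluster may have distinct support vertices one level up but are guaranteed to share the support vertex $x_{C^\ast}$ two levels up. That single level of slack is precisely what produces the additive constant $2$ on the left-hand side, and it is easy to be off by one here. Once the ``neighbor in the parent cluster'' property is in hand, the two sandwich estimates for $d_G(x,y)$ are routine.
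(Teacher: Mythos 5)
Your proof is correct, and since the paper only cites Proposition~\ref{lem:cluster-diam} from~\cite{ChepoiDNRV12} without reproducing a proof, there is nothing in the text to diverge from: your argument is essentially the standard one from that reference, routing both bounds through the layering tree $\Gamma(G,s)$ and comparing the level of $\mathrm{LCA}_H(x,y)$ with that of $\mathrm{LCA}_\Gamma(C_x,C_y)$. The key steps all check out: uniqueness of the parent cluster, $d_H(s,v)=d_G(s,v)$ with the $H$-ancestor of $v$ at level $q-t$ lying in $C_v^{[t]}$, the one-level slack $p^\ast-1\le p\le p^\ast$ that yields the additive $2$, and the descent into parent clusters that yields the additive $\Delta_s(G)$.
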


The above proposition shows that the distortion of embedding of a graph $G$ into tree $H$
is additively bounded by $\Delta_s(G)$, the largest diameter of a cluster in a layering partition of $G$.
This result confirms that the smaller cluster-diameter $\Delta_s(G)$ (cluster-radius $R_s(G)$) of $G$ is, the closer graph $G$ is to a tree metric. Note that trees have cluster-diameter and cluster-radius equal to $0$. Results similar to Proposition \ref{lem:cluster-diam} were used in~\cite{DBLP:journals/jal/BrandstadtCD99} to embed a chordal graph to a tree with an additive distortion at most 2, in~\cite{DBLP:journals/ejc/ChepoiD00} to embed a $k$-chordal graph to a tree with an additive distortion at most $k/2 +2$, and in~\cite{ChepoiDNRV12} to obtain a 6-approximation algorithm for the problem of optimal non-contractive embedding of an unweighted graph metric into a weighted tree metric. For every {\em chordal graph} $G$ (a graph whose largest induced cycles have length 3),  $\Delta_s(G) \leq 3$ and $R_s(G)\leq 2$ hold~\cite{DBLP:journals/jal/BrandstadtCD99}. For every {\em $k$-chordal graph} $G$ (a graph whose largest induced cycles have length $k$), $\Delta_s(G) \leq k/2 +2$ holds~\cite{DBLP:journals/ejc/ChepoiD00}. For every graph $G$ embeddable non-contractively into a (weighted) tree with multiplication distortion $\alpha$, $\Delta_s(G) \leq 3\alpha$ holds~\cite{ChepoiDNRV12}. See Section \ref{sec:td} for more on this topic.

\begin{table}
\footnotesize
\begin{center}
\begin{tabular}{ | c | c | c | c|}
    \hline
  Graph     & n=     & m=    & $ \delta(G) $  \\
  $G=(V,E)$ & $|V|$  & $|E|$ &  \\ \hline\hline
  PPI & 1458 & 1948 & 3.5 \\ \hline
  Yeast & 2224 & 6609 & 2.5\\ \hline
  DutchElite  & 3621 & 4311 & 4\\ \hline
  EPA  &4253 &8953 & 2.5\\ \hline
  EVA  &4475 &4664 &1\\ \hline
  California & 5925 & 15770 & 3 \\ \hline
  Erd\"os  & 6927 & 11850 & 2 \\ \hline
  Routeview  & 10515 & 21455 & 2.5\\ \hline
  Homo  release 3.2.99  & 16711& 115406 & 2\\ \hline
  AS\_Caida\_20071105 & 26475 & 53381 & 2.5 \\ \hline
  Dimes 3/2010  & 26424 & 90267 &2 \\ \hline
  Aqualab 12/2007- 09/2008  & 31845 & 143383 &  2 \\ \hline
  AS\_Caida\_20120601 & 41203 & 121309 &  2 \\
  \hline
\end{tabular}
\end{center}
\caption{$\delta$-hyperbolicity of the graph datasets.}
\label{tab:hyper}
\vspace*{-0.9cm}\end{table}

\section{Hyperbolicity} \label{sec:hyperbol}

$\delta$-Hyperbolic metric spaces have been defined by M. Gromov
\cite{Gromov87} in 1987 via a simple 4-point condition: for any four
points $u,v,w,x$, the two larger of the distance sums
$d(u,v)+d(w,x), d(u,w)+d(v,x), d(u,x)+d(v,w)$ differ by at most
$2\delta$. They play an important role in geometric group theory,
geometry of negatively curved spaces, and have recently become of
interest in several domains of computer science, including
algorithms and networking. For example, (a) it has been shown
empirically in~\cite{DBLP:journals/ton/ShavittT08} (see also~\cite{DBLP:conf/podc/AbrahamBKMRT07}) that the Internet
topology embeds with better accuracy into a hyperbolic space than
into an Euclidean space of comparable dimension, (b) every connected
finite graph has an embedding in the hyperbolic plane so that the
greedy routing based on the virtual coordinates obtained from this
embedding is guaranteed to work (see~\cite{DBLP:conf/infocom/Kleinberg07}). A connected
graph $G=(V,E)$ equipped with standard graph metric $d_G$ is
$\delta$-{\it hyperbolic} if the metric space $(V,d_G)$ is
$\delta$-hyperbolic.


More formally, let $G$ be a graph and $u, v, w$ and $x$ be its four vertices. Denote by $S_1, S_2, S_3$ the three distance sums, $d_G(u,v)+d_G(w,x)$, $d_G(u,w)+d_G(v,x)$ and $d_G(u,x) + d_G(v,w)$ sorted in non-decreasing order $S_1 \leq S_2 \leq S_3$. Define the {\em hyperbolicity of a quadruplet} $u,v,w,x$ as $\delta(u,v,w,x)=\frac{S_3-S_2}{2}$. Then the {\em hyperbolicity $\delta(G)$ of a graph} $G$ is the maximum hyperbolicity over all possible quadruplets
of $G$, i.e., \[\delta(G)=\max_{u,v,w,x\in V}\delta(u,v,w,x).\]

$\delta$-Hyperbolicity measures the local deviation of a metric from a tree metric; a metric is a tree metric if and only if it has hyperbolicity $0$. Note that chordal graphs, mentioned in Section \ref{sec:layer-partit}, have hyperbolicity at most $1$~\cite{uea21813}, while $k$-chordal graphs have hyperbolicity at most $k/4$~\cite{DBLP:journals/combinatorics/WuZ11}.


In Table~\ref{tab:hyper}, we show the hyperbolicities of most of our graph datasets. The computation of hyperbolicities is a costly operation. We did not compute it for only three very large graph datasets since it would take very long time to calculate. The best known algorithm to calculate hyperbolicity has time complexity of $O(n^{3.69})$, where $n$ is the number of vertices in the graph; it was proposed in~\cite{FournierHyper} and involves matrix multiplications. This algorithm still takes long running time for large graphs and is hard to implement. Authors of~\cite{FournierHyper} also propose a $2$-approximation algorithm for calculating hyperbolicity that runs in $O(n^{2.69})$ time and a $2\log_2 n$-approximation algorithm that runs in $O(n^2)$ time. 
In our computations, we used the naive algorithm which calculates the exact hyperbolicity of a given graph in $O(n^{4})$ time via calculating the hyperbolicities of its quadruplets. It is easy to show that the hyperbolicity of a graph is  realized on its biconnected component. Thus, for very large graphs, we needed to check hyperbolicities only for quadruplets coming from the same biconnected component. Additionally, we used an algorithm by Cohen et. el. from~\cite{cohenHyper} which has $O(n^4)$ time complexity but performs well in practice as it prunes the search space of quadruplets.


It turns out that most of the quadruplets in our datasets have small $\delta$ values (see Table~\ref{tab:hyper-dist}). For example, more than $96\%$ of vertex quadruplets in EVA and Erd\"os datasets have $\delta$ values equal to $0$. For the remaining graph datasets in Table~\ref{tab:hyper-dist}, more than $96\%$ of the quadruplets have $\delta \leq 1$, indicating that all of those graphs are metrically very close to trees.

\begin{table}
\footnotesize
\begin{center}
\begin{tabular}{ | c | c |c | c | c | c | c | c |}
    \hline
  \backslashbox{~~$\delta$}{Graph} & PPI & Yeast  & DucthElite & EPA & EVA & California & Erd\"os \\ \hline\hline
  0   & 0.4831 & 0.487015& 0.54122195 &0.5778 &0.9973 &0.49057007 &0.96694 \\ \hline
  0.5 & 0.3634 & 0.450362& 0 &0.3655 &0.0007 &0.41052969 &0.03278 \\ \hline
  1   & 0.1336 & 0.060844 &0.42201697 &0.0552 & 0.0020 & 0.09527387 &0.00028 \\ \hline
  1.5 & 0.0179 & 0.001762 &0 &0.0015 & -- & 0.00344690 & 6.80E-08 \\ \hline
  2   & 0.0019 & 0.000017 &0.03642388 &2.09E-05 & -- &0.00017945 & 3.64E-11\\ \hline
  2.5 & 3.55E-05& 2.4641E-09& 0 &1.37E-10 & -- & 0.00000001 & -- \\ \hline
  3   & 1.65E-06 & -- & 0.00033717& -- & -- &1.88E-11 & --\\ \hline
  3.5 & 3.79E-09&-- & 0 & --& --&-- &-- \\ \hline
  4   & -- &-- & 0.00000004 & --&-- & --& --\\ \hline\hline
  \% $\leq 1$ & 98.01 & 99.8221 & 96.323891 & 99.84 & 100 & 99.637364 & 99.99999 \\
  \hline
\end{tabular}
\end{center}
\caption{ Relative frequency of $\delta$-hyperbolicity of quadruplets in our graph datasets that have less than 10K vertices.}
\label{tab:hyper-dist}
\vspace*{-0.5cm}\end{table}

In the remaining part of this section, we discuss the theoretical relations between parameters $\delta(G)$ and $\Delta_s(G)$ of a graph.
In~\cite{DBLP:conf/compgeom/ChepoiDEHV08}, the following inequality was proven.

\commentout{
\begin{proposition}[\cite{alonso90,Bandelt08,Gromov87}]
\label{prop:geoTriHyper}
Geodesic triangles of $\delta$-hyperbolic space are $4\delta$-thin.
\end{proposition}
\begin{proposition}[\cite{DBLP:conf/compgeom/ChepoiDEHV08}]
\label{prop:geoTriDelta}
For a graph $G$ with $\delta$-thin geodesic triangles and $n$ vertices, then $\Delta_s \leq 4+3\delta+2\delta \log n$.
\end{proposition}
}

\begin{proposition}[\cite{DBLP:conf/compgeom/ChepoiDEHV08}]
\label{prop:diam-leq-deltaHyper}
For every $n$-vertex graph $G$ and any vertex $s$ of $G$, $$\Delta_s(G) \leq 4+12\delta(G)+8\delta(G) \log_2 n.$$
\end{proposition}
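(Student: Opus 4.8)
The plan is to pass from the $4$-point hyperbolicity condition to the \emph{thinness} of geodesic triangles, and then to establish the bound for $\nu$-thin graphs by an exponential-divergence (logarithmic bisection) argument.

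\textbf{Step 1: set up the extremal configuration and reduce to thin triangles.} Let $C=L^i_j$ be a cluster of $\mathcal{LP}(G,s)$ with $\max_{x,y\in C}d_G(x,y)=\Delta_s(G)$, and pick $u,v\in C$ with $d_G(u,v)=\Delta_s(G)$. Since $u,v\in L^i$ we have $d_G(s,u)=d_G(s,v)=i$, and by the definition of a layering partition $u$ and $v$ are joined by a path $P=(u=w_0,w_1,\dots,w_k=v)$ all of whose vertices lie outside $B_{i-1}(s)$, i.e.\ $d_G(s,w_l)\ge i$ for every $l$; taking $P$ simple we may assume $k\le n-1$. It is a classical fact about hyperbolic metric spaces that a graph which is $\delta(G)$-hyperbolic in the sense of the $4$-point condition has all of its geodesic triangles $\nu$-thin with $\nu\le 4\delta(G)$. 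Hence it suffices to prove: \emph{if every geodesic triangle of $G$ is $\nu$-thin, then $\Delta_s(G)\le 4+3\nu+2\nu\log_2 n$}; substituting $\nu=4\delta(G)$ then gives $\Delta_s(G)\le 4+12\delta(G)+8\delta(G)\log_2 n$, as claimed.

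\textbf{Step 2: the divergence lemma.} The core is the following quantitative statement for a graph whose geodesic triangles are $\nu$-thin: if two vertices $x,y$ with $d_G(s,x),d_G(s,y)\ge i$ are joined by a path of length $L$ lying entirely outside $B_{i-1}(s)$, then
\[
d_G(x,y)\ \le\ 2\nu\,\log_2 L\ +\ 3\nu\ +\ 4 .
\]
Applying this with $x=u$, $y=v$ and the path $P$ (so $L=k\le n-1<n$) yields $\Delta_s(G)=d_G(u,v)\le 2\nu\log_2 n+3\nu+4$, which finishes Step~1. This lemma is the well-known exponential divergence of geodesics in hyperbolic spaces: to connect two vertices lying at the same distance from $s$ while staying outside the ball $B_{i-1}(s)$ one needs a path whose length grows at least exponentially in the distance between the two vertices. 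The standard way to make it quantitative is a logarithmic bisection: split the connecting path at a middle vertex into two halves of length about $L/2$, use $\nu$-thinness (and the Gromov product with basepoint $s$) to replace that vertex by a nearby vertex at distance exactly $i$ from $s$ at a cost of $O(\nu)$, recurse on the two halves, and iterate $O(\log_2 L)$ times; when the residual connecting paths are short, $\nu$-thinness forces their endpoints to lie within $O(\nu)$ of one another, and summing the $O(\nu)$ cost over the $O(\log_2 L)$ levels — together with an $O(1)$ term from the gap between $B_{i-1}(s)$ and the sphere $\{v:d_G(s,v)=i\}$ and from rounding $\lceil\log_2 L\rceil$ — collapses to $2\nu\log_2 n+3\nu+4$.

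\textbf{Main obstacle.} Essentially all the work lies in Step~2. The delicate points are: organizing the bisection so that the recursion is genuinely length-halving, which requires controlling the possibility that a splitting vertex of the connecting path lies far above level $i$ — this has to be absorbed by projecting it down along a shortest path to $s$ (or by a more careful choice of splitting vertices) while keeping the new connecting path outside $B_{i-1}(s)$ and only slightly longer; proving, from $\nu$-thinness, the per-round estimate that bounds the cost of this projection by $O(\nu)$; and bookkeeping the additive constants so that they actually collapse to $3\nu+4$, where the half-integrality of Gromov products on a graph and the one-unit discrepancy between $B_{i-1}(s)$ and the level-$i$ sphere enter. Driving the per-round multiplicative cost down to exactly $2\nu$, so that it matches the target coefficient of $\log_2 n$, is the part that demands the sharpest use of thin triangles.
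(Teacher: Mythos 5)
The paper does not prove this proposition itself (it is imported from \cite{DBLP:conf/compgeom/ChepoiDEHV08}), but the two intermediate facts you use in Step~1 --- that $4$-point $\delta$-hyperbolicity gives $4\delta$-thin geodesic triangles, and that $\nu$-thinness gives $\Delta_s\le 4+3\nu+2\nu\log_2 n$ --- are exactly the intermediate propositions the authors had in their source (commented out), so your decomposition and the arithmetic $4+3(4\delta)+2(4\delta)\log_2 n=4+12\delta+8\delta\log_2 n$ are the intended route. The problem is Step~2, which carries all the content and contains a genuine gap. Your divergence lemma is false as stated: with the hypothesis $d_G(s,x),d_G(s,y)\ge i$ it fails already in a tree ($\nu=0$). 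Take $s$ adjacent to a vertex $z$, and hang two paths of length $M$ from $z$ ending at $x$ and $y$; then $d_G(s,x)=d_G(s,y)=M+1\ge 1=i$, the $x$--$y$ path through $z$ has length $2M$ and avoids $B_0(s)=\{s\}$, yet $d_G(x,y)=2M$ is unbounded while your bound is $4$. The lemma needs $d_G(s,x)=d_G(s,y)=i$ (which does hold for $u,v$ in a common cluster), or a bound that charges for the excess depth of $x$ and $y$ above level $i$.

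This is not just a typo, because the recursion you describe is built to preserve the ``$\ge i$'' form of the invariant and cannot preserve the equality form: the midpoint $m$ of the connecting path may lie at depth $i+t$ for arbitrarily large $t$, and ``projecting it down to a vertex at distance exactly $i$ from $s$'' costs $t$, not $O(\nu)$, and moreover the detoured sub-paths need not stay outside $B_{i-1}(s)$. You flag exactly this in your ``Main obstacle'' paragraph but do not resolve it, and it cannot be resolved within your framing. The standard argument inverts the roles: using $\nu$-thinness of the triangle $s,u,v$, one first produces a vertex $c$ on a geodesic $[u,v]$ with $d_G(s,c)\le i-\lfloor d_G(u,v)/2\rfloor+O(\nu)$, so that the path $P\subseteq V\setminus B_{i-1}(s)$ avoids the ball of radius $R\approx d_G(u,v)/2-O(\nu)$ around $c$; one then runs the bisection on the invariant ``distance from the fixed point $c$ to a geodesic joining the endpoints of the current sub-path,'' which drops by at most $\nu$ per halving of $P$ by thinness of the triangle formed by the sub-path's endpoints and its midpoint. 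After $\lceil\log_2 L\rceil$ rounds the sub-path is a single edge, forcing $R\le \nu\lceil\log_2 L\rceil+O(1)$ and hence $d_G(u,v)\le 2\nu\log_2 n+3\nu+4$. You should restate the lemma in this form (a lower bound on the length of any path avoiding a ball centered on the geodesic) and redo the recursion with that invariant; as written, the inductive step does not close.
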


Here we complement that inequality by showing that the hyperbolicity of a graph is at most $\Delta_s(G)$.
\begin{proposition} \label{prop:hyp<Cl-diam} 
\label{lem:diam-hyper}
For every $n$-vertex graph $G$ and any vertex $s$ of $G$, $$\delta(G) \leq \Delta_s(G).$$
\end{proposition}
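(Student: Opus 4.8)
A quick non‑optimal bound is available for free: Proposition~\ref{lem:cluster-diam} embeds $G$ into the tree $H$ with $d_H(x,y)-2\le d_G(x,y)\le d_H(x,y)+\Delta_s(G)$, and since $H$ is a tree it is $0$‑hyperbolic. Comparing $d_G$ with the $0$‑hyperbolic metric $d_H-2$ (which it dominates, and which it exceeds by at most $\Delta_s(G)+2$), and noting that each distance sum of a quadruple then moves by at most $2(\Delta_s(G)+2)$ while at least two of the three sums in $d_H-2$ are equal, gives $\delta(G)\le\Delta_s(G)+2$. So the real content of the proposition is shaving off that additive constant, and for that I would argue directly with the layering partition instead of going through $H$.

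The plan is to verify the four‑point condition with constant $2\Delta_s(G)$. Fix vertices $u,v,w,x$ and relabel so that $d_G(u,v)+d_G(w,x)$ is the largest of the three distance sums; it then suffices to show it exceeds the second‑largest by at most $2\Delta_s(G)$. First I would record the elementary facts about layerings: a shortest path from $s$ meets the layers $L^0,L^1,\dots$ in consecutive order; hence for any $a,b$ the smallest layer met by a shortest $a$--$b$ path is at least $\lceil(a|b)_s\rceil$, where $(a|b)_s=\tfrac12(d_G(s,a)+d_G(s,b)-d_G(a,b))$; and a shortest $a$--$b$ path attaining that minimum stays outside the ball $B_{\lceil(a|b)_s\rceil-1}(s)$, so its lowest vertex sits in the cluster of that layer cut off by this ball.

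The heart of the argument is to produce two vertices $p,q$ lying in a single cluster $C$ of $\mathcal{LP}(G,s)$ that serve as the two ``branch points'' of the quadruple: $p$ on a shortest $u$--$v$ path and $q$ on a shortest $w$--$x$ path, chosen at the common layer where the relevant geodesics bottom out toward $s$, with $C$ separating the four vertices tree‑like in the layering tree $\Gamma(G,s)$. Since $p,q\in C$ we have $d_G(p,q)\le\Delta_s(G)$, and feeding this into the triangle inequality through $p$ and $q$ yields lower bounds on $d_G(u,w)+d_G(v,x)$ and on $d_G(u,x)+d_G(v,w)$ whose maximum is at least $d_G(u,v)+d_G(w,x)-2\Delta_s(G)$, which is exactly the inequality wanted.

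The main obstacle is precisely this step: the two detours ``through $C$'' must be charged once each rather than twice (charging them twice is what manufactures the spurious additive $2$ in the $H$‑embedding route), so $p$ and $q$ have to be placed with care, and the degenerate cases -- when $u$ or $v$ (resp.\ $w$ or $x$) already lies below the separating layer, or when $\lceil(a|b)_s\rceil$ fails to be an integer -- need their own bookkeeping. A cleaner organization, if the direct placement proves awkward, is to prove the four‑point inequality only for quadruples that contain $s$, i.e.\ to bound $\delta_s(G):=\max_{a,b,c}\bigl(\operatorname{med}-\min\bigr)\{(a|b)_s,(a|c)_s,(b|c)_s\}$ by $\tfrac12\Delta_s(G)$ -- this involves only distances measured from $s$ and so meshes transparently with the layers -- and then invoke the standard fact that the four‑point condition at one basepoint implies it at every basepoint with only a factor‑two loss.
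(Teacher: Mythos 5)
Your warm-up bound $\delta(G)\le\Delta_s(G)+2$ via Proposition~\ref{lem:cluster-diam} is correct, and your overall skeleton --- pick one cluster of $\mathcal{LP}(G,s)$ that sits tree-like between the four points in $\Gamma(G,s)$, use its diameter $\Delta_s(G)$ as the cost of a detour through it, and balance that against separation --- is exactly the paper's. But the step you yourself flag as ``the main obstacle'' is a genuine gap, and the construction you propose does not close it, for two concrete reasons. First, your branch points need not exist: the natural choice of cluster is a median $M$ of the four containing clusters in $\Gamma(G,s)$, and a median is only guaranteed to separate four of the six pairs; the two unseparated pairs can be exactly $\{u,v\}$ and $\{w,x\}$ (each sitting in its own branch), in which case no shortest $u$--$v$ path and no shortest $w$--$x$ path need meet $M$ at all, so there is no admissible $p$ or $q$. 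Second, even when $p$ and $q$ can be placed, the accounting you describe charges the cluster twice per cross distance: writing $r$ for the point where a $u$--$w$ geodesic crosses the separating cluster $C$, one only gets $d_G(u,w)=d_G(u,r)+d_G(r,w)\ge d_G(u,p)+d_G(q,w)-2\Delta_s(G)$, and summing over the two cross distances loses $4\Delta_s(G)$ in total, i.e.\ only $\delta(G)\le 2\Delta_s(G)$. You correctly sense that the detours must be charged once each, but you do not say how, and your fallback via the basepoint Gromov products would require proving $\delta_s(G)\le\tfrac12\Delta_s(G)$, which is itself unestablished (and runs into the separate problem you half-notice: a geodesic between $a$ and $b$ need not descend to layer $\lceil(a|b)_s\rceil$).

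The paper's resolution is to drop designated points on geodesics altogether and measure everything from the median cluster itself: set $\mu_a=d_G(a,M)$ for $a\in\{u,v,w,x\}$ and $\mu=\mu_u+\mu_v+\mu_w+\mu_x$. Then $d_G(a,b)\le\mu_a+\mu_b+\Delta_s(G)$ holds for \emph{every} pair by the triangle inequality (cross $M$ once), so all three distance sums are at most $\mu+2\Delta_s(G)$; and $d_G(a,b)\ge\mu_a+\mu_b$ holds for every pair that $M$ separates, with no $\Delta_s(G)$ loss at all. Since the at most two unseparated pairs are disjoint and hence lie in a single one of the three perfect matchings, at least two of the three sums are bounded below by $\mu$, so the two largest sums both lie in $[\mu,\mu+2\Delta_s(G)]$ and differ by at most $2\Delta_s(G)$. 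This sandwich is the missing idea: the $\Delta_s(G)$ penalty appears only in the upper bound, the lower bound is exact, and no case analysis on which sum is largest, no basepoint change, and no integrality bookkeeping is needed.
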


\begin{proof}
Let $\mathcal{LP}(G,s)$ be a layering partition of $G$ and $\Gamma(G,s)$ be the corresponding layering tree (consult Fig. \ref{fig:layering-partition}). From construction of $\mathcal{LP}(G,s)$ and $\Gamma(G,s)$, every cluster $C$ of $\mathcal{LP}(G,s)$ separates in $G$ any two vertices belonging to nodes (clusters) of different subtrees of the forest obtained from $\Gamma(G,s)$ by removing node $C$. Note that every vertex of $G$ belongs to exactly one node (cluster) of the layering tree $\Gamma(G,s)$.

Consider an arbitrary quadruplet $x,y,z,w$ of vertices of $G$. Let $X,Y,Z,W$ be the four nodes in $\Gamma(G,s)$ (i.e., four clusters in $\mathcal{LP}(G,s)$)  containing vertices $x,y,z,w$, respectively.  In the tree $\Gamma(G,s)$, consider a median node $M$ of nodes $X,Y,Z,W$, i.e., a node $M$ removing of which from $\Gamma(G,s)$ leaves no connected subtree with more that two nodes from $\{X,Y,Z,W\}$. As a consequence, any connected component of graph $G[V\setminus M]$ (the graph obtained from $G$ by removing vertices of $M$) cannot have more than $2$ vertices out of $\{x,y,z,w\}$. Thus, $M$ separates at least $4$ pairs out of the $6$ possible pairs formed by vertices $x,y,z,w$.  Assume, without loss of generality, that $M$ separates in $G$ vertices $x$ and $y$ from vertices $z$ and $w$.
See Fig.~\ref{fig:diam-hyper} for an illustration.

\begin{figure}\vspace*{-.5cm}
        \centering
        \begin{subfigure}[b]{0.35\textwidth}
                \includegraphics[width=\textwidth]{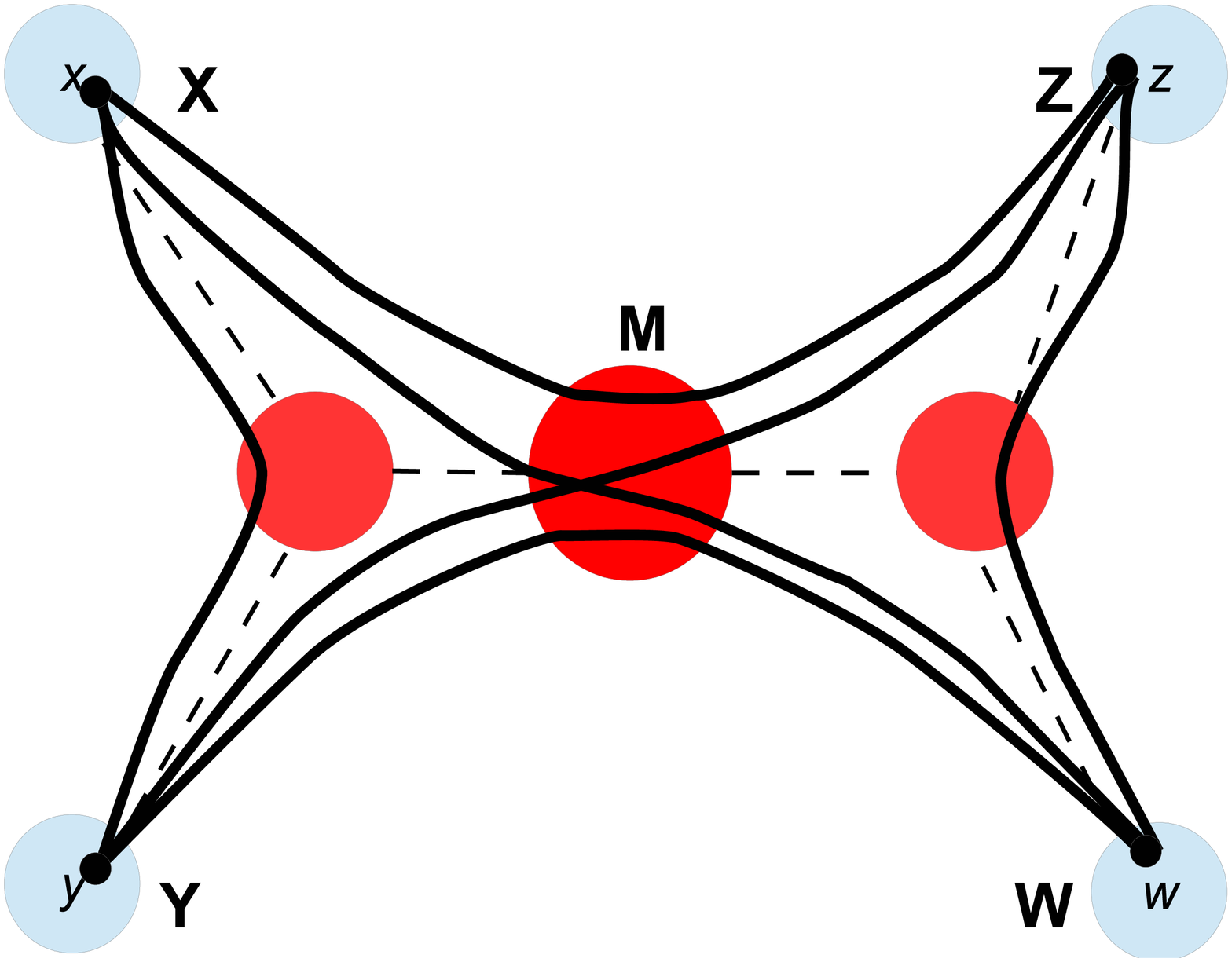}
                \caption{$M$ is a median node for $X,Y,Z,W$ in $\Gamma(G,s)$.} 
                \label{fig:diam-hyper1}
        \end{subfigure}%
          \qquad
        \begin{subfigure}[b]{0.35\textwidth}
                \includegraphics[width=\textwidth]{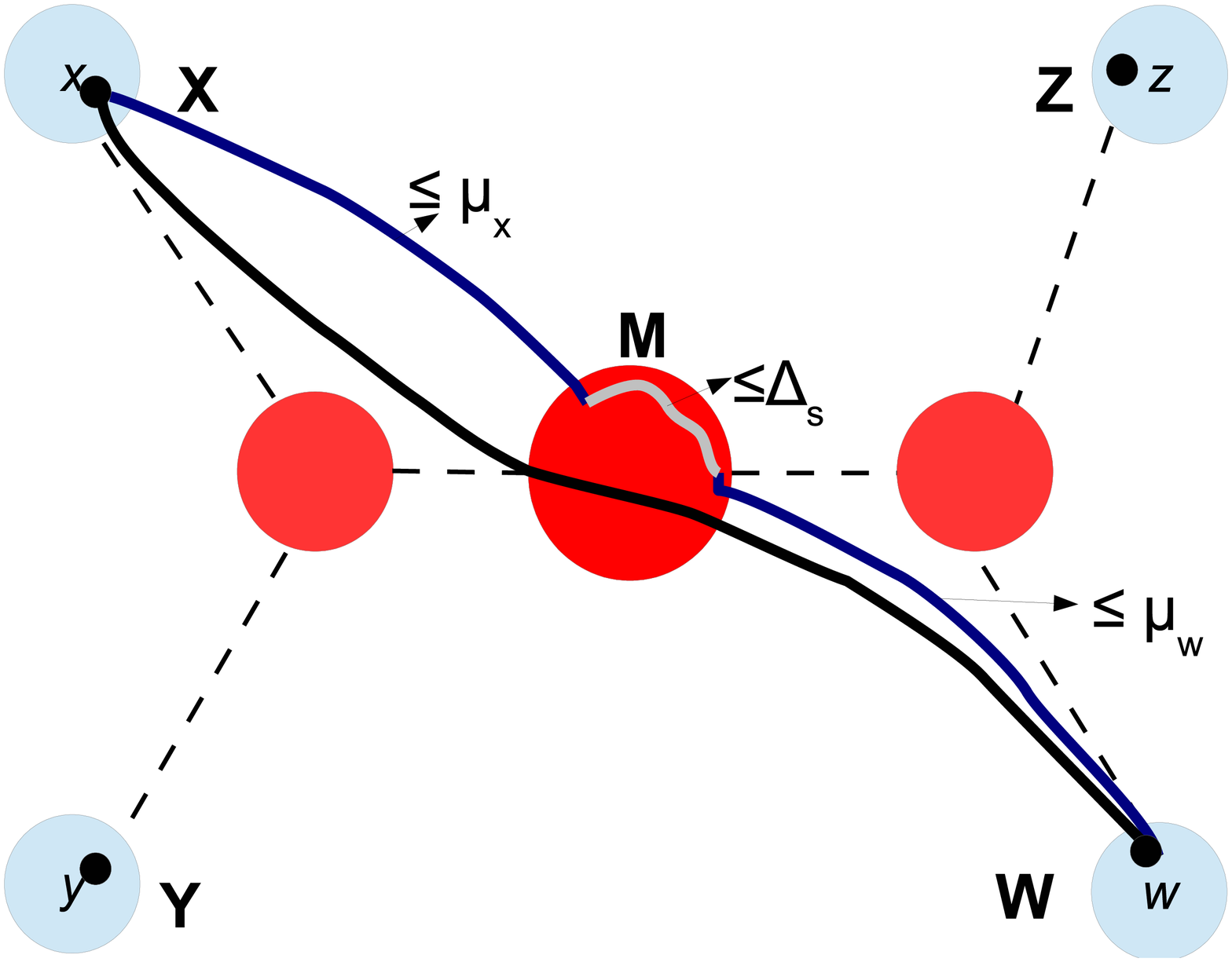}
                \caption{$M$ separates in $G$ vertices $x$ and $y$ from vertices $z$ and $w$.}
                \label{fig:diam-hyper2}
        \end{subfigure}
         \caption{Illustration to the proof of  Proposition~\ref{lem:diam-hyper}.}\label{fig:diam-hyper}
\vspace*{-.3cm}
\end{figure}

Let $\mu_a$ be the distance from $a \in \{x,y,z,w\}$ to its closest vertex in $M$.
Let $a,b$ be a pair of vertices from $\{x,y,z,w\}$.
If the vertices $a,b$ belong to different components of $G[V \setminus M]$, then $M$ separates $a$ from $b$ and therefore $\mu_a+ \mu_b \leq d_G(a,b)$.
Since $M$ separates in $G$ vertices $x$ and $y$ from vertices $z$ and $w$, we get $d_G(x,z)+d_G(y,w)\geq \mu_x+\mu_y+\mu_z+\mu_w$ and $d_G(x,w)+d_G(y,z)\geq \mu_x+\mu_y+\mu_z+\mu_w$.
On the other hand, all three sums $d_G(x,z)+d_G(y,w)$, $d_G(x,w)+d_G(y,z)$ and  $d_G(x,y)+d_G(z,w)$ are less than or equal to $\mu_x+\mu_y+\mu_z+\mu_w+2\Delta_s(G)$, since, by the triangle inequality, $d_G(a,b) \leq \mu_a + \mu_b +\Delta_s(G)$ for every $a,b \in \{x,y,z,w\}$.
\commentout{
If the shortest path between a pair $a,b \in \{x,y,z,t\}$ passes through $M$. Let $a',b'$ be the closest vertices in $M$ to $a,b$ respectively. Then, by triangle inequality, we have $d_G(a,b) \leq d_G(a,a')+ d_G(a',b')+d_G(b,b') \leq \mu_a + \mu_b +\Delta_s$
  Otherwise, the shortest path lies entirely in one component of $G \backslash M$. Let $M'$ be the median node between the two nodes in $X,Y,Z,T$ containing $a,b$ in $\Gamma$. Let $\mu'_a$ and $\mu'_b$ be the distance from $a,b$ respectively, to the closest vertex in $M'$. By triangle inequality, we have $d_G(a,b) \leq \mu'_a + \mu'_b +\Delta_s$. Since $M'$ is closer to the nodes containing $a,b$ than $M$, then we have $d_G(a,b) \leq \mu'_a + \mu'_b +\Delta_s \leq \mu_a + \mu_b +\Delta_s$. Therefore, any sum is less than or equal $\mu +2\Delta_s$.
}
Now, since the two larger distance sums are between $\mu$ and $\mu +2\Delta_s(G)$, where $\mu:=\mu_x+\mu_y+\mu_z+\mu_w$, we conclude that the difference between the two larger distance sums is at most $2\Delta_s(G)$. Thus, necessarily $\delta(G)\leq \Delta_s(G)$.\qed
\commentout{
First, we show that $\delta(G) \leq \Delta_s(G)$. Let $\mathcal{LP}(s)$ be a layering partition of the graph $G$ with layering tree $\Gamma$. For any quadruplet of vertices $x,y,z,t$, each one of them belong to exactly one cluster (node) of  the layering partition tree $\Gamma$. Let $X,Y,Z,T$ be the four clusters containing the  vertices $x,y,z,t$, respectively. Consider the nearest common ancestor clusters of any three clusters $X,Y,Z,T$ in $\Gamma$. Among these common ancestors, pick the one having the highest depth, name it $C$. Let $\beta_a$ be the distance from $a \in \{x,y,z,t\}$ to it's closest vertex in $C$. Let $P(a,b)$ be a shortest path between $a$ and $b$ passing through $C$. Then we have $\beta_a + \beta_b \leq d_G(a,b) \leq \beta_a + \beta_b + \Delta_s $. Since $C$ is a separator in $G$ between any pair $a,b$ in $x,y,x,t$ except maybe for one pair. Then, at least 5 out of the 6 shortest paths connecting these pair in $x,y,z,t$ passes through $C$.  This guarantees that the largest two sums among $d_G(x,y)+d_G(z,t)$, $d_G(x,z)+d_G(y,t)$ and $d_G(x,t)+d_G(x,z)$ greater than or qual $\beta_x+\beta_y+\beta_z+\beta_t=\beta$ and smaller than or equal $\beta +2\Delta_s$. Thus, the difference between the largest two sums ($S_1 -S_2$) is less than or equal $2\Delta_s$ yielding $\frac{S_1-S_2}{2} \leq \Delta_s$.
Second, propositions ~\ref{prop:diam-leq-deltaHyper} shows that $\Delta_s \leq 4+12\delta+8\delta \log_2 n$.
}
\end{proof}

Combining Proposition~\ref{prop:diam-leq-deltaHyper} with Proposition \ref{lem:cluster-diam}, one obtains also the following interesting result relating the hyperbolicity of a graph $G$ with additive distortion of embedding of $G$ to its canonic tree $H$.
 \begin{proposition}[\cite{DBLP:conf/compgeom/ChepoiDEHV08}]
\label{lem:treeH-hyper}
For any graph $G=(V,E)$ and its canonic tree $H=(V,F)$ the following is true:
\[\forall u,v \in V, ~~d_H(u,v)-2 \leq d_G(u,v) \leq d_H(u,v)+ O(\delta(G)\log n). \]
\end{proposition}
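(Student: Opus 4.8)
This statement is essentially a corollary obtained by chaining the two preceding propositions, so the plan is to simply compose them carefully. First I would fix an arbitrary start vertex $s\in V$, form the layering partition $\mathcal{LP}(G,s)$, and let $H=(V,F)$ be the canonic tree of $G$ built from it; all subsequent estimates will refer to this one layering partition.

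The left inequality $d_H(u,v)-2\le d_G(u,v)$ is already the lower bound of Proposition~\ref{lem:cluster-diam} and requires no further work. For the right inequality I would proceed in two steps. Step one: Proposition~\ref{lem:cluster-diam} gives $d_G(u,v)\le d_H(u,v)+\Delta_s(G)$ for all $u,v\in V$. Step two: it remains to control the additive term $\Delta_s(G)$ in terms of the hyperbolicity, and this is exactly what Proposition~\ref{prop:diam-leq-deltaHyper} provides, namely $\Delta_s(G)\le 4+12\delta(G)+8\delta(G)\log_2 n$. Substituting this bound into the inequality from step one yields $d_G(u,v)\le d_H(u,v)+4+12\delta(G)+8\delta(G)\log_2 n=d_H(u,v)+O(\delta(G)\log n)$, which is the claimed estimate.

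The only point that needs care — and what I would call the ``main obstacle'', though it is a mild one — is making sure the two invoked propositions refer to the same objects: Proposition~\ref{lem:cluster-diam} relates $d_G$ to distances in the canonic tree $H$ of $\mathcal{LP}(G,s)$, while Proposition~\ref{prop:diam-leq-deltaHyper} bounds the quantity $\Delta_s(G)$ attached to that very same start vertex $s$. Since both $H$ and $\Delta_s(G)$ are derived from $\mathcal{LP}(G,s)$ for the fixed $s$, the substitution is legitimate and no quantifier mismatch occurs. If desired, one may additionally remark that in the tree-metric case $\delta(G)=0$ the estimate collapses to an additive constant $4$, which is consistent with and subsumed by the general $O(\delta(G)\log n)$ bound. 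All the genuine work has already been done in \cite{ChepoiDNRV12} and \cite{DBLP:conf/compgeom/ChepoiDEHV08}; here we merely combine their conclusions.
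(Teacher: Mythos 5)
Your proposal is correct and is exactly the argument the paper intends: the text introduces this proposition with the words ``Combining Proposition~\ref{prop:diam-leq-deltaHyper} with Proposition~\ref{lem:cluster-diam}, one obtains...'', which is precisely your two-step composition of $d_G(u,v)\le d_H(u,v)+\Delta_s(G)$ with $\Delta_s(G)\le 4+12\delta(G)+8\delta(G)\log_2 n$ for the same fixed start vertex $s$. Your attention to the fact that both bounds refer to the same layering partition $\mathcal{LP}(G,s)$ is a sound (if implicit in the paper) point of care.
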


Since a canonic tree $H$ is constructible in linear time for a graph $G$, by Proposition \ref{lem:treeH-hyper}, the distances in $n$-vertex $\delta$-hyperbolic graphs can efficiently be approximated within an additive error of $O(\delta \log n)$ by a tree metric and this approximation is
sharp (see~\cite{Gromov87,GhHa} and~\cite{DBLP:conf/compgeom/ChepoiDEHV08,GaLy}).

Graphs and general geodesic spaces with small hyperbolicities have many other algorithmic advantages. They allow efficient approximate solutions for a number of optimization problems. For example, Krauthgamer and Lee~\cite{KrLe} presented a PTAS for the Traveling Salesman Problem when the set of cities lie in a hyperbolic metric space. Chepoi and Estellon~\cite{DBLP:conf/approx/ChepoiE07} established a relationship between the minimum number of balls of
radius $r+2\delta$ covering a finite subset $S$ of a
$\delta$-hyperbolic geodesic space and the size of the maximum
$r$-packing of $S$ and showed how to compute such coverings and
packings in polynomial time. Chepoi et al. gave in~\cite{DBLP:conf/compgeom/ChepoiDEHV08}
efficient algorithms for fast and accurate estimations of diameters
and radii of $\delta$-hyperbolic geodesic spaces and graphs. Additionally, Chepoi et al. showed in~\cite{ChDrEsRout} that every $n$-vertex $\delta$-hyperbolic
graph has an additive $O(\delta \log n)$-spanner with at most
$O(\delta n)$ edges and enjoys an $O(\delta\log
n)$-additive routing labeling scheme with $O(\delta\log^2n)$ bit
labels and $O(\log\delta)$ time routing protocol.  We 
elaborate more on these results in Section \ref{appl}.

\section{Tree-Distortion} \label{sec:td}
The problem of approximating a given graph metric by a ``simpler'' metric is well 
motivated from several different perspectives. A particularly simple
metric of choice, also favored from the algorithmic point of view,
is a tree metric, i.e., a metric arising from shortest path distance
on a tree containing the given points. In recent years, a number of authors considered
problems of minimum distortion embeddings of graphs into trees (see~\cite{AgBaFaNaPa,BaDeHaSiZa,BaInSi,ChepoiDNRV12}), most popular among them being a
non-contractive embedding with minimum multiplicative distortion.

Let $G=(V,E)$ be a graph. The (multiplicative) \emph{tree-distortion} $td(G)$
of $G$ is the smallest integer $\alpha$ such that $G$ admits a tree  (possibly weighted and with Steiner points)
with $$\forall u,v \in V, ~~d_G(u,v) \leq d_T(u,v) \leq \alpha~ d_G(u,v).$$ The problem of finding, for a given graph $G$, a tree $T=(V\cup S, F)$ satisfying  $d_G(u,v) \leq d_T(u,v) \leq td(G) d_G(u,v)$, for all $u,v \in V$, is known as the {\em problem of minimum distortion non-contractive embedding of graphs into trees}. In a non-contractive embedding, the distance in the tree must always be larger that or equal to the distance in the graph, i.e., the tree distances ``dominate'' the graph distances.

It is known that this problem is NP-hard, and even more, the hardness result of~\cite{AgBaFaNaPa}
implies that it is NP-hard to approximate $td(G)$ better than $\gamma$, for some small constant $\gamma$.
The best known 6-approximation algorithm using layering partition technique was recently given in~\cite{ChepoiDNRV12}. It improves the previously known
100-approximation algorithm from~\cite{BaInSi} and 27-approximation algorithm from~\cite{BaDeHaSiZa}. Below we will provide a short description of the method of ~\cite{ChepoiDNRV12}.

The following proposition establishes relationship between the tree-distortion and the cluster-diameter of a graph.
\begin{proposition}[\cite{ChepoiDNRV12}]
\label{lem:td-cluster-diam}
For every graph $G$ and any its vertex $s$, $\Delta_s(G)/3\leq td(G)\leq 2\Delta_s(G)+2.$
\end{proposition}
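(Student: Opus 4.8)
The plan is to establish the two inequalities $\Delta_s(G)/3 \leq td(G)$ and $td(G) \leq 2\Delta_s(G)+2$ separately, using the layering partition machinery already set up in Section~\ref{sec:layer-partit}.

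For the lower bound $\Delta_s(G)/3 \leq td(G)$, I would argue as follows. Let $T$ be any tree (possibly weighted, possibly with Steiner points) realizing the tree-distortion, so that $d_G(u,v) \leq d_T(u,v) \leq td(G)\, d_G(u,v)$ for all $u,v \in V$. In particular, $G$ embeds non-contractively into a weighted tree with multiplicative distortion $\alpha = td(G)$. But the excerpt already records (citing~\cite{ChepoiDNRV12}) that for every graph $G$ embeddable non-contractively into a weighted tree with multiplicative distortion $\alpha$, one has $\Delta_s(G) \leq 3\alpha$. Applying this with $\alpha = td(G)$ immediately yields $\Delta_s(G) \leq 3\,td(G)$, i.e., $\Delta_s(G)/3 \leq td(G)$, for every choice of start vertex $s$. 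So this direction is essentially a citation plus an observation that $td(G)$ is by definition the best such $\alpha$.

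For the upper bound $td(G) \leq 2\Delta_s(G)+2$, I would exhibit an explicit tree witnessing distortion at most $2\Delta_s(G)+2$; the natural candidate is the canonic tree $H$ of Section~\ref{sec:layer-partit} (or a rescaled/reweighted version of it), possibly with its edges subdivided or weighted to force non-contractiveness. By Proposition~\ref{lem:cluster-diam}, $d_H(x,y) - 2 \leq d_G(x,y) \leq d_H(x,y) + \Delta_s(G)$ for all $x,y$. From the left inequality, $d_H(x,y) \leq d_G(x,y) + 2$. The issue is that $H$ as built may slightly contract some distances (the term $-2$), so I would take $T$ to be $H$ with a suitable uniform additive/multiplicative correction on edge lengths — e.g., scaling edge weights so that the image dominates $d_G$ — and then bound the resulting stretch. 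Roughly: after the correction, $d_T(u,v) \geq d_G(u,v)$ holds, and $d_T(u,v)$ is at most $d_H(u,v)$ plus a small correction, which in turn is at most $d_G(u,v) + 2$; combining with $d_G(u,v) \geq 1$ for distinct vertices gives a multiplicative factor, and the $\Delta_s(G)$-dependence enters when one has to inflate $T$ enough to guarantee domination over all pairs, including the "worst" pair inside a single cluster of diameter up to $\Delta_s(G)$. Carrying the bookkeeping through should land at $2\Delta_s(G)+2$.

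The main obstacle I anticipate is the upper bound: turning the additive two-sided estimate of Proposition~\ref{lem:cluster-diam} into a genuinely \emph{non-contractive} embedding with a controlled multiplicative factor. The canonic tree $H$ is not automatically dominating, and naively rescaling $H$ by a constant blows up distances between far-apart vertices more than one wants; the right move is probably to add a small constant to each edge length of $H$ (a standard trick) so that short graph distances — which are exactly the ones vulnerable to the "$-2$" contraction and to being inside a large cluster — get enough slack, while long distances are barely affected. Verifying that this adjusted tree simultaneously dominates $d_G$ and has stretch at most $2\Delta_s(G)+2$ is the calculation that needs care; everything else is either a direct appeal to results already stated in the excerpt or routine.
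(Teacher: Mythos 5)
Your proposal follows essentially the same route as the cited source: the paper itself only imports this proposition from~\cite{ChepoiDNRV12} without proof, but every ingredient you invoke appears in Sections~\ref{sec:layer-partit} and~\ref{sec:td}, and your lower-bound argument (apply the quoted fact $\Delta_s(G)\leq 3\alpha$ to the optimal non-contractive tree, i.e.\ to $\alpha=td(G)$) is complete and correct. For the upper bound, the ``bookkeeping'' you defer is exactly the tree $H'_{\ell}$ of Section~\ref{sec:td}: one Steiner point per cluster and a \emph{uniform rescaling} of all edges to length $\ell=\frac{1}{2}\max\{\Delta_s(G),\max\{d_G(u,v): uv \mbox{ an edge of } H\}\}$ (so the fix is a multiplicative inflation to roughly $\Delta_s(G)/2$ per edge, which guarantees domination even for the worst intra-cluster pair, rather than the addition of a small constant as you suggest at one point). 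Proposition~\ref{tree-properties} then gives $d_G(x,y)\leq d_{H'_{\ell}}(x,y)\leq (\Delta_s(G)+1)(d_G(x,y)+1)\leq (2\Delta_s(G)+2)\,d_G(x,y)$, the last step using $d_G(x,y)\geq 1$ for distinct vertices, which is precisely $td(G)\leq 2\Delta_s(G)+2$.
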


Proposition~\ref{lem:td-cluster-diam} shows that the cluster-diameter $\Delta_s(G)$ of a layering partition of a graph $G$ linearly bounds the tree-distortion $td(G)$ of $G$.

Combining Proposition~\ref{lem:td-cluster-diam} and Proposition~\ref{lem:cluster-diam}, the following result is obtained.
\begin{proposition}[\cite{ChepoiDNRV12}]
\label{lem:treeH-td}
For any graph $G=(V,E)$ and its canonic tree $H=(V,F)$ the following is true:
\[\forall u,v \in V, ~~d_H(u,v)-2 \leq d_G(u,v) \leq d_H(u,v)+ 3~td(G). \]
\end{proposition}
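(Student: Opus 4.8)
The plan is to derive the claimed two-sided bound on $d_G(u,v)$ in terms of $d_H(u,v)$ by chaining together the two propositions already available, with the roles of $\Delta_s(G)$ and $td(G)$ interchanged via the linear comparison of Proposition~\ref{lem:td-cluster-diam}. First I would recall Proposition~\ref{lem:cluster-diam}, which states that for every graph $G$, every vertex $s$, and all $u,v\in V$,
\[
d_H(u,v)-2 \le d_G(u,v) \le d_H(u,v)+\Delta_s(G).
\]
The left inequality $d_H(u,v)-2\le d_G(u,v)$ is exactly the left inequality we want, so it transfers verbatim and needs no further work. The only thing to establish is the right inequality, namely $d_G(u,v)\le d_H(u,v)+3\,td(G)$.

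For the right inequality I would start from $d_G(u,v)\le d_H(u,v)+\Delta_s(G)$ and then bound $\Delta_s(G)$ from above in terms of $td(G)$. Here Proposition~\ref{lem:td-cluster-diam} gives $\Delta_s(G)/3\le td(G)$, i.e. $\Delta_s(G)\le 3\,td(G)$. Substituting this into the displayed estimate yields
\[
d_G(u,v)\le d_H(u,v)+\Delta_s(G)\le d_H(u,v)+3\,td(G),
\]
which is precisely the desired upper bound. Combining this with the lower bound carried over from Proposition~\ref{lem:cluster-diam} gives the full statement $d_H(u,v)-2\le d_G(u,v)\le d_H(u,v)+3\,td(G)$ for all $u,v\in V$.

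There is essentially no hard step here: the argument is a direct composition of two results stated earlier in the excerpt, and the constant $3$ in the final bound is simply inherited from the factor $3$ in the inequality $\Delta_s(G)\le 3\,td(G)$ of Proposition~\ref{lem:td-cluster-diam}. The only point requiring a small amount of care is to note that both $\Delta_s(G)$ and $td(G)$ appearing in Proposition~\ref{lem:td-cluster-diam} refer to the same canonic tree $H$ and the same start vertex $s$ used in Proposition~\ref{lem:cluster-diam}, so that the substitution is legitimate; since $H$ is built from the layering partition $\mathcal{LP}(G,s)$ for that very $s$, this consistency is immediate and the proof is complete.
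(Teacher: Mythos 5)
Your proposal is correct and follows exactly the route the paper intends: the paper states that this proposition "is obtained by combining Proposition~\ref{lem:td-cluster-diam} and Proposition~\ref{lem:cluster-diam}," which is precisely your chaining of $d_G(u,v)\le d_H(u,v)+\Delta_s(G)$ with $\Delta_s(G)\le 3\,td(G)$, the lower bound carrying over verbatim. (One cosmetic remark: $td(G)$ is a graph invariant independent of $H$ and $s$, so the only consistency needed is that $\Delta_s(G)$ uses the same start vertex $s$ from which $H$ is built, which you correctly note.)
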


Surprisingly, a multiplicative  distortion turned into an additive distortion. Furthermore, while a tree $T=(V\cup S, F)$ satisfying  $d_G(u,v) \leq d_T(u,v) \leq td(G) d_G(u,v)$, for all $u,v \in V$, is NP-hard to find, a canonic tree $H$ of $G$ can be constructed in $O(m)$ time (where $m=|E|$).

By assigning proper weights to edges of a canonic tree $H$ or adding at most $n=|V|$ new Steiner points to $H$, the authors of~\cite{ChepoiDNRV12} achieve a good non-contractive embedding of a graph $G$ into a tree. Recall that a canonic tree $H=(V,F)$ of $G=(V,E)$ is constructed in the following way: identify for each cluster $C=L^i_j \in \mathcal{LP}(G,s)$ of a layering partition $\mathcal{LP}(G,s)$ of $G$ an arbitrary vertex $x_C \in L_{i-1}$  which has a neighbor in $C = L^i_j$ and make $x_C$ adjacent in $H$ with all vertices $v\in C$ (see Fig. \ref{fig:treeHHp}). Note that $H$ is an unweighted tree, without any Steiner points, and  resembles a BFS-tree of $G$. Two other trees for $G$ are constructed as follows.

\commentout{
\begin{enumerate}
  \item
  \item $H_{\ell}$: a weighted tree without Steiner nodes that produce a non-contractive embedding
  \item $H'_{\ell}$: a weighted tree with Steiner nodes that produce non-contractive embedding
\end{enumerate}
}

\textbf{Tree} $\mathbf{H_{\ell}:}$ Tree $H_{\ell}=(V,F,\ell)$ is obtained from $H$ by assigning uniformly the weight $\ell=\max\{d_G(u,v): uv \mbox{ is an edge of } H\}$ to all edges of $H$. So, $H_\ell$ is a uniformly weighted tree without Steiner points. It turns out that $G$ embeds in tree $H_\ell$ non-contractively. Note that,  although the topology of the tree $H_\ell$ can be determined in $O(m)$ time ($H_\ell$ is isomorphic to $H$), computation of the weight $\ell$ requires $O(n m)$ time. Thus, the tree $H_\ell$ is constructible in $O(n m)$ total time. See Fig.~\ref{fig:treeHHp} for an illustration.

\textbf{Tree} $\mathbf{H'_{\ell}:}$  Tree $H'_{\ell}=(V\cup S,F',\ell)$ is obtained from $H$ by first introducing one Steiner point $p_C$  for each cluster $C := L^i_j$  and adding an edge between each vertex of $C$ and $p_C$ and an edge between $p_C$ and the support vertex $x_C$ for $C$, and then by assigning uniformly the weight $\ell=\frac{1}{2}\max\{\Delta_s(G),\max\{d_G(u,v): uv \mbox{ is an edge of } H\}\}$ to all edges of the obtained tree. So, $H'_\ell$ is a uniformly weighted tree with at most $O(n)$ Steiner points. Again, $G$ embeds into tree $H'_\ell$ non-contractively and $H'_\ell$ can be obtained in $O(n m)$ total time.
See Fig.~\ref{fig:treeHpell} for an illustration.

\begin{figure}\vspace*{-1.7cm}
        \centering
        \begin{subfigure}[b]{0.45\textwidth}
                \includegraphics[width=\textwidth]{figures/layering3.eps}
                \vspace*{-.3cm}
                \caption{Topology of trees $H$ and $H_{\ell}$.}
                \label{fig:treeHHp}
        \end{subfigure}
          ~
        \begin{subfigure}[b]{0.45\textwidth}
                \includegraphics[width=\textwidth]{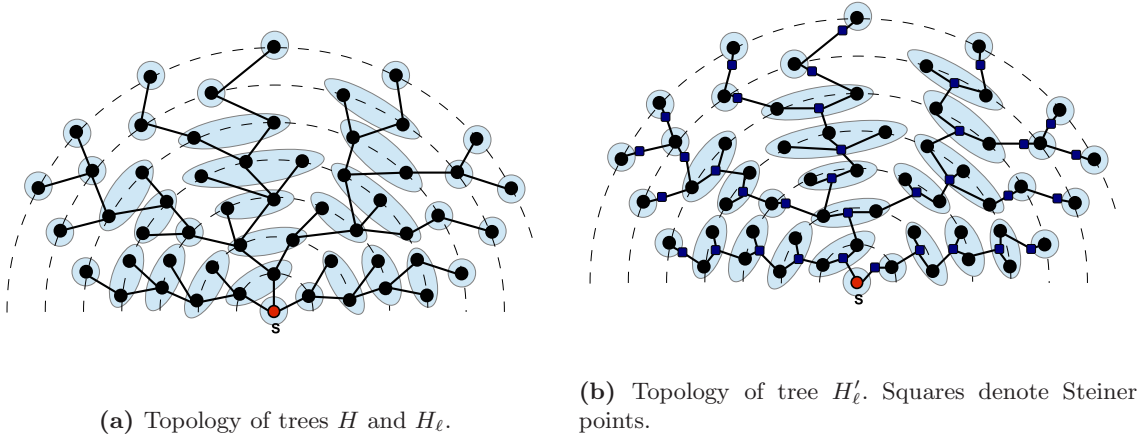}
                \vspace*{-.3cm}
                \caption{Topology of tree $H'_{\ell}$. Squares denote Steiner points.}
                \label{fig:treeHpell}
        \end{subfigure}
         \caption{Embedding into trees $H, H_{\ell}$ and $H'_{\ell}$.}
         \label{fig:trees}
\end{figure}

Constructed trees have the following distance properties (for comparison reasons, we include also the results for $H$ mentioned earlier).
\begin{proposition}[\cite{ChepoiDNRV12}]
\label{tree-properties}
Let  $G=(V,E)$ be a graph, $s$ be its arbitrary vertex, $\alpha=td(G)$, $\Delta_s=\Delta_s(G)$, and $H$, $H_{\ell}$, $H'_{\ell}$ be trees as described above.
Then, for any two vertices $x$ and $y$ of $G$, the following is true:
\[ d_H(x,y)-2 \leq d_G(x,y) \leq d_H(x,y)+\Delta_s,\]
\[ d_H(x,y)-2 \leq d_G(x,y) \leq d_H(x,y)+ 3\alpha,\]

\[d_G(x,y)\leq d_{H_{\ell}}(x,y)  \leq (\Delta_s+1)(d_G(x,y)+2),\]
\[d_G(x,y)\leq d_{H_{\ell}}(x,y)  \leq  \max\{3\alpha-1,2\alpha+1\} \left(d_G(x,y)+2\right),\]

\[d_G(x,y)\leq d_{H'_{\ell}}(x,y)  \leq (\Delta_s+1)(d_G(x,y)+1),\]
\[d_G(x,y)\leq d_{H'_{\ell}}(x,y)  \leq  3\alpha (d_G(x,y)+1).\]
\end{proposition}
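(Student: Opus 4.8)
The plan is to treat the six inequalities in two groups. The first two are not new: they are literally Proposition~\ref{lem:cluster-diam} and Proposition~\ref{lem:treeH-td}, included only for comparison, so nothing has to be done for them. For the remaining four I would first record two structural facts about the uniform weights. Since every edge of $H$ joins a vertex $v$ of a cluster $C$ to the support vertex $x_C$, and $x_C$ has a neighbour $v'\in C$, we get $d_G(v,x_C)\le 1+d_G(v',v)\le 1+\Delta_s$; hence $\ell=\max\{d_G(u,v): uv\text{ an edge of }H\}\le\Delta_s+1$, and therefore the weight of $H'_\ell$ satisfies $2\ell'=\max\{\Delta_s,\ell\}\le\Delta_s+1$. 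Second, in $H'_\ell$ every edge is incident to exactly one Steiner point, so on any path of $H'_\ell$ real and Steiner vertices alternate; and if $a,p_C,b$ are consecutive on such a path with $a,b$ real, then $a,b\in C\cup\{x_C\}$, whence $d_G(a,b)\le\max\{\mathrm{diam}_G(C),\ell\}\le\max\{\Delta_s,\ell\}=2\ell'$. In words: a single Steiner detour never contracts a $G$-distance by more than the $2\ell'$ it contributes to $d_{H'_\ell}$.

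Non-contractiveness then follows by walking along the (unique) tree path and using the triangle inequality in $G$. For $H_\ell$: if $x=v_0,v_1,\dots,v_t=y$ is the $H$-path, then $d_G(x,y)\le\sum_i d_G(v_i,v_{i+1})\le t\,\ell=d_{H_\ell}(x,y)$, since $d_H(x,y)=t$. For $H'_\ell$: if $x=v_0,p_{C_0},v_1,p_{C_1},\dots,v_m=y$ is the $H'_\ell$-path, then by the second fact each ``real step'' satisfies $d_G(v_i,v_{i+1})\le 2\ell'$ while contributing exactly $2\ell'$ to $d_{H'_\ell}$, so $d_G(x,y)\le 2m\ell'=d_{H'_\ell}(x,y)$. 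This proves the left-hand (domination) inequalities for both trees.

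For the $\Delta_s$ upper bounds I would bound the length of the tree path. For $H_\ell$: $d_{H_\ell}(x,y)=\ell\,d_H(x,y)\le(\Delta_s+1)(d_G(x,y)+2)$, using $\ell\le\Delta_s+1$ and $d_H(x,y)\le d_G(x,y)+2$ from Proposition~\ref{lem:cluster-diam}. For $H'_\ell$: let $i,j$ be the levels of $x,y$ and $k$ the level of the cluster $D$ that is the nearest common ancestor in $\Gamma(G,s)$ of the clusters of $x$ and $y$. The $H'_\ell$-path climbs $i-k$ real steps from $x$ to a vertex of $D$, makes at most one intra-$D$ step, then descends $j-k$ real steps to $y$, so $m\le (i-k)+(j-k)+1$ and $d_{H'_\ell}(x,y)=2m\ell'$. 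Here a short argument — each cluster has a unique parent cluster, because two level-$(i-1)$ neighbours of a level-$i$ cluster could be joined by a path staying in levels $\ge i-1$ and so would lie in one cluster — shows that the BFS-parent chain realises $d_G(x,D)=i-k$ and $d_G(y,D)=j-k$; and since $D$ separates $x$ from $y$ in $G$ (or contains one of them), $(i-k)+(j-k)\le d_G(x,y)$. Hence $m\le d_G(x,y)+1$ and $d_{H'_\ell}(x,y)=2m\ell'\le(\Delta_s+1)(d_G(x,y)+1)$.

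Finally, the three inequalities stated in terms of $\alpha=td(G)$ should follow from the $\Delta_s$-versions by replacing $\Delta_s$, $\ell$, $\ell'$ with bounds in $\alpha$: the $d_H$ bound is exactly Proposition~\ref{lem:treeH-td}, and $\Delta_s\le 3\alpha$ is Proposition~\ref{lem:td-cluster-diam}. I expect this last step to be the main obstacle. Substituting $\Delta_s\le 3\alpha$ naively into the $H_\ell$ and $H'_\ell$ bounds loses an additive $1$ (it only gives $3\alpha+1$), so instead one must bound the \emph{edge weights} directly against an optimal distortion-$\alpha$ non-contractive tree $T$ of $G$, showing $\ell\le\max\{3\alpha-1,2\alpha+1\}$ and $\max\{\Delta_s,\ell\}\le 3\alpha$; this requires a careful estimate of how long an edge of $H$ (equivalently, a BFS ``shortcut'' inside a single cluster) can be when $G$ embeds into a tree with distortion $\alpha$, rather than the crude cluster-diameter bound, and is the genuinely delicate part of the argument.
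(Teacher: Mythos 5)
The paper offers no proof of this proposition: it is imported verbatim from \cite{ChepoiDNRV12}, so there is nothing in the text to compare your argument against, and the assessment below concerns only the internal soundness of your reconstruction. On that score, most of what you write is correct. Inequalities one and two are indeed just Propositions~\ref{lem:cluster-diam} and~\ref{lem:treeH-td}. Your edge-weight bound $\ell\le\Delta_s+1$, the alternation of real and Steiner vertices in $H'_\ell$, the observation that a single Steiner detour contributes exactly $2\ell'=\max\{\Delta_s,\ell\}$ while the corresponding $G$-distance is at most $\max\{\Delta_s,\ell\}$, and the count $m\le(i-k)+(j-k)+1\le d_G(x,y)+1$ via the separator property of the NCA cluster $D$ are all sound, and together they do establish the two domination inequalities and the two $(\Delta_s+1)$-upper bounds.

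The genuine gap is the pair of multiplicative upper bounds stated in terms of $\alpha=td(G)$ (the fourth and sixth displays). You correctly reduce them to the single claim $\ell\le\max\{3\alpha-1,\,2\alpha+1\}$ (which, since $\alpha\ge 1$, also gives $\max\{\Delta_s,\ell\}\le 3\alpha$ and hence the $H'_\ell$ bound), and you correctly observe that the obvious route $\ell\le\Delta_s+1\le 3\alpha+1$ falls short by an additive~$1$. But you then stop: no argument is given that an edge $vx_C$ of $H$ --- i.e.\ a pair consisting of a cluster vertex and its support vertex --- has $G$-distance at most $\max\{3\alpha-1,2\alpha+1\}$ when $G$ embeds non-contractively into a tree with distortion $\alpha$. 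This is precisely the step where the optimal tree $T$ must actually be used (comparing the three tree-paths among $v$, $v'$, $x_C$ and the level constraint $d_G(s,x_C)=d_G(s,v)-1$ to sharpen the crude bound $d_G(v,v')+1\le\Delta_s+1$), and without it the two $\alpha$-inequalities remain unproved. Flagging a step as ``the genuinely delicate part'' does not discharge it, so as written the proposal proves four of the six inequalities and only half (the non-contractiveness side) of the remaining two.
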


As pointed out in~\cite{ChepoiDNRV12}, tree $H'_{\ell}$ provides a $6$-approximate solution to the problem of minimum distortion non-contractive embedding of graph into tree.

In our empirical study, we analyze embeddings of our graph datasets into each of these three trees and measure how close these graph datasets resemble a tree from this prospective. We compute the following measures:

\vspace*{-.3cm}

\begin{enumerate}
  \item[-] {\em maximum distortion right}  $:= \max\{\frac{d_T(u,v)}{d_G(u,v)}: u,v \in V, ~d_T(u,v) > d_G(u,v)>0\}$;
  \item[-] {\em maximum distortion left} $:= \max\{\frac{d_G(u,v)}{d_T(u,v)}: u,v \in V, ~d_G(u,v) > d_T(u,v)>0\}$;
  \item[-] {\em average distortion right}  $:= \mbox{avg}\{\frac{d_T(u,v)}{d_G(u,v)}: u,v \in V, ~d_T(u,v) > d_G(u,v)>0\}$;
  \item[-] {\em average distortion left} $:= \mbox{avg}\{\frac{d_G(u,v)}{d_T(u,v)}: u,v \in V, ~d_G(u,v) > d_T(u,v)>0\}$;

  \item[-] {\em average relative distortion} $:=\mbox{avg}\{\frac{|d_T(u,v)-d_G(u,v)|}{d_G(u,v)} : u,v \in V\}$;
  \item[-] {\em distance-weighted average distortion} $:=\frac{1}{\Sigma_{u,v \in V}d_G(u,v)}\Sigma_{u,v \in V}(d_G(u,v)\cdot \frac{d_T(u,v)}{d_G(u,v)})=\frac{\Sigma_{u,v \in V}d_T(u,v)}{\Sigma_{u,v \in V}d_G(u,v)}$.
\end{enumerate}

A pair of distinct vertices $u,v$ of $G=(V,E)$ we call a {\em right pair} with respect to tree $H=(V,F)$ if $d_G(u,v) < d_H(u,v)$. If $d_H(u,v) < d_G(u,v)$ then they are called a {\em left pair}. Note that $G$ has no left pairs with respect to trees $H_{\ell}$ and $H'_{\ell}$, hence.  in case of trees $H_{\ell}$ and $H'_{\ell}$, we talk only about maximum distortion, average distortion, average relative distortion and distance-weighted average distortion.
Distance-weighted average distortion is used in literature  when distortion of distant pairs of vertices is more important than that of close pairs, as it gives larger weight values to distortion of distant pairs (see~\cite{DBLP:journals/corr/Kao13}). Clearly, any tree graph would have maximum distortion, average relative distortion and distance-weighted average distortion equal to 1, 0 and 1, respectively.

Tables~\ref{tab:treeH} and \ref{tab:treeHl}  show the results of embedding our graph datasets into trees $H,$ $H_{\ell}$ and $H'_{\ell}$, respectively. It turns out that most of the datasets embed into tree $H$ with average distortion (right or left, right being usually better) between $1$ and $1.5$. Also, many pairs of vertices enjoy exact embedding to tree $H$; they preserve their original graph distances (for example, around $88\%$ of the pairs in Erd\"os dataset, $72\%$ of pairs in Homo release 3.2.99, $57\%$ in AS\_Caida\_20120601 preserve their original graph distances). Comparing the results of non-contractive embeddings to trees $H_{\ell}$ and $H'_{\ell}$, we observe that max distortions are slightly improved in $H'_{\ell}$ over distortions in $H_{\ell}$, but average distortions are very much comparable. Furthermore, distance-weighted average distortions are better in $H_{\ell}$ than in $H'_{\ell}$. This confirms the Gupta's claim in~\cite{DBLP:conf/soda/Gupta01} that the Steiner points do not really help. 


\begin{table}
\footnotesize
\begin{center}
\hspace*{-50mm}
\begin{tabular}{ | c |      p{1.5cm} | p{1.0cm} | p{1.1cm} | p{1.5cm}  | p{1.0cm}  | p{1.1cm} | p{1.2cm} | p{1.5cm} |p{1.5cm} |}
    \hline
  Graph                     & average distor\-tion left
                                       & max distor\-tion left
                                             & \% of~left pairs (round.)
                                                      & average distor\-tion right
                                                                & max distor\-tion right
                                                                    & \% of~right pairs (round.)
                                                                            & \% of~pairs $d_T=d_G$ (round.)
                                                                                     & average relative distortion
                                                                                                 & distance-weighted average distortion\\ \hline\hline
  PPI                       &~ 1.50159 &~~7  &~ 70.5   &~1.34140 &~3 &~~9.1  &~20.4   &~0.24669   &~0.790311\\ \hline
  Yeast                     &~ 1.48714 &~~5  &~ 56.3   &~1.38989 &~3 &~12.2  &~31.5   &~0.219268  &~0.850311\\ \hline
  DutchElite                &~ 1.54045 &~~7  &~ 73.0   &~1.41254 &~3 &~~3.9  &~23.1   &~0.252341  &~0.760714\\ \hline
  EPA                       &~ 1.50416 &~~5  &~ 44.66  &~1.38107 &~3 &~10.47 &~44.87  &~0.178557  &~0.878082\\ \hline
  EVA                       &~ 1.29905 &~~6  &~ 32.31  &~1.27780 &~3 &~14.77 &~52.92  &~0.110271  &~0.951626 \\ \hline
  California                &~ 1.52477 &~~5  &~ 61.82  &~1.37071 &~3 &~~7.92 &~30.25  &~0.227176  &~0.810647 \\ \hline
  Erd\"os                   &~ 1.35242 &~~3  &~ ~2.75  &~1.41097 &~3 &~~8.91 &~88.34  &~0.0437277 &~1.02241\\ \hline
  Routeview                 &~ 1.40636 &~~4  &~ 24.39  &~1.41413 &~3 &~33.34 &~42.28  &~0.205375  &~1.03343\\ \hline
  Homo  release 3.2.99      &~ 1.533   &~~4  &~ ~2.83   &~1.67827 &~3 &~25.16 &~72.01  &~0.180092  &~1.13402\\ \hline
  AS\_Caida\_20071105       &~ 1.48085 &~~4  &~ 21.43  &~1.35730 &~3 &~35.42 &~43.15  &~0.192302  &~1.02943 \\ \hline
  Dimes 3/2010              &~ 1.53666 &~~3  &~ ~5.74   &~1.37247 &~3 &~44.42 &~49.84  &~0.184767  &~1.12555\\ \hline
  Aqualab 12/2007- 09/2008  &~ 1.42269 &~~4  &~ 31.71  &~1.41923 &~3 &~35.75 &~32.54  &~0.241815  &~1.03194 \\ \hline
  AS\_Caida\_20120601       &~ 1.34538 &~~4  &~ 22.42  &~1.40429 &~3 &~20.43 &~57.15  &~0.138869  &~1.0068 \\ \hline
  itdk0304                  &~ 1.60077 &~~8  &~ 94.85  &~1.26367 &~3 &~~0.55 &~~4.60  &~0.331656  &~0.673012 \\ \hline
  DBLB-coauth               &~ 1.77416 &~~9  &~ 95.82  &~1.24977 &~3 &~~0.59 &~~3.59  &~0.383101  &~0.615328\\ \hline
  Amazon                    &~ 2.48301 &~19 &~ 99.17  &~1.20027 &~3 &~~0.20 &~~0.63  &~0.536656  &~0.536656\\
  \hline
\end{tabular}\hspace*{-50mm}
\end{center}
\caption{Distortion results of embedding datasets into a canonic tree $H$.}
\label{tab:treeH}
\vspace*{-0.5cm}\end{table}

As tree $H'_{\ell}$ provides a $6$-approximate solution to the problem of minimum distortion non-contractive embedding of graph into tree, dividing by 6 the max
distortion values in Table \ref{tab:treeHl} for tree $H'_{\ell}$, we obtain a lower bound on $td(G)$ for each graph dataset $G$. For example, $td(G)$ is at lest 4/3 for Erd\"os and Dimes 3/2010, at least 5/3 for  Homo  release 3.2.99, at least 2 for Yeast, EPA, Routeview, AS\_Caida\_20071105, Aqualab 12/2007-09/2008 and  AS\_Caida\_20120601, at least 8/3 for PPI and California, at least 10/3 for DutchElite, at least 3 for EVA, at least 11/3 for  itdk0304 and  DBLB-coauth, at least 7 for  Amazon.

\begin{table}
\footnotesize
\begin{center}
\begin{tabular}{  p{4.0cm}   p{4.0cm}   c  } 
 ~~~~~~~~~~~~~~~~~~~~~~~~~~~~~~~ & ~{\bf\large tree $H_{\ell}$}~   & ~~~~~~~~~~~~~~{\bf\large tree $H'_{\ell}$}~ \\
\end{tabular}

\hspace*{-50mm}
\begin{tabular}{ | c || p{1.3cm} | p{1.2cm} | p{1.3cm}| p{1.3cm}|| p{1.3cm} | p{1.2cm} | p{1.3cm} | p{1.3cm}|} \hline
  Graph & average distor\-tion & max distor\-tion & average relative distor\-tion & distance-weighted average distor\-tion & average distor\-tion & max distor\-tion & average relative distor\-tion & distance-weighted average distor\-tion \\  \hline\hline
  PPI                       &~5.70566 &~ 21 &~4.70566 &~5.53218 &~5.29652 &~ 16 &~4.29652 &~5.2027  \\ \hline
  Yeast                     &~4.37781 &~ 15 &~3.37781 &~4.25155 &~3.79318 &~ 12 &~2.79318 &~3.74159 \\ \hline
  DutchElite                &~5.45299 &~ 21 &~4.45299 &~5.325   &~6.53269 &~ 20 &~5.53269 &~6.4574  \\ \hline
  EPA                       &~4.50619 &~ 15 &~3.50619 &~4.39041 &~4.06901 &~ 12 &~3.06901 &~3.99447 \\ \hline
  EVA                       &~5.83084 &~ 18 &~4.83084 &~5.70976 &~7.77752 &~ 18 &~6.77752 &~7.65544 \\ \hline
  California                &~4.15785 &~ 15 &~3.15785 &~4.05324 &~4.98668 &~ 16 &~3.98668 &~4.92935 \\ \hline
  Erd\"os                   &~3.08843 &~ ~9 &~2.08843 &~3.06724 &~3.06705 &~ ~8 &~2.06705 &~3.05622 \\ \hline
  Routeview                 &~4.28302 &~ 12 &~3.28302 &~4.13371 &~4.80363 &~ 12 &~3.80363 &~4.66503 \\ \hline
  Homo  release 3.2.99      &~4.64504 &~ 12 &~3.64504 &~4.53609 &~3.96703 &~ 10 &~2.96703 &~3.94713 \\ \hline
  AS\_Caida\_20071105       &~4.24314 &~ 12 &~3.24314 &~4.11772 &~4.76795 &~ 12 &~3.76795 &~4.65617 \\ \hline
  Dimes 3/2010              &~3.43833 &~ ~9 &~2.43833 &~3.37664 &~3.35917 &~ ~8 &~2.35917 &~3.32159 \\ \hline
  Aqualab 12/2007- 09/2008  &~4.23183 &~ 12 &~3.23183 &~4.12775 &~4.54116 &~ 12 &~3.54116 &~4.4587  \\ \hline
  AS\_Caida\_20120601       &~4.10547 &~ 12 &~3.10547 &~4.0272  &~4.53051 &~ 12 &~3.53051 &~4.4896  \\ \hline
  itdk0304                  &~5.370078&~ 24 &~4.37008 &~5.3841  &~5.710122&~ 22 &~4.71012 &~5.82908 \\ \hline
  DBLB-coauth               &~5.57869 &~ 27 &~4.57869 &~5.53795 &~5.12724 &~ 22 &~4.12724 &~5.14932 \\ \hline
  Amazon                    &~8.81911 &~ 57 &~7.81911 &~8.78382 &~7.87004 &~ 42 &~6.87004 &~7.95201 \\ \hline
\end{tabular}\hspace*{-50mm}
\end{center}
\caption{Distortion results of non-contractive embedding of datasets into trees $H_{\ell}$ and $H'_{\ell}$. }
\label{tab:treeHl}
\vspace*{-0.5cm}\end{table}

\commentout{ 
\begin{table}
\footnotesize
\begin{center}
\hspace*{-50mm}
\begin{tabular}{ | c | p{1.3cm} | p{1.2cm} | p{1.3cm} | p{1.3cm}|}
    \hline
  Graph & avg. distortion & max distortion & avg. relative distortion & distance-weighted average distortion\\  \hline
  PPI &~5.29652 &~ 16&~ 4.29652&~ 5.2027\\ \hline
  Yeast &~3.79318 &~12 &~2.79318 &~ 3.74159\\ \hline
  DutchElite  &~ 6.53269&~ 20&~ 5.53269 &~ 6.4574\\ \hline
  EPA  &~4.06901 &~ 12&~ 3.06901 &~ 3.99447\\ \hline
  EVA  &~7.77752 &~ 18&~ 6.77752&~ 7.65544\\ \hline
  California &~4.98668 &~ 16&~3.98668&~ 4.92935\\ \hline
  Erd\"os  &~3.06705 &~ 8&~ 2.06705&~ 3.05622\\ \hline
  Routeview  &~4.80363 &~ 12&~ 3.80363 &~ 4.66503\\ \hline
  Homo  release 3.2.99 &~3.96703 &~ 10 &~ 2.96703&~3.94713 \\ \hline
  AS\_Caida\_20071105  &~ 4.76795&~ 12 &~ 3.76795 &~ 4.65617\\ \hline
  Dimes 3/2010  &~3.35917 &~ 8&~ 2.35917&~3.32159 \\ \hline
  Aqualab 12/2007- 09/2008  &~ 4.54116&~ 12 &~ 3.54116&~ 4.4587\\ \hline
  AS\_Caida\_20120601  &~ 4.54723&~ 39 &~ 3.54723&~4.49292 \\ \hline
  itdk0304 &~5.710122 &~ 22 &~4.71012 &~5.82908 \\ \hline
  DBLB-coauth &~5.12724 &~ 22 &~ 4.12724&~ 5.14932\\ \hline
  Amazon  &~7.87004 &~ 42 &~ 6.87004 &~ 7.95201 \\
  \hline
\end{tabular}\hspace*{-50mm}
\end{center}
\caption{Distortion results of embedding datasets into tree $H'_{\ell}$.}
\label{tab:treeHPl}
\vspace*{-0.5cm}\end{table}
}

\section{Tree-Breadth, Tree-Length and Tree-Stretch}\label{sec:tb}

There are 
two other graph parameters measuring metric tree likeness of a graph that
are based on the notion of tree-decomposition introduced by Robertson and Seymour in their work
on graph minors~\cite{RobSey86}.

A \emph{tree-decomposition} of a graph $G=(V,E)$ is a  pair
$(\{X_i|i\in I\},T=(I,F))$ where $\{X_i|i\in I\}$ is a collection of
subsets of $V$, called {\em bags}, and $T$ is a tree. The nodes of
$T$ are the bags $\{X_i|i\in I\}$ satisfying the following three
conditions (see Fig.~\ref{fig:graph-td}):
 \begin{enumerate}\vspace*{-2mm}
   \item $\bigcup_{i\in I}X_i=V$;
   \item for each edge $uv\in E$, there is a bag $X_i$ such that $u,v \in
   X_i$;
   \item for all $i,j,k \in I$, if $j$ is on the path from $i$ to $k$ in $T$, then $X_i \bigcap X_k\subseteq X_j$.
   Equivalently, this condition could be stated as follows: for all vertices $v\in V$, the set of bags $\{i\in I| v\in X_i\}$ induces a connected subtree $T_v$ of $T$.
 \end{enumerate}\vspace*{-2mm}
For simplicity we denote a tree-decomposition $\left(\{X_i|i\in
I\},T=(I,F)\right)$ of a graph $G$ by $\mathcal{T}(G)$.

\begin{figure}\vspace*{-.7cm}
        \centering
        \begin{subfigure}[b]{0.35\textwidth}
                \includegraphics[width=\textwidth]{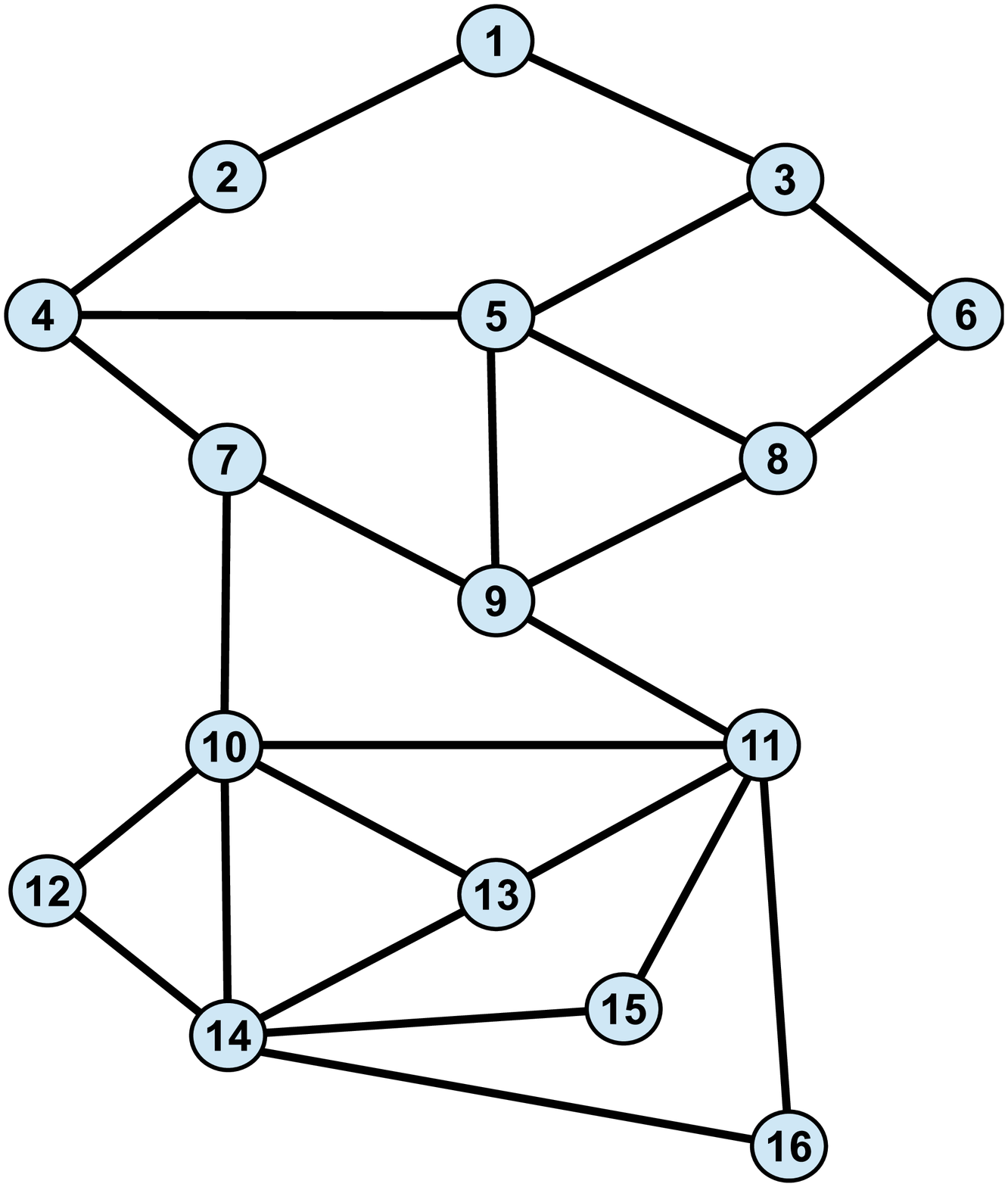}
                \vspace*{-.3cm}
                \caption{A graph $G$.}
                \label{fig:graphfortd}
        \end{subfigure}
          ~
        \begin{subfigure}[b]{0.35\textwidth}
                \includegraphics[width=\textwidth]{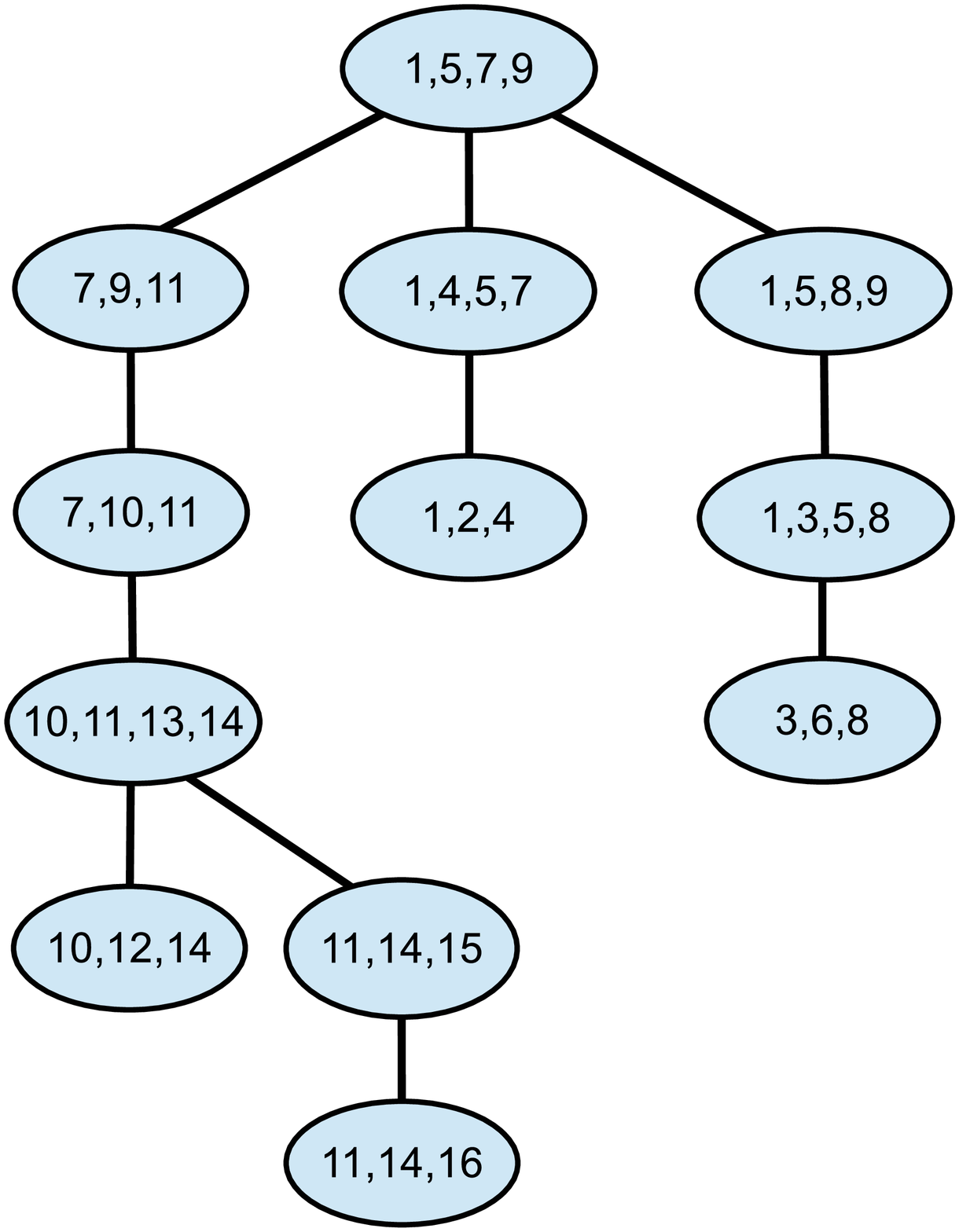}
                \vspace*{-.3cm}
                \caption{A tree-decomposition of $G$.}
                \label{fig:treedec}
        \end{subfigure}
         \caption{A graph and its tree-decomposition of width 3, of length 3, and of breadth 2.}
         \label{fig:graph-td}
\end{figure}

The \emph{width} of a tree-decomposition $\mathcal{T}(G)=(\{X_i|i\in
I\},T=(I,F))$ is $max_{i\in I}|X_i|-1$. The {\em tree-width} of a
graph $G$, denoted by $tw(G)$, is the minimum width over all
tree-decompositions $\mathcal{T}(G)$ of $G$~\cite{RobSey86}. The trees are
exactly the graphs with tree-width 1.

The {\em length} of a
tree-decomposition $\mathcal{T}(G)$ of a graph $G$ is $\lambda:=\max_{i\in
I}\max_{u,v\in X_i}d_G(u,v)$ (i.e., each bag $X_i$ has diameter at
most $\lambda$ in $G$). The {\em tree-length} of $G$, denoted by
$tl(G)$, is the minimum of the length  over all tree-decompositions
of $G$~\cite{DoGa2007}. The chordal graphs  are exactly the graphs
with tree-length 1. Note that these two graph parameters are not
related to each other. For instance, a clique
on $n$ vertices has tree-length 1 and tree-width $n-1$,
whereas a cycle on $3n$ vertices has tree-width 2 and tree-length
$n$. 
%
%
Analysis of few real-life networks (like Aqualab, AS\_Caida, Dimes) performed in \cite{conf/nca/MontgolfierSV11} shows that although those networks have small hyperbolicities, they all have sufficiently large tree-width due to well connected cores. As we demonstrate below, the tree-length of those graph datasets is relatively small.

The {\em breadth} of a tree-decomposition $\mathcal{T}(G)$ of a graph $G$ is the
minimum integer $r$ such that for every $i\in I$ there is a vertex
$v_i\in V$ with $X_i\subseteq B_r(v_i,G)$ (i.e., each bag $X_i$
can be covered by a disk $B_r(v_i,G):=\{u\in V(G) : d_G(u,v_i)\leq r
\}$ of radius at most $r$ in $G$). Note that vertex $v_i$ does not
need to belong to $X_i$. The {\em tree-breadth} of $G$, denoted by
$tb(G)$, is the minimum of the breadth over all
tree-decompositions of $G$ ~\cite{DBLP:conf/approx/DraganK11}. Evidently, for any graph $G$, $1\leq
tb(G)\leq tl(G)\leq 2 tb(G)$ holds. Hence, if one parameter is
bounded by a constant for a graph $G$ then the other parameter is
bounded for $G$ as well.

Clearly, in view of tree-decomposition  $\mathcal{T}(G)$ of $G$, the smaller parameters $tl(G)$ and $tb(G)$ of $G$ are, the closer graph $G$ is to a tree metrically.
Unfortunately, while graphs with tree-length 1 (as they are
exactly the chordal graphs) can be recognized in linear time, the
problem of determining whether a given graph has tree-length at most
$\lambda$ is NP-complete for every fixed $\lambda >1$ (see~\cite{Daniel10}). Judging from this result, it is conceivable that the
problem of determining whether a given graph has tree-breadth at most $\rho$ is NP-complete, too.


The following proposition from~\cite{DoGa2007} establishes a relationship between the tree-length and the cluster-diameter of a layering partition of a graph.
\begin{proposition}[\cite{DoGa2007}]\label{lem:tl-cluster-diam}
For every graph $G$ and any its vertex $s$,
$\Delta_s(G)/3 \leq tl(G) \leq \Delta_s(G)+1.$
\end{proposition}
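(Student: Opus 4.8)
The plan is to prove the two inequalities separately. For $tl(G)\le\Delta_s(G)+1$ I will build an explicit tree-decomposition of $G$ directly from the layering partition $\mathcal{LP}(G,s)$ and bound the diameter of each bag. For $\Delta_s(G)/3\le tl(G)$ I will start from an arbitrary tree-decomposition of length $\lambda:=tl(G)$ and show that every cluster of $\mathcal{LP}(G,s)$ has diameter at most $3\lambda$, by locating a single bag that is metrically close simultaneously to two given vertices of a cluster and to the start vertex $s$.

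For the upper bound, I take the layering tree $\Gamma(G,s)$ as host tree and, for a cluster $C=L^i_j$, use the bag $X_C:=C\cup N_{i-1}(C)$, where $N_{i-1}(C)$ is the set of vertices of $L^{i-1}$ having a neighbour in $C$. The three tree-decomposition axioms are straightforward: the bags cover $V$; an edge of $G$ either lies inside one layer, hence inside one cluster (a single edge between two vertices of $L^i$ is a path avoiding $B_{i-1}(s)$), or goes between $L^{i-1}$ and a cluster $C\subseteq L^i$, in which case both endpoints lie in $X_C$; and the bags containing a fixed vertex $v\in C\subseteq L^i$ are $X_C$ together with $X_{C'}$ for the children $C'$ of $C$ in $\Gamma(G,s)$ adjacent to $v$, so they induce a star. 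The real content is the diameter bound, and the key observation is that $N_{i-1}(C)$ is entirely contained in one cluster $C'\subseteq L^{i-1}$ — the ``parent'' of $C$: if $u$ and $u'$ in $N_{i-1}(C)$ have neighbours $w,w'\in C$, then concatenating $u$, a path inside $C$ from $w$ to $w'$ (which stays out of $B_{i-1}(s)$, hence out of $B_{i-2}(s)$), and $w',u'$ gives a $u$--$u'$ path avoiding $B_{i-2}(s)$, so $u$ and $u'$ share a cluster of $L^{i-1}$. Thus $X_C\subseteq C\cup C'$; two vertices inside $C$ (or inside $C'$) are within $\Delta_s(G)$, and a vertex $a\in C$ together with a vertex $b\in N_{i-1}(C)$ adjacent to some $w\in C$ satisfy $d_G(a,b)\le d_G(a,w)+1\le\Delta_s(G)+1$. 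Hence this tree-decomposition has length at most $\Delta_s(G)+1$.

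For the lower bound, fix a tree-decomposition $(\{X_t\}_{t\in I},T)$ of length $\lambda$, a cluster $C\subseteq L^i$ and $u,v\in C$; the goal is $d_G(u,v)\le3\lambda$. Let $P$ be a $u$--$v$ path avoiding $B_{i-1}(s)$ (so every vertex of $P$ is at distance $\ge i$ from $s$), and let $Q$, $R$ be shortest $s$--$u$ and $s$--$v$ paths, each of length $i$. I will use two standard facts. First, a covering fact: for any path $w_0,\dots,w_k$ of $G$, the subtrees $T_{w_\ell}=\{t:w_\ell\in X_t\}$ satisfy $T_{w_\ell}\cap T_{w_{\ell+1}}\ne\emptyset$, so their union is a connected subtree of $T$ and hence contains the $T_{w_0}$--$T_{w_k}$ path; consequently every bag on that path contains some $w_\ell$. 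Second, the three subtrees $T_u,T_v,T_s$ admit a common median node $m$, i.e.\ $m$ lies on each of the $T_u$--$T_v$, $T_u$--$T_s$, $T_v$--$T_s$ paths of $T$. Combining these, $X_m$ contains a vertex $p$ of $P$ (covering applied to $P$), a vertex $q$ of $Q$, and a vertex $r$ of $R$, all pairwise within $\lambda$. From $i\le d_G(s,p)\le d_G(s,q)+\lambda$ I get $d_G(s,q)\ge i-\lambda$, so $d_G(q,u)=i-d_G(s,q)\le\lambda$ since $q$ lies on the geodesic $Q$ of length $i$; symmetrically $d_G(r,v)\le\lambda$. Therefore $d_G(u,v)\le d_G(u,q)+d_G(q,r)+d_G(r,v)\le3\lambda$, and as $C,u,v$ were arbitrary, $\Delta_s(G)\le3\,tl(G)$.

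I expect the lower bound to be the main obstacle, and within it the delicate points are the two supporting facts: the path-covering property of tree-decompositions and the existence of a common median node of three subtrees of a tree (together with the check that this median node actually lies on the covered portion of $T$ between $T_u$ and $T_v$, and between each of them and $T_s$). Both are standard, so once they are in place the rest is the short triangle-inequality computation above; in the upper bound the only non-obvious step is the ``parent cluster'' observation, after which the diameter estimate is immediate.
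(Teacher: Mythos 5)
Your proof is correct. Note that the paper does not actually prove this proposition itself (it is imported from Dourisboure--Gavoille), but it does prove the directly analogous Proposition~\ref{lem:tb-cluster-radius} for tree-breadth and cluster-radius, explicitly describing that argument as "similar to the proof from [DoGa2007]" of the present statement; your argument follows the same template. Your upper bound (augment each cluster $C\subseteq L^i$ by its neighbours in $L^{i-1}$ and hang the bags on $\Gamma(G,s)$) is exactly the paper's $\Gamma'$ construction, with the extra --- and necessary --- observation that $N_{i-1}(C)$ sits inside a single parent cluster, which is what gets you $\Delta_s(G)+1$ rather than $\Delta_s(G)+2$. For the lower bound the paper roots the tree-decomposition at a bag containing $s$ and works with the nearest common ancestor $Z$ of the bags meeting $C$, whereas you take the median of the three subtrees $T_u,T_v,T_s$ for a fixed pair $u,v\in C$; these are the same separator argument in rooted versus unrooted form (the NCA of $X$ and $Y$ in a tree rooted at $s$'s bag is precisely the median of $X$, $Y$ and that root), and both reduce to the same triangle-inequality computation giving $d_G(u,v)\le 3\,tl(G)$. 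The only point worth double-checking in your write-up is the separator fact in the degenerate case where the median node $m$ belongs to one of the subtrees $T_u,T_v,T_s$; there the bag $X_m$ contains the corresponding endpoint outright, so it still meets the relevant path and the argument goes through.
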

Thus, the cluster-diameter $\Delta_s(G)$ of a layering partition provides easily computable bounds for the hard to compute parameter $tl(G)$.


One can prove similar inequalities relating the tree-breadth and the cluster-radius of a layering partition of a graph.
\begin{proposition}\label{lem:tb-cluster-radius} 
For every graph $G$ and any its vertex $s$,
$$\Delta_s(G)/6\leq R_s(G)/3 \leq tb(G) \leq R_s(G)+1\leq \Delta_s(G)+1.$$
Furthermore, a tree-decomposition of $G$ with breadth at most $3 tb(G)$ can be constructed in $O(n+m)$ time.
\end{proposition}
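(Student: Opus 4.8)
The plan is to read off the two outer inequalities from the relations $R_s(G)\le\Delta_s(G)\le 2R_s(G)$ recorded in Section~\ref{sec:layer-partit}: these give $\Delta_s(G)/6\le R_s(G)/3$ and $R_s(G)+1\le\Delta_s(G)+1$ immediately. So the substance is the middle chain, $R_s(G)\le 3\,tb(G)$ and $tb(G)\le R_s(G)+1$, together with the linear-time construction.

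For $tb(G)\le R_s(G)+1$ I would turn the layering partition $\mathcal{LP}(G,s)$ into a tree-decomposition whose host tree is the layering tree $\Gamma(G,s)$. First I record the elementary fact that, for a cluster $C=L^i_j$, all its neighbours lying in $L^{i-1}$ belong to a single cluster $C'$, namely the parent of $C$ in $\Gamma(G,s)$: if $z,z'\in L^{i-1}$ have neighbours $x,x'\in C$, then concatenating the edge $zx$, a path from $x$ to $x'$ avoiding $B_{i-1}(s)$, and the edge $x'z'$ yields a path from $z$ to $z'$ avoiding $B_{i-2}(s)$. I then take as the bag of $C$ (with parent $C'$) the set $X_C:=C\cup N_{C'}(C)$, where $N_{C'}(C)$ is the set of vertices of $C'$ having a neighbour in $C$. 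Checking the three tree-decomposition axioms is routine: $\bigcup_C X_C=V$; every edge of $G$ is either inside a cluster or joins a cluster to its parent and hence lies in some $X_C$; and the bags containing a fixed vertex $v$ form a star in $\Gamma(G,s)$ centered at the cluster of $v$. To bound the breadth, pick for each $C$ a vertex $v_C$ with $C\subseteq B_{R_s(G)}(v_C)$ (it exists by the definition of $R_s(G)$); since every $z\in N_{C'}(C)$ is adjacent to some vertex of $C$, we get $X_C\subseteq B_{R_s(G)+1}(v_C)$, so this decomposition has breadth at most $R_s(G)+1$.

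The harder direction is $R_s(G)\le 3\,tb(G)$. Let $\mathcal{T}=(\{X_i\mid i\in I\},T)$ be a tree-decomposition of $G$ of breadth $\rho=tb(G)$, with $X_i\subseteq B_\rho(c_i)$ for centers $c_i$. Fix a cluster $C=L^k_j$ (the cases $k\le 1$ are immediate) and let $C^+$ be the connected component of $G[V\setminus B_{k-1}(s)]$ containing $C$; then $C\subseteq C^+$ and every vertex of $C^+$ is at distance at least $k$ from $s$. Let $T_s$ and $T_{C^+}$ be the connected subtrees of $T$ spanned by the bags meeting $\{s\}$, respectively $C^+$. If $T_s\cap T_{C^+}\ne\emptyset$, a common bag contains $s$ and some $b\in C^+$, so $k\le d_G(s,b)\le 2\rho$ and $C\subseteq B_{2\rho}(s)$. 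Otherwise let $Y$ be the bag of $T_{C^+}$ closest in $T$ to $T_s$ and set $w:=c_Y$. Take any $u\in C$; if $u\in Y$ then $d_G(w,u)\le\rho$, so assume $u\notin Y$. As $Y$ is the gateway of $T_{C^+}$ toward $T_s$, it lies on the $T$-path from a bag containing $s$ to a bag containing $u$ and contains neither, hence $Y$ separates $s$ from $u$ in $G$; so any shortest path from $s$ to $u$ meets $Y$ in a vertex $q$ with $d_G(s,q)+d_G(q,u)=k$. Since $Y$ meets $C^+$ it contains a vertex at distance at least $k$ from $s$; together with $Y\subseteq B_\rho(w)$ this forces $d_G(s,w)\ge k-\rho$, hence $d_G(s,q)\ge k-2\rho$, $d_G(q,u)\le 2\rho$, and $d_G(w,u)\le d_G(w,q)+d_G(q,u)\le 3\rho$. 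Thus every cluster $C$ satisfies $C\subseteq B_{3\rho}(w)$ for a suitable $w$, i.e. $R_s(G)\le 3\,tb(G)$.

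Finally, the decomposition built above is constructible in $O(n+m)$ time — $\mathcal{LP}(G,s)$ and $\Gamma(G,s)$ take $O(n+m)$, and the bags $X_C$ have total size $O(n+m)$ and are listed in the same time — and it has breadth at most $R_s(G)+1$, which by the inequalities just established is $O(tb(G))$ (at most $3\,tb(G)+1$, and one expects a mild refinement of the bags or the covering centers to reach the stated $3\,tb(G)$). The step I expect to be the real obstacle is the lower bound: one has to identify, inside an otherwise arbitrary optimal tree-decomposition, a single bag $Y$ that is simultaneously a separator of $s$ from \emph{every} vertex of the cluster and carries a vertex far from $s$, so that its covering ball witnesses closeness to the whole cluster at once; the choice ``$Y=$ the bag of $T_{C^+}$ nearest $T_s$'' is what makes this work. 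The remaining pieces — the two outer inequalities, the upper bound, and the algorithm — are essentially the layering-partition/tree-decomposition translation already behind Proposition~\ref{lem:tl-cluster-diam}.
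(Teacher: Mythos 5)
Your chain of inequalities is fully and correctly established: the two outer bounds follow from $R_s(G)\leq\Delta_s(G)\leq 2R_s(G)$ as you say, and your upper bound $tb(G)\leq R_s(G)+1$ is exactly the paper's construction (expand each cluster of $\Gamma(G,s)$ by its attachment vertices in the parent cluster, check the three axioms, grow the covering ball by one). The genuine divergence is in the lower bound $R_s(G)\leq 3\,tb(G)$. The paper roots an optimal tree-decomposition at a bag containing $s$, takes $Z$ to be the nearest common ancestor of all bags meeting the cluster $C\subseteq L^k$, and then has to exhibit a vertex of $Z$ that is far from $s$: it does so by picking a witness pair $x,y\in C$ whose bags have $Z$ as their NCA and intersecting $Z$ with a path from $x$ to $y$ that stays outside $B_{k-1}(s)$. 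You instead pass to the whole component $C^+$ of $G-B_{k-1}(s)$ containing $C$, form the connected subtree of bags meeting $C^+$, and take $Y$ to be its node nearest the subtree of bags containing $s$; then $Y$ automatically meets $C^+$ and hence contains a vertex at distance at least $k$ from $s$, which is precisely the role of the paper's vertex $b$, and the closing arithmetic ($d_G(s,q)\geq k-2\rho$, $d_G(q,u)\leq 2\rho$, $d_G(c_Y,u)\leq 3\rho$) is identical. Your selection is arguably cleaner --- it dispenses with the auxiliary path and the NCA --- at the price of a short extra case when the two subtrees intersect.

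The one place you fall short of the statement is the ``furthermore'' clause: your linear-time decomposition is certified only to have breadth $R_s(G)+1\leq 3\,tb(G)+1$, as you yourself flag. The paper closes this additive gap by strengthening the first half of the proof: the ball $B_{3tb(G)}(z)$ around the chosen center contains not only $C$ but also every neighbor of $C$ in the previous layer --- i.e., the entire expanded bag --- because one may route the shortest $s$--$x$ path through such a neighbor $x'$, which then lies within $2tb(G)-1$ of the crossing vertex $a\in Z$ and hence within $3tb(G)-1$ of $z$. The same one-line refinement goes through verbatim with your gateway bag $Y$ in place of $Z$, so your argument is a single observation away from the stated bound rather than structurally deficient.
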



\emph{Proof.} The proof is 
similar to the proof  from~\cite{DoGa2007} of Proposition \ref{lem:tl-cluster-diam}.
First we show $R_s(G)/3 \leq tb(G)$. 
Let $\mathcal{T}(G)$ be  a tree-decomposition of $G$ with minimum breadth $tb(G)$.  Let $X_1X_2$ be an edge of $\mathcal{T}(G)$ and $\mathcal{T}_1,\mathcal{T}_2$ be subtrees of $\mathcal{T}(G)$ after removing the edge $X_1X_2$.
It is known~\cite{diestelGT} that set $I=X_1\bigcap X_2$ separates in $G$ vertices belonging to bags of $\mathcal{T}_1$ but not to $I$ from vertices belonging to bags of $\mathcal{T}_2$ but not to $I$.
Assume that $\mathcal{T}(G)$ is rooted at a bag containing vertex $s$, the source of layering partition $\mathcal{LP}(G,s)$. Let $C$ be a cluster from layer $L_i$
(i.e., $C=L_i^j$ for some $j=1,\cdots,p_i$). Let $Z$ be the nearest common ancestor of all  bags of $\mathcal{T}(G)$ containing vertices of $C$. Let $z$ be the vertex such that $Z\subseteq B_{tb(G)}(z,G)$.

%

%
Consider arbitrary vertex $x \in C$. Necessarily, there is a vertex $y \in C$ and two bags $X$ and $Y$ of $\mathcal{T}(G)$ containing vertices x and y, respectively, such that $Z=NCA_{\mathcal{T}(G)}(X,Y)$ (i.e., $Z$ is the nearest common ancestor of $X$ and $Y$ in $\mathcal{T}(G)$). Let $P$ be a shortest path of $G$ from $s$ to $x$. By the separator property above, $P$ intersects $Z$. See Fig.~\ref{fig:prop9} for an illustration. Let $a$ be a vertex of $P \bigcap Z$ closest to $s$ in $G$. Since both $x$ and $y$ belong to $C$, there exist a path $Q$ from $x$ to $y$ in $G$ using only intermediate vertices $w$ with $d_G(s,w) \geq i$. Let $b \in Q \cap Z$ (i.e. $Q$ intersects $Z$ at vertex $b$). We have $d_G(s,x)=i=d_G(s,a)+d_G(a,x)$ and $i \leq d_G(s,b) \leq d_G(s,a)+d_G(a,z)+d_G(z,b) \leq d_G(s,a)+2tb(G)$. Hence, $d_G(a,x) = i-d_G(s,a) \leq 2tb(G)$ and therefore $d_G(x,z) \leq d_G(x,a)+d_G(a,z) \leq 2tb(G) +tb(G)=3tb(G)$. Thus, any vertex $x$ of $C$ is at distance at most $3tb(G)$ from $z$ in $G$, implying $R_s(G)/3 \leq tb(G)$.

\begin{wrapfigure}[22]{l}{0.42\textwidth}
                \centering\vspace*{-.3cm}
                \includegraphics[width=0.35\textwidth]{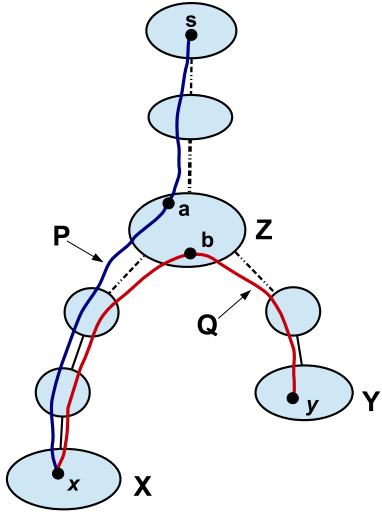}
         \vspace*{-.2cm}
         \caption{Illustration to the proof of Proposition \ref{lem:tb-cluster-radius}.}
         \label{fig:prop9}
\end{wrapfigure}
Note that, for the neighbor $x'$ of $x$ on $P$, $d(x',z)\leq 3 tb(G)-1$ must hold, i.e., $B_{3 tb(G)}(z,G)$ contains not only all vertices of $C=L_i^j$ but also all neighbors of vertices of $C$ laying in layer $L_{i-1}$. This fact will be useful in the second part of this proof.

Now we show that $tb(G) \leq R_s(G)+1$. Consider tree $\Gamma(G,s)$ of a layering partition $\mathcal{LP}(G,s)$ and assume $\Gamma(G,s)$ is rooted at node $\{s\}$. Let $p(C)$ be the parent of node $C$ in $\Gamma(G,s)$. Clearly, $\Gamma(G,s)$ satisfies already conditions 1 and 3 of tree-decompositions and only violates condition 2 as the edges joining vertices in different (neighboring) layers 
are not  yet covered by bags (which are the clusters in this case). We can obtain a tree-decomposition $\Gamma'$ from $\Gamma(G,s)$ as follows. $\Gamma'$ will have the same structure as $\Gamma(G,s)$, only the nodes of  $\Gamma(G,s)$  will slightly expand to cover additional edges of $G$ and form the bags of $\Gamma'$. To each node $C$ of $\Gamma(G,s)$ (assume $C\subseteq L_i$) 
we add all vertices from its parent $p(C)$ ($p(C)\subseteq L_{i-1}$) which are adjacent to vertices of $C$ in $G$. This expansion of $C$ results in a bag $C^+$ of $\Gamma'$ which, by construction, contains now also each edge $uv$ of $G$ with $u\in C\subseteq L_i$ and $v\in p(C)\subseteq L_{i-1}$.
Thus, $\Gamma'$ satisfies conditions 1 and 2 of  tree-decompositions. Also, if $C\subseteq B_{r}(z)$ for some vertex $z$ and integer $r$, then $C^+\subseteq B_{r+1}(z)$ must hold. Furthermore, each vertex $v$ of $G$ that was in a node $C$ now belongs to bag $C^+$ and to all bags formed from children of $C$ in $\Gamma(G,s)$ (and only to them). Hence,  all bags containing $v$ form a star in $\Gamma'$. All these indicate that $\Gamma'$ is a tree-decomposition of $G$ with breadth at most $R_s(G)+1$, i.e., $tb(G)\leq R_s(G)+1$.
%

Furthermore, as we indicated in the first part of this proof, for any cluster $C$ there is a vertex $z$ in $G$ such that  $C^+\subseteq B_{3 tb(G)}(z,G)$. The latter implies that the tree $\Gamma'$ obtained from $\Gamma(G,s)$ has breadth at most $3 tb(G)$.
Finally, since $\Gamma'$ is constructible in linear time and $R_s(G)\leq \Delta_s(G)\leq 2 R_s(G)$ holds for every graph $G$, the proposition follows. \qed

\medskip


Hence, the cluster-radius $R_s(G)$ of a layering partition provides easily computable bounds for the tree-breadth $tb(G)$ of a graph.
In Table ~\ref{tab:tb-bounds}, we show the corresponding lower and upper bounds on the tree-breadth for some of our datasets. The lower bound is obtained by dividing $R_s(G)$ by 3, the upper bound is obtained by calculating the breadth of the tree-decomposition $\Gamma'$.

\begin{table}
\footnotesize
\begin{center}
\begin{tabular}{ | c | c | c | c|}
    \hline
  Graph & $R_s(G)$ &lower bound & upper bound \\
  $G=(V,E)$ &  &on $tb(G)$  &on $tb(G)$  \\ \hline \hline
  PPI &4 & 2 & 5 \\ \hline
  Yeast &4 & 2&4 \\ \hline
  DutchElite  &6 & 2 & 6\\ \hline
  EPA  & 4& 2 & 4\\ \hline
  EVA  &5 &2 &5 \\ \hline
  California &4 & 2& 4\\ \hline
  Erd\"os  & 2 &1 &2 \\ \hline
  Routeview  &3  &1 &4\\ \hline
  Homo  release 3.2.99 & 3 &1 & 3\\ \hline
  AS\_Caida\_20071105 & 3 & 1 &3\\ \hline
  Dimes 3/2010  & 2 & 1 & 2 \\ \hline
  Aqualab 12/2007- 09/2008 & 3  & 1 & 3 \\ \hline
  AS\_Caida\_20120601 & 3 & 1 & 3 \\ \hline
  itdk0304 & 6 & 2 & 6 \\ \hline
  DBLB-coauth & 7 & 3 & 7 \\ \hline
  Amazon & 12 & 4 &  12 \\ \hline
\end{tabular}
\end{center}
\caption{Lower and upper bounds on the tree-breadth of our graph datasets.}
\label{tab:tb-bounds}
\vspace*{-0.5cm}\end{table}

\commentout{
\begin{table}
\footnotesize
\begin{center}
\begin{tabular}{ | c | c | c | c|}
    \hline
  Graph & $R_s(G)$ &lower bound & upper bound \\
  $G=(V,E)$ &  &on $tb(G)$  &on $tb(G)$  \\ \hline \hline
  PPI &4 & 2 & 5 \\ \hline
  Yeast &4 & 2&4 \\ \hline
  DutchElite  &6 & 2 & 6\\ \hline
  EPA  & 4& 2 & 4\\ \hline
  EVA  &5 &2 &5 \\ \hline
  California &4 & 2& 4\\ \hline
  Erd\"os  & 2 &1 &2 \\ \hline
  Routeview  &3  &1 &4\\ \hline
  Homo  release 3.2.99 & 3 &1 & 3\\ \hline
  AS\_Caida\_20071105 & 3 & 1 &3\\ \hline
  Dimes 3/2010  & 2 & 1 & 2 \\ \hline
  Aqualab 12/2007- 09/2008 & 3  & 1 & 3 \\ \hline
  AS\_Caida\_20120601 & 3 & 1 & 3 \\ \hline
  itdk0304 & 6 & 2 & 7 \\ \hline
  DBLB-coauth & 7 & 3 & 8 \\ \hline
  Amazon & 12 & 4 &  13 \\ \hline
\end{tabular}
\end{center}
\caption{Lower and upper bounds on the tree-breadth of our graph datasets.}
\label{tab:tb-bounds}
\vspace*{-0.5cm}\end{table}
}

Reformulating Proposition~\ref{lem:cluster-diam}, we obtain the following result.
\begin{proposition} \label{lem:treeH-tb}
For any graph $G=(V,E)$ and its canonic tree $H=(V,F)$ the following is true:
\[\forall u,v \in V, ~~d_H(u,v)-2 \leq d_G(u,v) \leq d_H(u,v)+ 3~tl(G) \leq d_H(u,v)+ 6~tb(G).\]
\end{proposition}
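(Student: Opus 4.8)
The plan is to obtain the statement by directly combining Proposition~\ref{lem:cluster-diam} with Proposition~\ref{lem:tl-cluster-diam} and the elementary chain $1\le tb(G)\le tl(G)\le 2\,tb(G)$ recorded after the definition of tree-breadth; no new construction is needed. First I would fix the start vertex $s$ that was used to build the canonic tree $H$ from the layering partition $\mathcal{LP}(G,s)$. The left inequality $d_H(u,v)-2\le d_G(u,v)$ is then literally the left half of Proposition~\ref{lem:cluster-diam}, so that part requires nothing further.

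For the right-hand side, Proposition~\ref{lem:cluster-diam} applied to the same $s$ gives $d_G(u,v)\le d_H(u,v)+\Delta_s(G)$ for all $u,v\in V$. Proposition~\ref{lem:tl-cluster-diam} states $\Delta_s(G)/3\le tl(G)$ for every vertex $s$, hence $\Delta_s(G)\le 3\,tl(G)$, which yields $d_G(u,v)\le d_H(u,v)+3\,tl(G)$. Finally, $tl(G)\le 2\,tb(G)$ gives $3\,tl(G)\le 6\,tb(G)$, producing the full chain $d_H(u,v)-2\le d_G(u,v)\le d_H(u,v)+3\,tl(G)\le d_H(u,v)+6\,tb(G)$.

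The only point deserving a moment's care — and the closest thing to an obstacle — is that $\Delta_s(G)$ depends on the chosen $s$, whereas $tl(G)$ and $tb(G)$ do not. This is harmless: Proposition~\ref{lem:tl-cluster-diam} is quantified over \emph{all} vertices $s$, so the bound $\Delta_s(G)\le 3\,tl(G)$ is available precisely for the $s$ that defines $H$, and one does not need to optimize over $s$ or invoke $\Delta(G)=\min_s\Delta_s(G)$. (Alternatively, one could route the same argument through $R_s(G)$ using Proposition~\ref{lem:tb-cluster-radius} and $\Delta_s(G)\le 2R_s(G)$, but the direct path through $tl(G)$ is cleaner.) Thus the proof is a one-line reformulation, exactly as the statement's preamble indicates.
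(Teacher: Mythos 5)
Your proposal is correct and matches the paper's intent exactly: the paper derives this proposition by "reformulating" Proposition~\ref{lem:cluster-diam} via $\Delta_s(G)\le 3\,tl(G)$ from Proposition~\ref{lem:tl-cluster-diam} and $tl(G)\le 2\,tb(G)$, which is precisely your chain. Your remark that the quantification over all $s$ in Proposition~\ref{lem:tl-cluster-diam} makes the $s$-dependence harmless is the right observation and needs no further elaboration.
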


Graphs with small tree-length or small tree-breadth have many other nice properties.
Every $n$-vertex graph with tree-length $tl(G)=\lambda$ has an additive $2\lambda$-spanner with $O(\lambda n+n \log
n)$ edges  and an additive $4\lambda$-spanner with $O(\lambda n)$ edges, both
constructible in polynomial time~\cite{DoDrGaYa}. Every $n$-vertex graph $G$
with $tb(G)=\rho$ has a system of at most $\log_2 n$ collective additive tree
$(2\rho\log_2 n)$-spanners constructible in polynomial time~\cite{DBLP:conf/sofsem/DraganA13}.
Those graphs also enjoy a $6\lambda$-additive routing labeling scheme with $O(\lambda\log^2n)$ bit
labels and $O(\log\lambda)$ time routing protocol~\cite{DBLP:journals/jgaa/Dourisboure05}, and
a $(2\rho\log_2 n)$-additive routing labeling scheme with $O(\log^3n)$ bit
labels and $O(1)$ time routing protocol with $O(\log n)$ message initiation time (by combining results of~\cite{DBLP:conf/sofsem/DraganA13} and~\cite{DraganYC06}). See Section \ref{appl} for some details.

Here we elaborate a little bit more on a connection established in~\cite{DBLP:conf/approx/DraganK11} between the tree-breadth and the tree-stretch of a  graph (and the corresponding tree $t$-spanner problem).

The \emph{tree-stretch} $ts(G)$ of a graph $G=(V,E)$ is the smallest number $t$ such that $G$ admits a {\em spanning} tree $T=(V,E')$ with $d_T(u,v) \leq t  d_G(u,v)$ for every $u,v \in V.$ $T$ is called a {\em tree $t$-spanner} of $G$ and the problem of finding such tree $T$ for $G$ is known as the {\em tree $t$-spanner problem}. Note that as $T$ is a spanning tree of $G$, necessarily $d_G(u,v) \leq d_T(u,v)$ and $E' \subseteq E$. The latter makes the tree-stretch parameter different from the tree-distortion  where new (not from graph) edges can be used to build a tree. It is known that the tree $t$-spanner problem is NP-hard~\cite{CaiC95}. The best known approximation algorithms have approximation ratio of $O(\log n)$~\cite{EmekP08,DBLP:conf/approx/DraganK11}.

The following two results were obtained in~\cite{DBLP:conf/approx/DraganK11}.

\begin{proposition}[\cite{DBLP:conf/approx/DraganK11}] \label{prop:ts-tb}
For every graph $G$, $tb(G) \leq \lceil{ts(G)/2}\rceil$ and $tl(G) \leq ts(G)$.
\end{proposition}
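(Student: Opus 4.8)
The plan is to convert a tree $t$-spanner of $G$, where $t=ts(G)$, into a tree-decomposition of $G$. Fix a spanning tree $T=(V,E')$ of $G$ with $d_T(u,v)\le t\,d_G(u,v)$ for all $u,v\in V$; in particular $d_T(u,v)\le t$ for every edge $uv\in E$. For an edge $e=uv\in E$ write $P_{uv}$ for the unique $T$-path between $u$ and $v$ and $\ell_{uv}:=d_T(u,v)\le t$ for its length. The idea is to ``spread'' every edge $uv$ along $P_{uv}$, cut $P_{uv}$ near its middle, and assign each of $u,v$ to the bags sitting on its half of $P_{uv}$; the host tree of the decomposition will be $T$ (for the tree-breadth bound), respectively a subdivision of $T$ (for the tree-length bound).

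For the tree-breadth bound, take $T$ as the host tree and, for $w\in V$, set
\[
X_w:=\{w\}\ \cup\ \{\,u\in V:\ \exists\,v\ \text{with}\ uv\in E,\ w\in P_{uv},\ d_T(w,u)\le\lceil \ell_{uv}/2\rceil\,\}.
\]
I would then check the three tree-decomposition axioms. Covering vertices is immediate. For covering edges, the vertex $z$ of $P_{uv}$ with $d_T(z,u)=\lfloor\ell_{uv}/2\rfloor$ also has $d_T(z,v)=\lceil\ell_{uv}/2\rceil$, so $\{u,v\}\subseteq X_z$. For the running-intersection property, fix a vertex $a$: the set $\{w:a\in X_w\}$ equals $\{a\}$ together with, for each edge $av\in E$, the initial subpath of $P_{av}$ of length $\lceil\ell_{av}/2\rceil$ starting at $a$; this is a union of subpaths of $T$ all emanating from $a$, hence a subtree of $T$. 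Finally, every element of $X_w$ lies at $T$-distance at most $\lceil t/2\rceil$ from $w$, hence at $G$-distance at most $\lceil t/2\rceil$ from $w$ since $d_G\le d_T$; thus $X_w\subseteq B_{\lceil t/2\rceil}(w,G)$, the breadth of this decomposition is at most $\lceil t/2\rceil$, and $tb(G)\le\lceil ts(G)/2\rceil$ follows.

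For the tree-length bound, the even case is free: by the inequality $tl(G)\le 2\,tb(G)$ noted earlier and the bound just proved, $tl(G)\le 2\lceil ts(G)/2\rceil=ts(G)$ whenever $ts(G)$ is even. For odd $t=ts(G)$ I would rerun the halving construction but now control bag \emph{diameters}; the obstruction is that the middle of an odd-length $T$-path is an edge, not a vertex. I would therefore take as host tree the subdivision $T^{*}$ of $T$ obtained by inserting one new node in the middle of every edge of $T$ (these new nodes are only indices of the decomposition, not vertices of $G$); then the $T^{*}$-path corresponding to each $P_{uv}$ has even length $2\ell_{uv}$, so its midpoint $\mu_{uv}$ is a node of $T^{*}$. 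Distributing the endpoints of each edge along $P_{uv}$ so that $\{u,v\}\subseteq X_{\mu_{uv}}$ and, with a consistent orientation of the cuts, every element of every bag stays within $G$-distance $\lfloor t/2\rfloor$ of a common original endpoint, one verifies (exactly as above) a tree-decomposition of $G$ whose bags have $G$-diameter at most $2\lfloor t/2\rfloor\le t$; hence $tl(G)\le ts(G)$.

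The main obstacle is precisely this odd, parity-tight case of the tree-length bound, where some edge $uv$ of $G$ satisfies $d_T(u,v)=ts(G)$ exactly: one must place $u$ and $v$ in one bag while keeping \emph{every} bag of $G$-diameter at most $ts(G)$ and simultaneously preserving the running-intersection property, and the ``middle of an odd path is an edge'' phenomenon is what forces the subdivision of the host tree (or an equivalent globally consistent choice of which half each endpoint joins). All remaining verifications are routine applications of the triangle inequality together with $d_G\le d_T$.
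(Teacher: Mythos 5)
Your tree-breadth half is correct and complete: using the spanner $T$ itself as the host tree, the bags $X_w$ cover all vertices, the node $z$ of $P_{uv}$ at distance $\lfloor \ell_{uv}/2\rfloor$ from $u$ covers the edge $uv$, the set of bags containing a fixed vertex $a$ is a union of subpaths of $T$ emanating from $a$ and hence a subtree, and $d_G\le d_T$ gives $X_w\subseteq B_{\lceil t/2\rceil}(w,G)$. Note that the paper only cites this proposition from Dragan--K\"ohler and gives no proof of its own, so there is nothing to compare against there; your spanner-to-decomposition conversion is the natural route.

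The tree-length half, however, contains a genuine error in the step that is supposed to close the odd case. You claim that after subdividing $T$ every element of every bag lies within $G$-distance $\lfloor t/2\rfloor$ of a \emph{common} original endpoint, so that bags have diameter at most $2\lfloor t/2\rfloor\le t$. This is false, and the case $t=1$ (where $G$ is a tree and $T=G$) refutes it outright: the bag at the subdivision node of an edge $uv$ must contain both $u$ and $v$, which cannot both lie within distance $\lfloor 1/2\rfloor=0$ of any common vertex, and indeed $2\lfloor t/2\rfloor=t-1$ would give the absurd conclusion $tl(G)\le ts(G)-1$ in general. The construction itself is sound; what is wrong is the accounting. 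The correct bookkeeping is: assign $u$ to every node $w$ of the subdivided path $P^{*}_{uv}$ with $d_{T^{*}}(w,u)\le \ell_{uv}$ (so both endpoints reach the midpoint $\mu_{uv}$). Then for any two elements $x,y$ of the bag at a node $w$ of $T^{*}$ you get $d_{T^{*}}(x,w)\le t$ and $d_{T^{*}}(y,w)\le t$, hence $d_{T^{*}}(x,y)\le 2t$, and since $T^{*}$-distances between original vertices are exactly twice $T$-distances, $d_G(x,y)\le d_T(x,y)\le t$. Equivalently, at a subdivision node of the $T$-edge $xy$ each bag element is within $T$-distance $\lfloor t/2\rfloor$ of $x$ \emph{or} of $y$ (not of a common one), giving diameter at most $\lfloor t/2\rfloor+1+\lfloor t/2\rfloor=t$ for odd $t$. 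With that replacement (and dropping the ``common endpoint'' claim), your argument for $tl(G)\le ts(G)$ goes through; connectivity and edge-covering are verified exactly as in your breadth argument.
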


\begin{proposition}[\cite{DBLP:conf/approx/DraganK11}]\label{prop:ts-log-appr}
For every $n$-vertex graph $G$, $ts(G) \leq 2tb(G) \log_2n$. Furthermore, a spanning tree $T$ of $G$ with $d_T(u,v) \leq 2 tb(G) \log_2n ~d_G(u,v)$, for every $u,v \in V,$ can be constructed in polynomial time.
\end{proposition}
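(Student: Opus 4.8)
\emph{Proof proposal.} Write $\rho:=tb(G)$. The plan is to construct, in polynomial time, a spanning tree $T$ of $G$ with $d_T(u,v)\le 2\rho\log_2 n$ for every \emph{edge} $uv\in E$. This already proves the proposition: for an arbitrary pair $u,v$, summing this estimate along the edges of a shortest path of $G$ from $u$ to $v$ gives $d_T(u,v)\le 2\rho\log_2 n\cdot d_G(u,v)$, so the tree $T$ witnesses $ts(G)\le 2\rho\log_2 n$. The engine of the construction is a balanced ball separator lemma: \emph{for every nonempty $A\subseteq V$ there are a vertex $c$ and a radius $r\le\rho$ such that every connected component of $G-B_r(c,G)$ meets $A$ in at most $|A|/2$ vertices.}

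To prove the lemma I would fix a tree-decomposition $(\{X_i\}_{i\in I},\mathcal{T})$ of $G$ of breadth $\rho$ (one exists by the definition of $tb(G)$) and root its tree $\mathcal{T}$. For $v\in A$ let $\beta(v)$ be the bag containing $v$ that is closest to the root, and weight each bag $X_i$ by $|\{v\in A:\beta(v)=X_i\}|$, so the total weight is $|A|$. A centroid bag $X_i$ of $\mathcal{T}$ for these weights satisfies: every component of $\mathcal{T}-X_i$ carries weight at most $|A|/2$. The connectivity axiom of tree-decompositions forces each connected component of $G-X_i$ to lie inside the union of the bags of a single component of $\mathcal{T}-X_i$ (with $X_i$ removed), so each component of $G-X_i$ meets $A$ in at most $|A|/2$ vertices. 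Letting $v_i$ be a vertex with $X_i\subseteq B_\rho(v_i,G)$, the graph $G-B_\rho(v_i,G)$ is an induced subgraph of $G-X_i$, so every component of $G-B_\rho(v_i,G)$ sits inside a component of $G-X_i$ and hence also meets $A$ in at most $|A|/2$ vertices. Thus $(c,r):=(v_i,\rho)$ works.

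Next I would build $T$ by recursion. The top call has active set $A_0:=V$; a call on an active set $A$ picks $(c,r)$ as in the lemma, lets $Z_1,\dots,Z_k$ be the components of $G-B_r(c,G)$, connects every not-yet-placed vertex of $B_r(c,G)$ to $c$ by shortest paths of $G$ (each of length $\le r\le\rho$, overlaps resolved so that the partial tree stays a forest — this is a fragment of a BFS tree of $G$ rooted at $c$), links each nonempty $A\cap Z_j$ to the part of $T$ already built by one graph edge, and recurses on $A\cap Z_j$. Since $|A\cap Z_j|\le|A|/2$, the recursion has depth at most $\log_2 n$. One checks that the union of all the fragments is a spanning tree $T$ of $G$ and that, along any root-to-node branch, the separator centers of consecutive levels lie at tree-distance at most $2\rho$, while every vertex sits at tree-distance at most $\rho$ from the center of the deepest separator ball containing it. For an edge $uv$: being adjacent, $u$ and $v$ can never be sent to two different components $Z_j$, so they stay in a common active set until the first level whose separator ball contains one of them; from there the path in $T$ from $u$ to $v$ runs through the $\le\log_2 n$ separator centers on the corresponding branch, each level contributing an additive $2\rho$, whence $d_T(u,v)\le 2\rho\log_2 n$. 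For the polynomial-time claim, observe that the recursion uses the separator lemma only through its conclusion: instead of computing a tree-decomposition of $G$ one simply scans all vertices $c$ and radii $r=1,2,\dots$ and stops at the first $B_r(c,G)$ that splits the current active set into pieces of at most half its size; the lemma guarantees this halts with $r\le\rho$, and every other operation (breadth-first search, component decomposition, the scan, the recursion) is polynomial.

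The main obstacle is not the separator lemma but the bookkeeping in the recursive step. The whole recursion must be run inside the \emph{fixed} graph $G$ — the active sets only carry the vertex weights that drive the balancing, and separators are always balls of $G$ — because tree-breadth is not monotone under taking induced subgraphs (a wheel has tree-breadth $1$, yet its rim is a long induced cycle of large tree-breadth), so recursing on the induced subgraphs $G[A\cap Z_j]$ would be unjustified. With the recursion kept inside $G$, the delicate point that remains is to arrange the BFS fragments so that they genuinely assemble into one spanning tree and so that, for each edge $uv$, the induced tree-path from $u$ to $v$ crosses at most $\log_2 n$ "levels", each costing at most $2\rho$; making this accounting precise is the technical heart of the argument.
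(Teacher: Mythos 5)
First, a point of reference: the paper does not prove this proposition itself --- it is quoted from Dragan and K\"ohler~\cite{DBLP:conf/approx/DraganK11} --- so the comparison is against the cited source's argument rather than an in-text proof. Your overall strategy matches that source: reduce the stretch bound to edges $uv\in E$ (correct, by the triangle inequality in $T$ along a shortest $u$--$v$ path of $G$), extract a balanced ball separator of radius at most $\rho=tb(G)$ from a centroid bag of a breadth-$\rho$ tree-decomposition, and recurse to depth $\log_2 n$ paying $2\rho$ per level. Your proof of the separator lemma is correct and complete (the weighting of bags by the vertices of $A$ they own, the centroid bag $X_i$, the standard fact that each component of $G-X_i$ lives in the bags of one component of $\mathcal{T}-X_i$, and the enlargement of $X_i$ to $B_\rho(v_i,G)$ all check out), and the remark that the algorithm can find a suitable ball by brute-force search rather than by computing a tree-decomposition is a genuinely useful observation, as is the warning that tree-breadth is not monotone under induced subgraphs.

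The gap is in the assembly of the spanning tree, which you yourself flag as ``the technical heart'' but do not carry out, and which as described does not work. Two concrete failures. First, at recursion depth at least two you connect each not-yet-placed vertex $u$ of the current ball $B_r(c,G)$ to $c$ by a shortest path of $G$ with ``overlaps resolved so that the partial tree stays a forest''; but $B_r(c,G)$ is a ball of $G$, not of the current component $Z_j$, so that shortest path may pass through a vertex $w$ placed at an earlier level, and once the overlap at $w$ is resolved the tree path from $u$ follows $w$'s existing parent, which need not lead toward $c$ --- the invariant $d_T(u,c)\le\rho$ is lost. Second, the claim that consecutive separator centers on a branch are at tree-distance at most $2\rho$ does not follow from your linking rule: the single graph edge $xy$ attaching $Z_j$ to the built part has $x$ an arbitrary boundary vertex of $Z_j$, and nothing places $x$ inside the next level's ball, so $d_T(c',x)$ and hence $d_T(c',c)$ is unbounded by your invariants. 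Since the final inequality $d_T(u,v)\le 2\rho\log_2 n$ for edges rests on exactly these two unproved tree-distance claims, the construction step needs a different (and more careful) bookkeeping than the one you describe; the separator lemma alone does not yield the proposition.
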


Proposition \ref{prop:ts-log-appr} is obtained by showing that every $n$-vertex graph $G$ with $tb(G)=\rho$ admits a tree $(2\rho\log_2 n)$-spanner constructible in polynomial time. Together with Proposition \ref{prop:ts-tb}, this provides a $\log_2 n$-approximate solution for the tree $t$-spanner problem in general unweighted graphs.

We conclude this section with two other inequalities establishing relations between the tree-stretch and the tree-distortion and hyperbolicity of a graph.

\begin{proposition}[\cite{DraganWG2013}] \label{prop:ts-td}
For every graph $G$,  $tl(G) \leq td(G)\leq ts(G) \leq 2 td(G) \log_2n$.
\end{proposition}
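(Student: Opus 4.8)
The plan is to prove the three inequalities $tl(G)\le td(G)$, $td(G)\le ts(G)$, and $ts(G)\le 2\,td(G)\log_2 n$ in turn.

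The inequality $td(G)\le ts(G)$ is immediate from the definitions. If $T=(V,E')$ with $E'\subseteq E$ is a spanning tree of $G$ realizing the tree-stretch, then $d_G(u,v)\le d_T(u,v)$ holds because $T$ is a spanning subgraph of $G$, while $d_T(u,v)\le ts(G)\,d_G(u,v)$ holds by the choice of $T$; thus $T$ (an unweighted tree with no Steiner points) is an admissible tree in the definition of $td(G)$, and since $ts(G)$ is an integer for an unweighted graph, $td(G)\le ts(G)$.

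For $tl(G)\le td(G)$ I would use the standard construction of a tree-decomposition of small length out of a low-distortion non-contractive tree embedding. Let $\alpha=td(G)$ and let $f\colon V\to T$ realize it, so $d_G(u,v)\le d_T(u,v)\le\alpha\,d_G(u,v)$ for all $u,v\in V$; view $T$ as a geodesic metric tree and subdivide it (adding Steiner points, which changes no distance) so that every image $f(v)$ and, for every edge $uv\in E$, the midpoint $m_{uv}$ of the geodesic between $f(u)$ and $f(v)$ become nodes of $T$, then discard the parts of $T$ outside the convex hull of these points. For each node $i$ of the resulting finite tree set $X_i:=\{v\in V: d_T(f(v),i)\le\alpha/2\}$. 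Then $v\in X_{f(v)}$; for every edge $uv\in E$ we have $d_T(f(u),f(v))\le\alpha$, hence $d_T(f(u),m_{uv})=d_T(f(v),m_{uv})\le\alpha/2$ and $u,v\in X_{m_{uv}}$; for a fixed $v$ the set $\{i:v\in X_i\}$ is the intersection of the (convex) ball of radius $\alpha/2$ around $f(v)$ with the node set of the tree, hence induces a connected subtree, giving the coherence condition; and any two $u,v\in X_i$ satisfy $d_G(u,v)\le d_T(f(u),f(v))\le d_T(f(u),i)+d_T(i,f(v))\le\alpha$, so every bag has $G$-diameter at most $\alpha$. Hence this is a tree-decomposition of length at most $\alpha$, i.e. $tl(G)\le td(G)$.

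Finally, $ts(G)\le 2\,td(G)\log_2 n$ follows by chaining: since $1\le tb(G)\le tl(G)$ holds for every graph $G$ and, by Proposition~\ref{prop:ts-log-appr}, $ts(G)\le 2\,tb(G)\log_2 n$, we get $ts(G)\le 2\,tb(G)\log_2 n\le 2\,tl(G)\log_2 n\le 2\,td(G)\log_2 n$ using the already-proved $tl(G)\le td(G)$. The only real work is the middle inequality, and within it the delicate point is verifying the tree-decomposition axioms — especially coherence — for the ball-bags indexed by the finite node set obtained by subdividing $T$ at the vertex images and edge-midpoints, together with the harmless passage to the real-tree viewpoint needed to locate those midpoints; the two outer inequalities are essentially definitional once the middle one is in hand.
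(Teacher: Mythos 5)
The paper states this proposition only as a citation to \cite{DraganWG2013} and contains no proof of its own; your argument is correct and follows the same standard route as that reference: the ball-bags $X_i=\{v\in V: d_T(f(v),i)\le \alpha/2\}$ over a suitably subdivided target tree give $tl(G)\le td(G)$, the optimal spanning tree witnesses $td(G)\le ts(G)$, and chaining $tb(G)\le tl(G)\le td(G)$ with Proposition~\ref{prop:ts-log-appr} gives the last inequality. The one step you assert without justification is that $ts(G)$ is an integer; this is genuinely needed (since $td(G)$ is defined as a smallest \emph{integer}) and is true because the stretch of a spanning tree of an unweighted graph is attained on an edge $uv\in E$, where it equals the integer $d_T(u,v)$ --- a one-line remark worth adding.
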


\begin{proposition}[\cite{DraganWG2013}] \label{prop:ts-hyperb}
For every $\delta$-hyperbolic graph $G$, $ts(G) \leq O(\delta\log^2n)$.
\end{proposition}

Proposition \ref{prop:ts-td} says that if a graph $G$ is non-contractively embeddable into a tree with distortion $td(G)$ then it is embeddable into a spanning tree with stretch at most $2 td(G) \log_2n$. Furthermore, a spanning tree with stretch at most $2 td(G) \log_2n$ can be constructed in polynomial time.
Proposition \ref{prop:ts-hyperb} says that every $\delta$-hyperbolic graph $G$ admits a tree $O(\delta\log^2n)$-spanner. Furthermore, such a spanning tree for a  $\delta$-hyperbolic graph can be constructed in polynomial time.

\section{Use of Metric Tree-Likeness} \label{appl} 
As we have mentioned earlier, metric tree-likeness of a graph is useful in a number of ways. Among other advantages, it allows to design compact and efficient approximate distance labeling and routing labeling schemes, 
fast and accurate estimation of the diameter and the radius of a graph. In this section, we elaborate more on these applications.
In general, low distortion embedability of a graph $G$ into a tree $T$ allows to solve approximately many distance related problems on $G$ by first solving them on the tree $T$ and then interpreting that solution on $G$. 
\vspace*{-2mm}

\subsection{Approximate distance queries}\vspace*{-2mm}
%
Commonly, when one makes a query concerning a pair of vertices in a
graph (adjacency, distance, shortest route, etc.), one needs to make a
global access to the structure storing that information. A compromise to this approach is to
store enough information locally in a label associated with a vertex
such that the query can be answered using only the information in
the labels of two vertices in question and nothing else. Motivation
of localized data structure in distributed computing is surveyed and
widely discussed in~\cite{Pel,GaPe}.

Here, we are mainly interested in the distance and routing labeling
schemes,  introduced by Peleg (see, e.g.,
\cite{Pel}). 
{\it Distance labeling schemes} 
are schemes that
label the vertices of a graph with short labels in such a way that
the distance between any two vertices $u$ and $v$ can be determined
or estimated efficiently by merely inspecting the labels of $u$ and $v$, without using any
other information. {\it Routing labeling schemes} 
are schemes that label the vertices of a graph with short labels in
such a way that given the label of a source vertex and the label of
a destination, it is possible to compute efficiently  the port
number of the edge from the source that heads in the direction of
the destination.
%
%

It is known that $n$-vertex trees enjoy a distance labeling  scheme where each vertex is assigned a $O(\log^2 n)$-bit label such that given labels of two vertices the distance between them can be inferred in constant time~\cite{DBLP:conf/wg/Peleg99}.
We can use for our datasets their canonic trees to compactly and distributively encode their approximate distance information. Given a graph dataset $G$, we first compute in linear time its canonic tree $H$. Then, we preprocess $H$ in $O(n\log n)$ time (see~\cite{DBLP:conf/wg/Peleg99}) to assign each vertex $v\in V$ an $O(\log^2 n)$-bit distance label. Given two vertices $u,v\in V$, we can compute in $O(1)$ time the distance $d_H(u,v)$ from their labels and output this distance as a good  estimate for the distance between $u$ and $v$ in $G$.

\commentout{
If a graph has good embedability to a tree metric. This allows to use the host tree metric to approximate distances in the original graph. The best results for embedding of a graph into a tree metric with multiplicative distortion has distortion of $6td(G)$~\cite{DBLP:conf/compgeom/ChepoiDEHV08}. Combining these two results,we can have for a graph an approximate distance labeling  scheme of $O(\log^3 n)$ bits label size and distances no more than $6td(G)$ time the original distance. If allow more than one tree to approximate the original graph distances distances and using the following result of~\cite{DBLP:conf/sofsem/DraganA13} that a graph having $td(G)$ is embedable into a collection of $\log n$ trees that additively approximate the distance with $O(td(G)\log n)$. Combining these two results, one could have an approximate distance labeling scheme of $\log^3 n$ bit label size and approximate distance no more than $O(td(G)\log n)$ plus the original distance.
}

\commentout{
\begin{table}
\footnotesize
\begin{center}
\begin{tabular}{ | c | c | c | c | c |c | c| }
    \hline
     \multicolumn{1}{|c|}{Graph}   &    \multicolumn{6}{c|}{distortion} \\
  $G=V,E)$ & \multicolumn{1}{c}{= 1} & \multicolumn{1}{c}{$<$ 1.2} &\multicolumn{1}{c}{$<$ 1.3} & \multicolumn{1}{c}{$<$ 1.5} & \multicolumn{1}{c}{$<$ 2}& \multicolumn{1}{c|}{$<$ 2.2} \\  \hline
  PPI & 20.41 & 37.68 & 47.90 & 65.93 & 90.68 & 96.37 \\ \hline
  Yeast &  31.51 & 38.45 & 53.22 & 72.30 & 91.03 & 98.55 \\ \hline
  DutchElite  & 23.13  & 27.99 & 42.97 &64.60 & 88.71 &  95.44\\ \hline
  EPA  & 44.87 & 50.83 & 65.50 & 76.52 & 91.82 & 98.68 \\ \hline
  EVA  & 52.92 & 73.37 & 82.68 & 92.83 & 99.12 & 99.88\\ \hline
  California & 30.25 & 40.21 & 51.89 & 64.53 & 88.97 & 98.06 \\ \hline
  Erd\"os  &  88.34 & 88.34 & 89.84 & 96.99 & 99.55 & 99.98 \\ \hline
  Routeview  & 42.28 & 44.75 & 58.17 & 81.94 & 96.40 & 99.85 \\ \hline
  Homo  release 3.2.99 & 72.01 & 72.13 & 73.48 & 79.08 & 90.79 & 99.97 \\ \hline
  AS\_Caida\_20071105  & 43.15 & 46.60 & 62.39 & 84.54 & 95.68 & 99.90 \\ \hline
  Dimes 3/2010  & 49.84  & 50.06 & 56.77 & 89.30 & 97.05 & 99.99 \\ \hline
  Aqualab 12/2007- 09/2008  &  32.54 & 33.23 &  44.61& 76.46 & 95.93 & 99.98 \\ \hline
  AS\_Caida\_20120601  & 57.15  & 59.57 & 71.82 & 89.58 & 98.65 & 99.98 \\ \hline
  itdk0304 &  4.60 & 15.18 &23.67  & 42.54 & 81.98 & 93.55 \\ \hline
  DBLB-coauth& 3.59  &  12.08 &17.60  & 30.64 & 67.92 & 83.10 \\ \hline
  Amazon & 0.63 & 2.67 & 4.57 & 10.16 & 33.10 & 46.53\\
  \hline
  \end{tabular}
\end{center}
\caption{Embedding of a graph dataset into its canonic tree $H$: percentage of vertex pairs whose distance was distorted only up-to a given value.}
\label{tab:treeHDist}
\vspace*{-0.5cm}\end{table}

 }

\begin{figure}[htb]
        \centering
        \footnotesize
        \vspace*{-2mm}
         \subcaptionbox{Percentage of vertex pairs whose distance was distorted only up-to a given value.\label{tab:treeHDist}}
            [0.55\textwidth]{
                \begin{tabular}{ | c | c | c | c | c |c | c| }
                    \hline
                     \multicolumn{1}{|c|}{Graph}   &    \multicolumn{6}{c|}{distortion} \\
                      $G=V,E)$ & \multicolumn{1}{c}{= 1} & \multicolumn{1}{c}{$<$ 1.2} &\multicolumn{1}{c}{$<$ 1.3} & \multicolumn{1}{c}{$<$ 1.5} & \multicolumn{1}{c}{$<$ 2}& \multicolumn{1}{c|}{$<$ 2.2} \\  \hline  \hline
                    PPI & 20.41 & 37.68 & 47.90 & 65.93 & 90.68 & 96.37 \\ \hline
                    Yeast &  31.51 & 38.45 & 53.22 & 72.30 & 91.03 & 98.55 \\ \hline
                    DutchElite  & 23.13  & 27.99 & 42.97 &64.60 & 88.71 &  95.44\\ \hline
                    EPA  & 44.87 & 50.83 & 65.50 & 76.52 & 91.82 & 98.68 \\ \hline
                    EVA  & 52.92 & 73.37 & 82.68 & 92.83 & 99.12 & 99.88\\ \hline
                    California & 30.25 & 40.21 & 51.89 & 64.53 & 88.97 & 98.06 \\ \hline
                    Erd\"os  &  88.34 & 88.34 & 89.84 & 96.99 & 99.55 & 99.98 \\ \hline
                    Routeview  & 42.28 & 44.75 & 58.17 & 81.94 & 96.40 & 99.85 \\ \hline
                    Homo  release 3.2.99 & 72.01 & 72.13 & 73.48 & 79.08 & 90.79 & 99.97 \\ \hline
                    AS\_Caida\_20071105  & 43.15 & 46.60 & 62.39 & 84.54 & 95.68 & 99.90 \\ \hline
                    Dimes 3/2010  & 49.84  & 50.06 & 56.77 & 89.30 & 97.05 & 99.99 \\ \hline
                    Aqualab 12/2007- 09/2008  &  32.54 & 33.23 &  44.61& 76.46 & 95.93 & 99.98 \\ \hline
                    AS\_Caida\_20120601  & 57.15  & 59.57 & 71.82 & 89.58 & 98.65 & 99.98 \\ \hline
                    itdk0304 &  4.60 & 15.18 &23.67  & 42.54 & 81.98 & 93.55 \\ \hline
                    DBLB-coauth& 3.59  &  12.08 &17.60  & 30.64 & 67.92 & 83.10 \\ \hline
                    Amazon & 0.63 & 2.67 & 4.57 & 10.16 & 33.10 & 46.53\\
                    \hline
                    \end{tabular}
                }~
                \subcaptionbox{Accumulative frequency chart.\label{fig:distortion-chart}}
             [0.44\textwidth]
            {\includegraphics[width=0.44\textwidth, height=7.22cm]{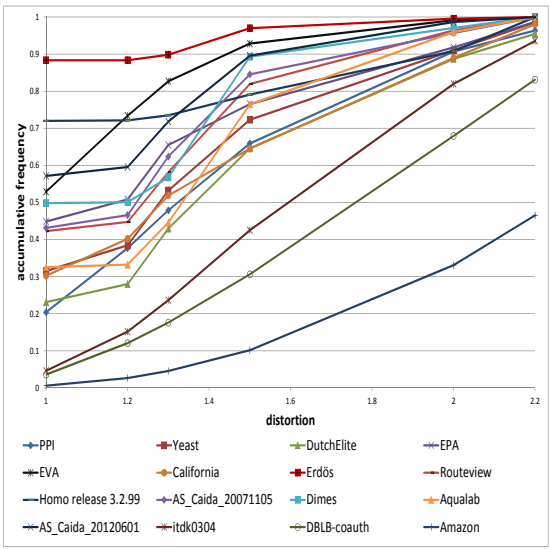}}
\caption{Distortion distribution for embedding of a graph  dataset into its canonic tree $H$.}\label{fig:distortion-distirbution}
\end{figure}

On Fig. \ref{fig:distortion-distirbution}, we  demonstrate how accurate canonic trees represent pairwise distances in our datasets. For a given number $\epsilon\geq 1$, we show how many vertex pairs had a distortion less than $\epsilon$, i.e., pairs $u,v\in V$ with $\max\{\frac{d_H(u,v)}{d_G(u,v)}, \frac{d_G(u,v)}{d_H(u,v)}\}<\epsilon.$
We can see that $H$ approximates distances for most vertex pairs with a high level of accuracy. 
Exact graph distances were preserved in $H$ for at least 40\% of pairs in 8 datasets (EPA, EVA, Erd\"os, Routeview, Homo, AS\_Caida\_20071105, Dimes 3/2010 and AS\_Caida\_20120601). At least 50\% of pairs of 6 datasets have distance distortion in $H$ less than 1.2. At least 60\% of pairs for 6 datasets have distance distortion less than 1.3. At least 70\% of pairs of 10 datasets have distance distortion less than 1.5. At least 80\% of pairs of 14 datasets have distance distortion less than 2. At least 90\% of pairs of 14 datasets have distance distortion less than 2.2. For the DBLB-coauth dataset, 80\% (90\%) of pairs embed into $H$ with distortion no more than 2.2 (2.4, respectively; not shown on table). For the Amazon dataset, 80\% (90\%) of pairs embed into $H$ with distortion no more than 3.2 (3.8, respectively; not shown on table).

Hence, using embeddings of our datasets into their canonic trees, we obtain a compact and efficient approximate distance labeling scheme for them. Each vertex of a graph dataset $G$ gets $O(\log^2 n)$-bit label from the canonic tree and the distance between any two vertices of $G$ can be computed with a good level of accuracy in constant time from their labels only.
\vspace*{-2mm}

\subsection{Approximating optimal routes}\vspace*{-2mm}
First we formally define approximate routing labeling schemes. A family $\Re$
of graphs is said to have an {\it $l(n)$ bit $(s,r)$-approximate
routing labeling scheme} if there exist a function $L,$ labeling the
vertices of each $n$-vertex graph in $\Re$  with distinct labels of
up to $l(n)$ bits, and  an efficient algorithm/function $f,$ called
the {\em routing decision} or {\it routing protocol}, that given the
label of a current vertex $v$ and the label of the destination
vertex (the header of the packet), decides in time polynomial in the
length of the given labels and using only those two labels, whether
this packet has already reached its destination, and if not, to
which neighbor of $v$ to forward the packet. Furthermore, the
routing path from any source $s$ to any destination $t$ produced  by
this scheme in a graph $G$ from $\Re$ must have the length at most
$s\cdot d_G(s,t)+r$. For simplicity,  $(1,r)$-approximate labeling
schemes (distance or routing) are called {\it $r$-additive labeling
schemes}, and $(s,0)$-approximate labeling schemes are called {\it
$s$-multiplicative labeling schemes}.

A very good routing labeling  scheme exists for trees~\cite{DBLP:conf/spaa/ThorupZ01}. An $n$-vertex tree can be preprocessed in $O(n\log n)$ time so that each vertex is assigned an $O(\log n)$-bit routing label. Given the label of a source vertex and the label of a destination, it is possible to compute in constant time   the port number of the edge from the source that lays on the (shortest) path to the destination.

Unfortunately, a canonic tree $H$ of a graph $G$ is not suitable for approximately routing in $G$; $H$ may have artificial edges (not coming from $G$) and therefore a path of $H$ from a source to a destination may not be available for routing in $G$. To reduce the problem of routing in $G$ to routing in a tree $T$, tree $T$ needs to be a spanning tree of $G$. Hence, a spanning tree $T$ of $G$ with minimum stretch (i.e., a tree $t$-spanner of $G$ with $t=ts(G)$) would be a perfect choice. Unfortunately, finding a tree $t$-spanner of a graph with minimum $t$ is an NP-hard problem.

For our graph datasets, one can exploit the facts that they have small tree-breadth/tree-length and/or small hyperbolicity.

If the tree-breadth of an $n$-vertex  graph $G$ is $\rho$ then, by a result from~\cite{DBLP:conf/approx/DraganK11}, $G$ admits a tree $(2\rho\log_2 n)$-spanner constructible in polynomial time. Hence, $G$ enjoys a $(2\rho\log_2 n)$-multiplicative routing labeling scheme with $O(\log n)$ bit labels and $O(1)$ time routing protocol (routing is essentially done  in that tree spanner). Another result for graphs with $tb(G)=\rho$, useful for designing routing labeling schemes, is presented in~\cite{DBLP:conf/sofsem/DraganA13}. It states that every $n$-vertex graph $G$ with $tb(G)=\rho$ has a system of at most $\log_2 n$ collective additive tree
$(2\rho\log_2 n)$-spanners, i.e., a system $\cal{T}$ of at most $\log_2 n$ spanning trees of $G$ such that for any two vertices $u,v$ of $G$ there is a tree $T$ in $\cal{T}$ with $d_T(u,v)\leq d_G(u,v)+ 2\rho\log_2 n$. Furthermore, such a system $\cal{T}$ for $G$ can be constructed in polynomial time~\cite{DBLP:conf/sofsem/DraganA13}. By combining this with a result from~\cite{DraganYC06}, we obtain that every $n$-vertex graph $G$ with $tb(G)=\rho$
enjoys a $(2\rho\log_2 n)$-additive routing labeling scheme with $O(\log^3n)$ bit
labels and $O(1)$ time routing protocol with $O(\log n)$ message initiation time. The approach of~\cite{DraganYC06} is to assign to each vertex of $G$ a label with $O(\log^3n)$ bits (distance and routing labels coming from $\log_2 n$ spanning trees) and then, using the label of source vertex $v$ and the label of destination vertex $u$, identify in $O(\log n)$ time the best spanning tree in $\cal{T}$ to route from $v$ to $u$.

If the tree-length of an $n$-vertex  graph $G$ is $\lambda$ then, by result from~\cite{DBLP:journals/jgaa/Dourisboure05}, $G$ enjoys a $6\lambda$-additive routing labeling scheme with $O(\lambda\log^2n)$ bit labels and $O(\log\lambda)$ time routing protocol.


If the hyperbolicity of an $n$-vertex  graph $G$ is $\delta$ then, by result from~\cite{ChDrEsRout}, $G$ enjoys an $O(\delta\log
n)$-additive routing labeling scheme with $O(\delta\log^2n)$ bit labels and $O(\log\delta)$ time routing protocol.
Note that for any graph $G$, the hyperbolicity of $G$ is at most its tree-length \cite{DBLP:conf/compgeom/ChepoiDEHV08}.


Thus, for our graph datasets, there exists a very compact labeling scheme (at most $O(\log^2 n)$ or $O(\log^3 n)$ bits per vertex) that encodes logarithmic length routes between any pair of vertices, i.e., routes of length at most $d_G(u,v)+\min\{O(\delta\log n), 6\lambda, 2\rho\log_2 n\}\leq diam(G)+O(\log n)\leq O(\log n)$ for each vertex pair $u,v$ of $G$. The latter implies very good navigability of our graph datasets. Recall that, for our graph datasets, $diam(G)\leq O(\log n)$ holds.

\vspace*{-2mm}

\begin{table}\vspace*{-2mm}
\footnotesize
\begin{center}
\begin{tabular}{ | c | c | c | c | c|}
    \hline
  Graph     & diameter  &  radius    & $\#$ of  BFS scans & estimated radius     \\
  $G=(V,E)$ & $diam(G)$ &  $rad(G)$  & needed to get      & or $ecc(\cdot)$ of a \\
            &           &            &   $diam(G)$        & middle vertex        \\ \hline\hline
  PPI& 19 & 11 & 3 & 12\\ \hline
  Yeast & 11 & 6 & 3 & 6 \\ \hline
  DutchElite & 22  & 12 & 4 & 13\\ \hline
  EPA & 10 & 6 & 2 & 7\\ \hline
  EVA &18 & 10 & 2 & 10\\ \hline
  California & 13  & 7 & 2 & 8 \\ \hline
  Erd\"os & 4  & 2 & 2 & 3 \\ \hline
  Routeview & 10 & 5 & 2 & 5\\ \hline
  Homo  release 3.2.99 & 10 & 5 & 2 & 6 \\ \hline
  AS\_Caida\_20071105 & 17 & 9 & 2 & 9\\ \hline
  Dimes 3/2010& 8  & 4 & 2 & 5\\ \hline
  Aqualab 12/2007- 09/2008 & 9  & 5 & 2 & 5 \\ \hline
  AS\_Caida\_20120601 & 10 & 5 & 2 & 5 \\ \hline
  itdk0304& 26  & 14 & 2 & 15 \\ \hline
  DBLB-coauth & 23 & 12 & 2 & 14\\ \hline
  Amazon & 47  & 24 & 2 & 26 \\
  \hline
\end{tabular}
\end{center}
\caption{Estimation of diameters and radii. }
\label{tab:diamRadius}
\vspace*{-0.5cm}\end{table}

\subsection{Approximating diameter and radius}\vspace*{-2mm}
Recall that the {\em eccentricity} of a vertex $v$ of a graph $G$, denoted by $ecc(v)$, is the maximum distance from $v$ to any other vertex of $G$, i.e., $ecc(v):=\max_{u\in V} d_G(v,u)$. The {\em diameter} $diam(G)$ of $G$ is the largest eccentricity of a vertex in $G$, i.e., $diam(G):=\max_{v\in V}ecc(v)= \max_{v,u\in V}d_G(u,v)$.  The {\em radius} $rad(G)$ of $G$ is the smallest eccentricity of a vertex in $G$, i.e., $rad(G):=\min_{v\in V}ecc(v)$. A vertex $c$ of $G$ with $ecc(v)=rad(G)$ (i.e., a smallest eccentricity vertex) is called a {\em central vertex} of $G$. The {\em center} $C(G)$ of $G$ is the set of all central vertices of $G$. Let also $F(v):=\{u\in V: d_G(v,u)=ecc(v)\}$ be the set of vertices of $G$ furthest from $v$.

In general (even unweighted) graphs, it is still an open problem  whether the diameter and/or the radius of a graph $G$ can be computed faster than the time needed to compute the entire distance matrix of $G$ (which requires $O(n m)$ time for a general unweighted graph).
On the other hand, it is known that both, the diameter and the radius, of a tree $T$ can be calculated in linear time. That can be done by using 2 Breadth-First-Search (BFS) scans as follows. Pick an arbitrary vertex $u$ of $T$. Run a
BFS starting from $u$ to find $v\in F(u).$ Run a second BFS starting from $v$ to find $w\in F(v).$
Then $d_T(v,w)=diam(T),$ i.e., $v,w$ is a {\em diametral pair} of $T$, and $rad(G)=\lfloor{(d_T(v,w)+1)/2}\rfloor$. To find
the center of $T$ it suffices to take one or two
adjacent middle vertices of the $(v,w)$-path of $T$.

Interestingly, in~\cite{DBLP:conf/compgeom/ChepoiDEHV08}, Chepoi et al.  established that this approach of 2 BFS-scans can be adapted to
provide fast (in linear time) and accurate approximations of the diameter, radius,
and center of any finite set $S$ of $\delta$-hyperbolic geodesic spaces and graphs. In particular, for a $\delta$-hyperbolic graph $G$, it was shown that if $v\in F(u)$ and $w\in F(v),$ then
$d_G(v,w)\ge diam(G)-2\delta$ and $rad(G)\le \lfloor{(d_G(v,w)+1)/2}\rfloor+3\delta.$ Furthermore, the center $C(G)$ of $G$ is contained in the ball
of radius $5\delta+1$ centered at a middle vertex $c$ of any shortest path connecting $v$ and $w$ in $G$. 


Since our graph datasets have small hyperbolicities, according to~\cite{DBLP:conf/compgeom/ChepoiDEHV08}, few (2, 3, 4, ...) BFS-scans, each next starting at a vertex last visited by the previous scan) should provide a pair of vertices $x$ and $y$ such that $d_G(x,y)$ is close to the diameter $diam(G)$ of $G$. Surprisingly (see Table~\ref{tab:diamRadius}), few BFS-scans were sufficient to get exact diameters of all of our datasets: for 13 datasets, 2 BFS-scans (just like for trees) were sufficient to find the exact diameter of a graph. Two datasets needed 3 BFS-scans to find the diameter, and only one dataset required 4 BFS-scans to get the diameter. We also computed the eccentricity of a middle vertex of a longest shortest path produced by these few BFS-scans and reported this eccentricity as an estimation for the graph radius.
It turned out that the eccentricity of that middle vertex was equal to the exact radius for 6 datasets, was only one apart from the exact radius for 8 datasets, and only for 2 datasets was two units apart from the exact radius.


\section{Conclusion} \label{sec:concl} 
Based on solid theoretical foundations, we presented strong evidences that a number of real-life networks, taken from different domains like Internet measurements, biological datasets, web graphs, social and collabora\-tion networks, exhibit metric tree-like structures.
We investigated a few graph parameters, namely, the tree-distortion and the tree-stretch, the tree-length and the tree-breadth, the Gromov's hyperbolicity, the cluster-diameter and the cluster-radius in a layering partition of a graph, which capture and quantify this phenomenon of
being metrically close to a tree. Recent advances in theory allowed us to calculate or accurately estimate these parameters for sufficiently large networks. All these parameters are at most constant or (poly)logarithmic factors apart from each other. Specifically, graph parameters $td(G)$, $tl(G)$, $tb(G)$, $\Delta_s(G)$, $R_s(G)$ are within small constant factors from each other. Parameters $ts(G)$ and $\delta(G)$ are within factor of at most $O(\log n)$ from $td(G)$, $tl(G)$, $tb(G)$, $\Delta_s(G)$, $R_s(G)$. Tree-stretch $ts(G)$ is within factor of at most $O(\log^2 n)$ from hyperbolicity $\delta(G)$. One can summarize those relationships with 
the following chains of inequalities: 
$$\delta(G)\leq \Delta_s(G)\leq O(\delta(G)\log n);~~R_s(G)\leq \Delta_s(G)\leq 2R_s(G);~~tb(G)\leq tl(G)\leq 2tb(G);$$
$$\delta(G)\leq tl(G)\leq td(G)\leq ts(G)\leq 2tb(G)\log_2 n\leq O(\delta(G)\log^2 n);$$
$$tl(G)-1\leq \Delta_s(G)\leq 3tl(G)\leq 3td(G)\leq 3(2\Delta_s(G)+2);$$
$$tb(G)-1\leq R_s(G)\leq 3tb(G)\leq 3\lceil{ts(G)/2}\rceil.$$
If one of these parameters or its average version has 
small value for a large scale network, we say that that network has a metric tree-like structure. Among these parameters theoretically smallest ones are $\delta(G)$, $R_s(G)$  and $tb(G)$ ($tb(G)$ being at most $R_s(G)+1$). Our experiments showed that average versions of $\Delta_s(G)$ and of $td(G)$ have  also very small values for the investigated graph datasets.

In Table \ref{tab:allMeasures}, we provide a summary of metric 
tree-likeness measurements calculated for our datasets. Fig.~\ref{fig:tree-likeness-1charts}  shows four important metric tree-likeness measurements (scaled) in comparison. Fig.~\ref{fig:tree-likeness-charts} gives pairwise dependencies between those measurements (one as a function of another).


\begin{table}
\footnotesize
\begin{center}
\begin{tabular}{ | c | c | c | c| c | c | c | c | c | c|}  
    \hline
  Graph     & diameter  &  radius       & cluster-         & average    & $\delta(G)$ & Tree $H$   & $H_{\ell}$ & $H'_{\ell}$ &cluster-\\
  $G=(V,E)$ & $diam(G)$ & $rad(G)$      & diameter         &   diameter    &           & average    & average  & average & radius     \\
            &           &               & $\Delta_s(G)$    & of clusters in  &           & distortion\textsuperscript{*}
            & distortion & distortion & $R_s(G)$\\
        & & & & $\mathcal{LP}(G,s)$  &   & (round.)  &  &  &  \\      \hline \hline
  PPI                       & 19 & 11 & 8  & ~~0.118977384   & 3.5 & 1.38471  & 5.70566 & 5.29652  & 4\\ \hline
  Yeast                     & 11 & 6  & 6  & ~~0.119575699   & 2.5 & 1.32182  & 4.37781 & 3.79318  & 4\\ \hline
  DutchElite                & 22 & 12 & 10 & ~~0.070211316   & 4   & 1.41056  & 5.45299 & 6.53269  & 6\\ \hline
  EPA                       & 10 & 6  & 6  & ~~0.06698375    & 2.5 & 1.26507  & 4.50619 & 4.06901  & 4\\ \hline
  EVA                       & 18 & 10 & 9  & ~~0.031879981   & 1   & 1.13766  & 5.83084 & 7.77752  & 5\\ \hline
  California                & 13 & 7  & 8  & ~~0.092208234   & 3   & 1.35380  & 4.15785 & 4.98668  & 4\\ \hline
  Erd\"os                   & 4  & 2  & 4  & ~~0.001113232   & 2   & 1.04630  & 3.08843 & 3.06705  & 2\\ \hline
  Routeview                 & 10 & 5  & 6  & ~~0.063264697   & 2.5 & 1.23716  & 4.28302 & 4.80363  & 3\\ \hline
  Homo  release 3.2.99      & 10 & 5  & 5  & ~~0.03432595    & 2   & 1.18574  & 4.64504 & 3.96703  & 3\\ \hline
  AS\_Caida\_20071105       & 17 & 9  & 6  & ~~0.056424679   & 2.5 & 1.22959  & 4.24314 & 4.76795  & 3\\ \hline
  Dimes 3/2010              & 8  & 4  & 4  & ~~0.056582633   & 2   & 1.19626  & 3.43833 & 3.35917  & 2\\ \hline
  Aqualab 12/2007- 09/2008  & 9  & 5  & 6  & ~~0.05826733    & 2   & 1.28390  & 4.23183 & 4.54116  & 3\\ \hline
  AS\_Caida\_20120601       & 10 & 5  & 6  & ~~0.055568105   & 2   & 1.16005  & 4.10547 & 4.53051  & 3\\ \hline
  itdk0304                  & 26 & 14 & 11 & ~~0.270377048   & --  & 1.57126  & 5.370078& 5.710122 & 6\\ \hline
  DBLB-coauth               & 23 & 12 & 11 & ~~0.45350002    & --  & 1.74327  & 5.57869 & 5.12724  & 7\\ \hline
  Amazon                    & 47 & 24 & 21 & ~~0.489056144   & --  & 2.47109  & 8.81911 & 7.87004  & 12\\
  \hline
  \multicolumn{10}{l}{\textsuperscript{*}\footnotesize{
  $=\frac{\mbox{avg. distortion right}\times \# \mbox{right pairs }+\mbox{ avg. distortion left}\times\# \mbox{left pairs }+ \# \mbox{undistorted pairs}}{\binom{n}{2}}$}}
\end{tabular}
\end{center}
\caption{Summary of tree-likeness measurements.}
\label{tab:allMeasures}
\vspace*{-0.5cm}\end{table}


From the experiment results we observe that in almost all cases the
measurements seem to be monotonic with respect to each others. The smaller one measurement is for a given dataset, the smaller the other measurements are. There are also a few exceptions. For example, EVA dataset has relatively large cluster-diameter, $\Delta_s(G)=9$, but small hyperbolicity, $\delta(G)=1$. On the other hand, Erd\"os dataset has $\Delta_s(G)=4$ while its hyperbolicity $\delta(G)$ is equal to 2 
(see Figure \ref{fig:delta-clusDiam}). Yet Erd\"os dataset has better embedability (smaller average distortions) to trees $H, H_{\ell}$ and $H'_{\ell}$ than that of EVA, suggesting that the (average) cluster-diameter may have greater impact on the embedability into trees
$H, H_{\ell}$ and $H'_{\ell}$.
%
Comparing the measurements of Erd\"os vs. Homo  release 3.2.99, we observe that both have the same hyperbolicity 2, but Erd\"os has
better embedability (average distortion) to trees $H, H_{\ell},H'_{\ell}$. This could be explained by smaller $\Delta_s(G)$ and average diameter of clusters in  Erd\"os dataset.
Comparing measurements of PPI vs. California (the same holds for AS\_Caida\_20071105 vs. AS\_Caida\_20120601), both have same $\Delta_s(G)$ and $R_s(G)$ values but California (AS\_Caida\_20120601) has smaller hyperbolicity and average diameter of clusters. We also observe that the datasets Routeview and AS\_Caida\_20071105 have same values of $\Delta_s(G)$, $R_s(G)$ and $\delta(G)$ but AS\_Caida\_20071105 has a relatively smaller average diameter of clusters. This could explain why AS\_Caida\_20071105 has relatively better embedability to $H, H_{\ell}$ and $H'_{\ell}$ than Routeview.
We can see that the difference in average diameters of clusters was relatively small, 
resulting 
in small difference in embedability.

From these observations, one can suggest that for classification of our datasets all these tree-likeness measurements are important, they collectively capture and explain metric tree-likeness of them. We suggest that metric tree-likeness measurements in conjunction with other local characteristics
of networks, such as the degree distribution and clustering coefficients, provide a more complete unifying picture of networks.  

\begin{figure}[htb]
\begin{center} 
 \begin{minipage}[b]{13cm}
        \centering
                \includegraphics[width=\textwidth]{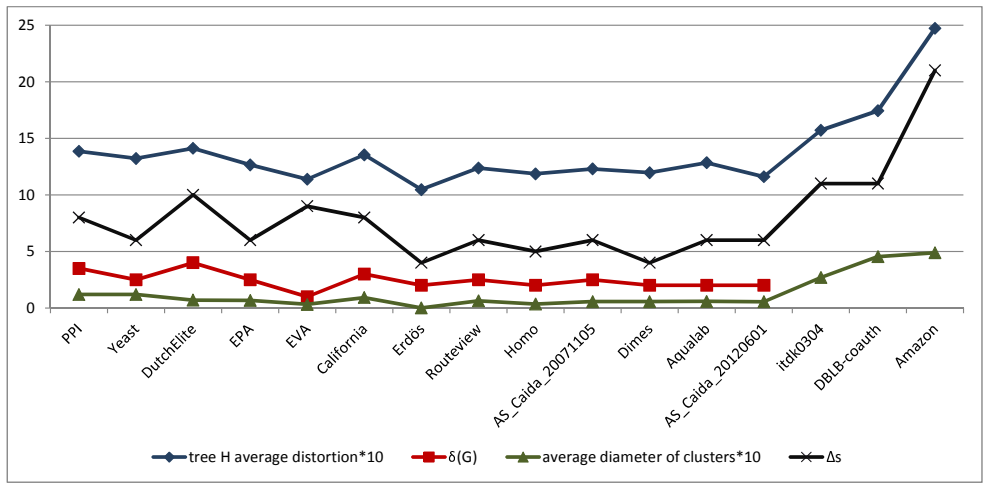}
               \caption{Four tree-likeness measurements scaled.}\label{fig:tree-likeness-1charts}
\end{minipage}
\end{center}
\end{figure}

\begin{figure}
\vspace*{-1.4cm}
        \centering
        \begin{subfigure}[b]{0.450\textwidth}
                \includegraphics[width=\textwidth]{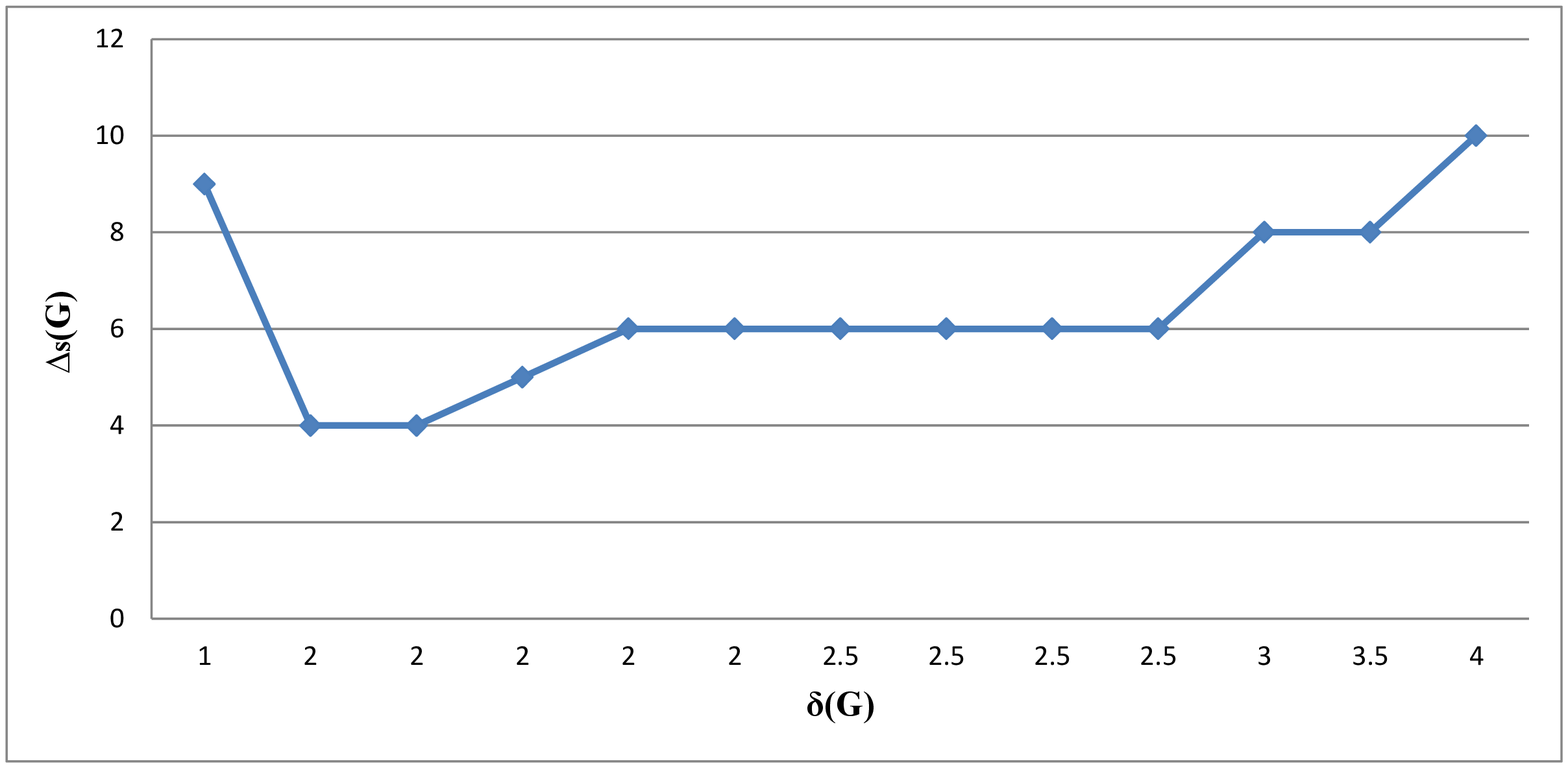}
                \caption{hyperbolicity $\delta(G)$ vs. cluster-diameter $\Delta_s(G).$}
                \label{fig:delta-clusDiam}
        \end{subfigure}
        \begin{subfigure}[b]{0.450\textwidth}
                \includegraphics[width=\textwidth]{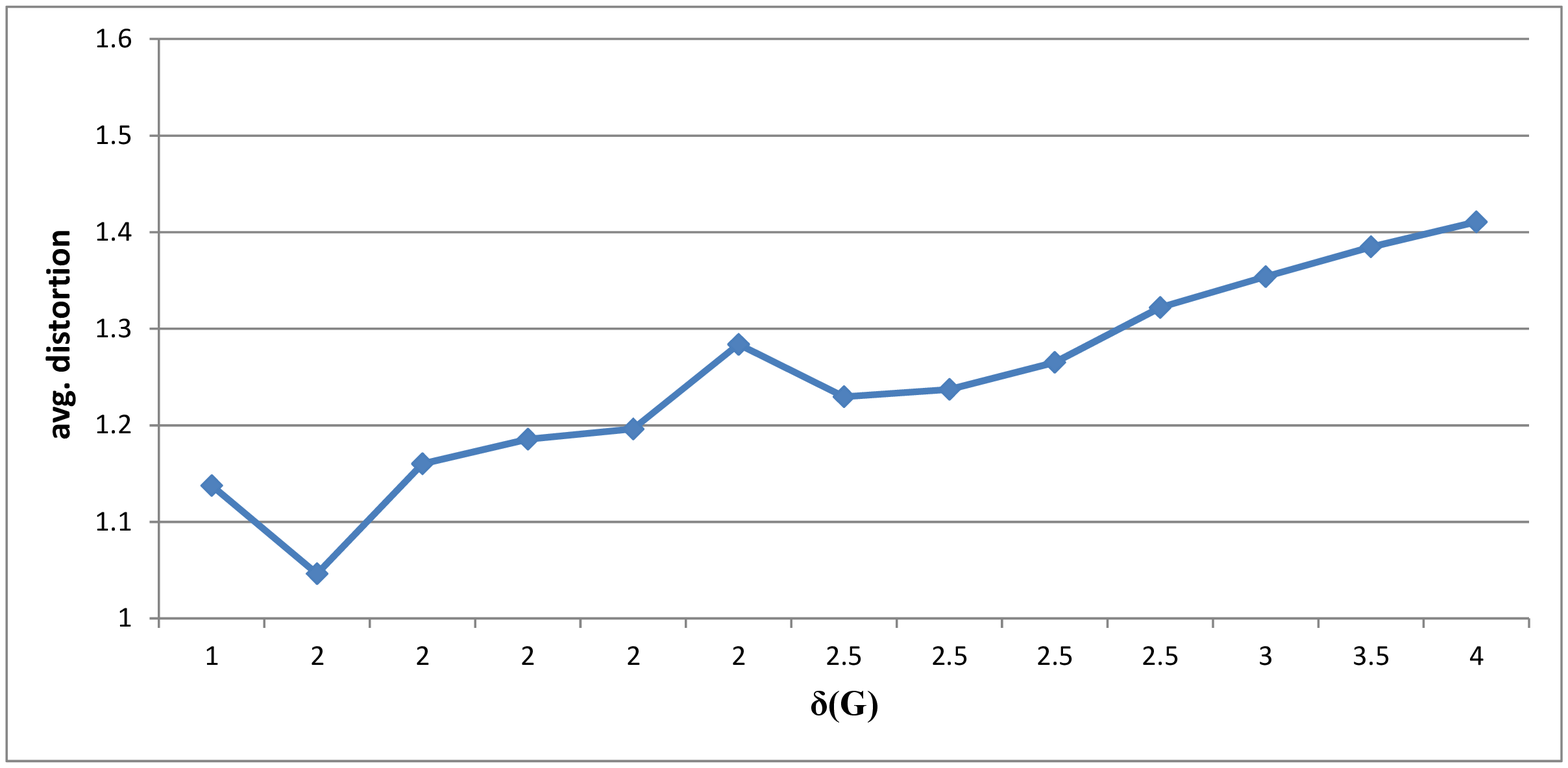}
                \caption{hyperbolicity $\delta(G)$ vs. avg. distortion of $H.$}
                \label{fig:deltaHdist}
        \end{subfigure}
         \begin{subfigure}[b]{0.450\textwidth} 
                \includegraphics[width=\textwidth]{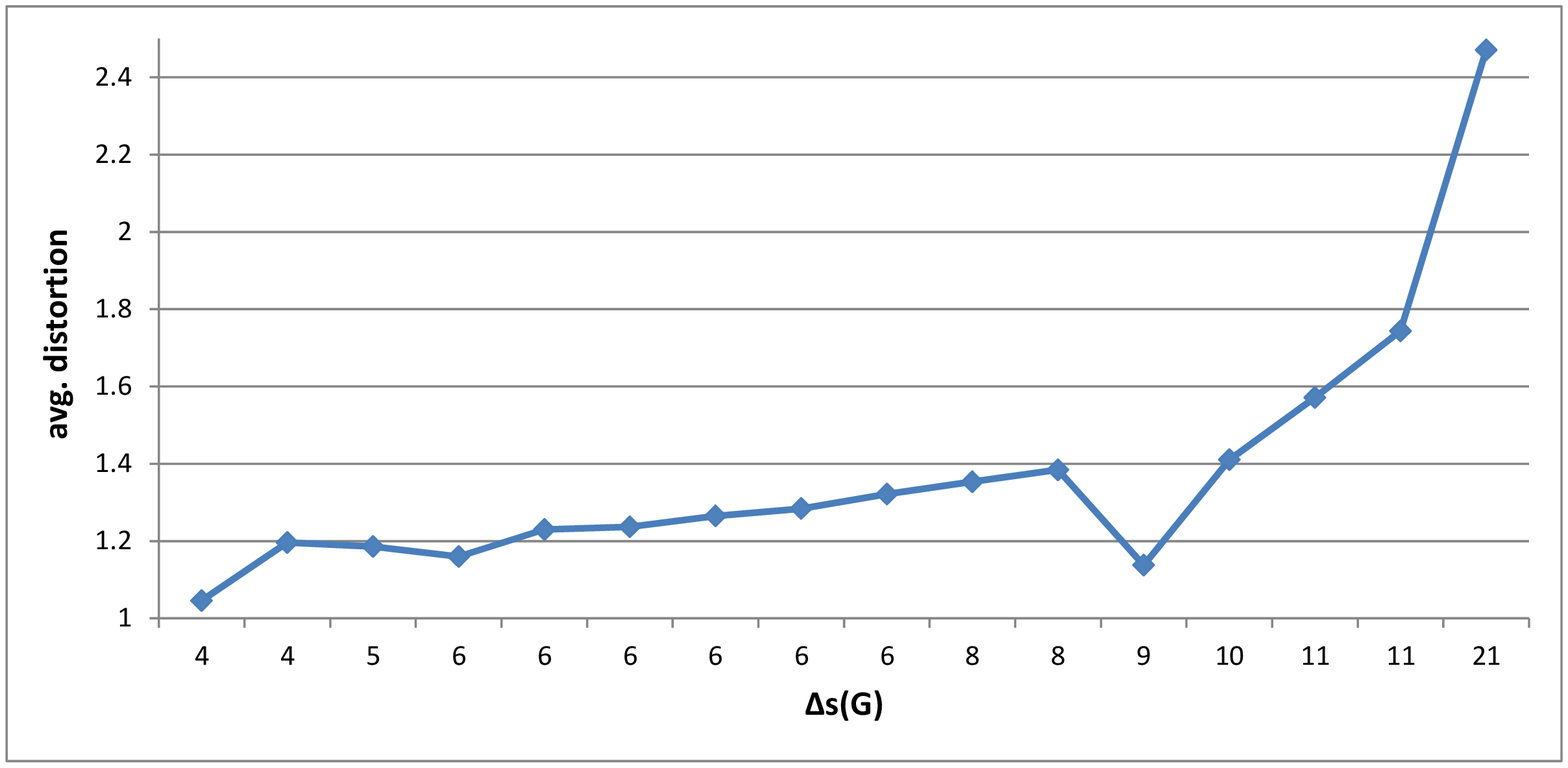}
                \caption{cluster-diameter $\Delta_s(G)$ vs. avg. distortion of $H.$}
                \label{fig:clusDiam-Hdist}
        \end{subfigure}
        \begin{subfigure}[b]{0.450\textwidth}
                \includegraphics[width=\textwidth]{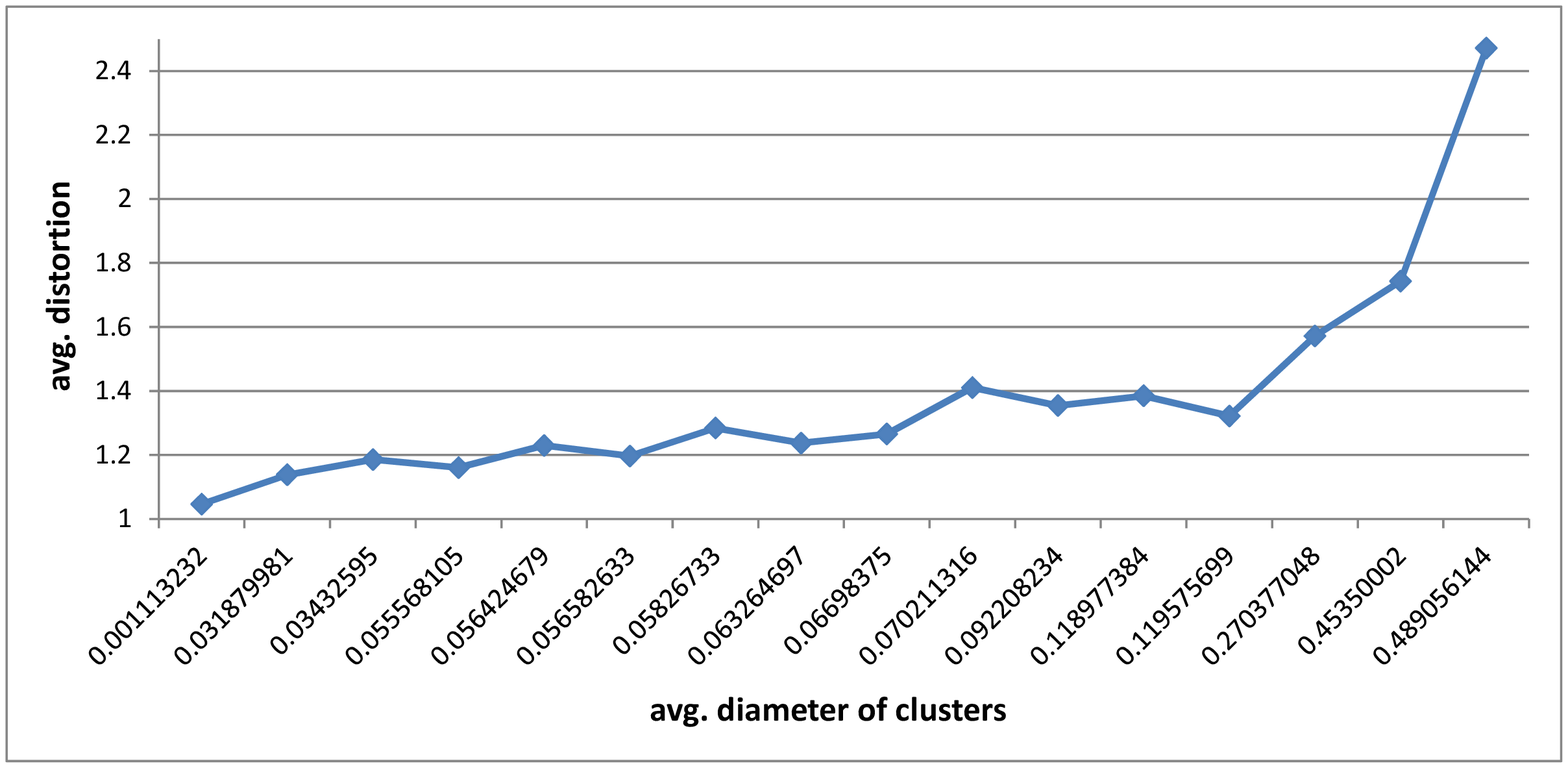}
                \caption{avg. diameter of clusters vs. avg. distortion of $H.$} 
                \label{fig:avgClus-diamHdist}
        \end{subfigure}
        \caption{Tree-likeness measurements: pairwise comparison.}\label{fig:tree-likeness-charts}
\end{figure}

\clearpage


\begin{thebibliography}{10}

\bibitem{epa}
{Pages linking to \url{www.epa.gov}. Obtained from {Jon Kleinberg's} web page}.
\newblock Avaialable at: \\
  \url{http://www.cs.cornell.edu/courses/cs685/2002fa/}.

\bibitem{routeview}
University of oregon route-views project.
\newblock \url{http://www.routeviews.org/}.

\bibitem{DBLP:conf/podc/AbrahamBKMRT07}
Ittai Abraham, Mahesh Balakrishnan, Fabian Kuhn, Dahlia Malkhi, Venugopalan
  Ramasubramanian, and Kunal Talwar.
\newblock Reconstructing approximate tree metrics.
\newblock In {\em PODC}, pages 43--52, 2007.

\bibitem{DBLP:conf/icdm/AdcockSM13}
Aaron~B. Adcock, Blair~D. Sullivan, and Michael~W. Mahoney.
\newblock Tree-like structure in large social and information networks.
\newblock In {\em ICDM}, pages 1--10, 2013.

\bibitem{AgBaFaNaPa}
Richa Agarwala, Vineet Bafna, Martin Farach, Mike Paterson, and Mikkel Thorup.
\newblock On the approximability of numerical taxonomy (fitting distances by
  tree metrics).
\newblock {\em SIAM J. Comput.}, 28(3):1073--1085, 1999.

\bibitem{BaDeHaSiZa}
Mihai Badoiu, Erik~D. Demaine, MohammadTaghi Hajiaghayi, Anastasios
  Sidiropoulos, and Morteza Zadimoghaddam.
\newblock Ordinal embedding: Approximation algorithms and dimensionality
  reduction.
\newblock In {\em APPROX-RANDOM}, pages 21--34, 2008.

\bibitem{BaInSi}
Mihai Badoiu, Piotr Indyk, and Anastasios Sidiropoulos.
\newblock Approximation algorithms for embedding general metrics into trees.
\newblock In Nikhil Bansal, Kirk Pruhs, and Clifford Stein, editors, {\em
  SODA}, pages 512--521. SIAM, 2007.

\bibitem{barabasi99emergence}
A.~L. Barabasi and R.~Albert.
\newblock Emergence of scaling in random networks.
\newblock {\em Science}, 286:509--512, 1999.

\bibitem{barabasi2000scalefree}
Albert-L{\'a}szl{\'o} Barab{\'a}si, R{\'e}ka Albert, and Hawoong Jeong.
\newblock Scale-free characteristics of random networks: the topology of the
  world-wide web.
\newblock {\em Physica A: Statistical Mechanics and its Applications},
  281(1-4):69--77, June 2000.

\bibitem{erdos}
Vladimir Batagelj and Andrej Mrvar.
\newblock {Some analyses of Erdos collaboration graph}.
\newblock {\em Social Networks}, 22(2):173--186, May 2000.
\newblock \url{http://vlado.fmf.uni-lj.si/pub/networks/data/Erdos/Erdos02.net}.

\bibitem{Boguna2009}
M.~{Bogu\~{n}\'{a}}, D.~Krioukov, and K.~C. Claffy.
\newblock Navigability of complex networks.
\newblock {\em Nature Physics}, 5(1):74--80, 2009.

\bibitem{DBLP:journals/jal/BrandstadtCD99}
Andreas Brandst{\"a}dt, Victor Chepoi, and Feodor~F. Dragan.
\newblock Distance approximating trees for chordal and dually chordal graphs.
\newblock {\em J. Algorithms}, 30(1):166--184, 1999.

\bibitem{uea21813}
G.~Brinkmann, J.~Koolen, and V.~Moulton.
\newblock On the hyperbolicity of chordal graphs.
\newblock {\em Annals of Combinatorics}, 5(1):61--69, 2001.

\bibitem{yeast}
Dongbo Bu, Yi~Zhao, Lun Cai, Hong Xue, Xiaopeng Zhu, Hongchao Lu, Jingfen
  Zhang, Shiwei Sun, Lunjiang Ling, Nan Zhang, Guojie Li, and Runsheng Chen.
\newblock Topological structure analysis of the protein–protein interaction
  network in budding yeast.
\newblock {\em Nucleic Acids Research}, 31(9):2443--2450, May 2003.
\newblock Available at:
  \url{http://vlado.fmf.uni-lj.si/pub/networks/data/bio/Yeast/Yeast.htm}.

\bibitem{CaiC95}
Leizhen Cai and Derek~G. Corneil.
\newblock Tree spanners.
\newblock {\em SIAM J. Discrete Math.}, 8(3):359--387, 1995.

\bibitem{AS-Caida-20120601}
CAIDA.
\newblock The {CAIDA AS} relationships dataset, 1 {June} 2012- 5 {June} 2012.
\newblock \\ \url{http://www.caida.org/data/active/as-relationships}.

\bibitem{itdk0304}
CAIDA.
\newblock The internet topology data kit \#0304, April 2003.
\newblock \\ \url{http://www.caida.org/data/active/internet-topology-data-kit}.

\bibitem{AS-Caida-20071105}
CAIDA.
\newblock The {CAIDA AS} relationships dataset, 5 November 2007.
\newblock \\ \url{http://www.caida.org/data/active/as-relationships}.

\bibitem{aqualab}
Kai Chen, David~R. Choffnes, Rahul Potharaju, Yan Chen, Fabian~E. Bustamante,
  Dan Pei, and Yao Zhao.
\newblock Where the sidewalk ends: extending the internet as graph using
  traceroutes from p2p users.
\newblock In {\em Proceedings of the 5th international conference on Emerging
  networking experiments and technologies}, CoNEXT '09, pages 217--228, New
  York, NY, USA, 2009. ACM.
\newblock \\ \url{http://www.aqualab.cs.northwestern.edu/projects}.

\bibitem{conf/isaac/ChenFHM12}
Wei Chen, Wenjie Fang, Guangda Hu, and Michael~W. Mahoney.
\newblock On the hyperbolicity of small-world and tree-like random graphs.
\newblock In Kun-Mao Chao, Tsan sheng Hsu, and Der-Tsai Lee, editors, {\em
  ISAAC}, volume 7676 of {\em Lecture Notes in Computer Science}, pages
  278--288. Springer, 2012.

\bibitem{DBLP:journals/ejc/ChepoiD00}
Victor Chepoi and Feodor~F. Dragan.
\newblock A note on distance approximating trees in graphs.
\newblock {\em Eur. J. Comb.}, 21(6):761--766, 2000.

\bibitem{DBLP:conf/compgeom/ChepoiDEHV08}
Victor Chepoi, Feodor~F. Dragan, Bertrand Estellon, Michel Habib, and Yann
  Vax{\`e}s.
\newblock Diameters, centers, and approximating trees of delta-hyperbolic
  geodesic spaces and graphs.
\newblock In Monique Teillaud, editor, {\em Symposium on Computational
  Geometry}, pages 59--68. ACM, 2008.

\bibitem{ChDrEsRout}
Victor Chepoi, Feodor~F. Dragan, Bertrand Estellon, Michel Habib, Yann
  Vax{\`e}s, and Yang Xiang.
\newblock Additive spanners and distance and routing labeling schemes for
  hyperbolic graphs.
\newblock {\em Algorithmica}, 62(3-4):713--732, 2012.

\bibitem{ChepoiDNRV12}
Victor Chepoi, Feodor~F. Dragan, Ilan Newman, Yuri Rabinovich, and Yann
  Vax{\`e}s.
\newblock Constant approximation algorithms for embedding graph metrics into
  trees and outerplanar graphs.
\newblock {\em Discrete {\&} Computational Geometry}, 47(1):187--214, 2012.

\bibitem{DBLP:conf/approx/ChepoiE07}
Victor Chepoi and Bertrand Estellon.
\newblock Packing and covering {\it delta} -hyperbolic spaces by balls.
\newblock In {\em APPROX-RANDOM}, pages 59--73, 2007.

\bibitem{DBLP:journals/im/ChungL03}
Fan R.~K. Chung and Linyuan Lu.
\newblock The average distance in a random graph with given expected degrees.
\newblock {\em Internet Mathematics}, 1(1):91--113, 2003.

\bibitem{cohenHyper}
Nathann Cohen, David Coudert, and Aur{\'e}lien Lancin.
\newblock {Exact and approximate algorithms for computing the hyperbolicity of
  large-scale graphs}.
\newblock Rapport de recherche RR-8074, INRIA, September 2012.

\bibitem{conf/nca/MontgolfierSV11}
Fabien de~Montgolfier, Mauricio Soto, and Laurent Viennot.
\newblock Treewidth and hyperbolicity of the internet.
\newblock In {\em NCA}, pages 25--32. IEEE Computer Society, 2011.

\bibitem{dutchElite}
W.~de~Nooy.
\newblock The network data on the administrative elite in the netherlands in
  {April}- {June} 2006.
\newblock
  \url{http://vlado.fmf.uni-lj.si/pub/networks/data/2mode/DutchElite.htm}.

\bibitem{diestelGT}
Reinhard Diestel.
\newblock {\em Graph Theory, 4th Edition}, volume 173 of {\em Graduate texts in
  mathematics}.
\newblock Springer, 2012.

\bibitem{DBLP:journals/jgaa/Dourisboure05}
Yon Dourisboure.
\newblock Compact routing schemes for generalised chordal graphs.
\newblock {\em J. Graph Algorithms Appl.}, 9(2):277--297, 2005.

\bibitem{DoDrGaYa}
Yon Dourisboure, Feodor~F. Dragan, Cyril Gavoille, and Chenyu Yan.
\newblock Spanners for bounded tree-length graphs.
\newblock {\em Theor. Comput. Sci.}, 383(1):34--44, 2007.

\bibitem{DoGa2007}
Yon Dourisboure and Cyril Gavoille.
\newblock Tree-decompositions with bags of small diameter.
\newblock {\em Discrete Mathematics}, 307(16):2008--2029, 2007.

\bibitem{DraganWG2013}
Feodor~F. Dragan.
\newblock Tree-like structures in graphs: a metric point of view.
\newblock In {\em WG}, 2013.

\bibitem{DBLP:conf/sofsem/DraganA13}
Feodor~F. Dragan and Muad Abu-Ata.
\newblock Collective additive tree spanners of bounded tree-breadth graphs with
  generalizations and consequences.
\newblock In Peter van Emde~Boas, Frans C.~A. Groen, Giuseppe~F. Italiano,
  Jerzy~R. Nawrocki, and Harald Sack, editors, {\em SOFSEM}, volume 7741 of
  {\em Lecture Notes in Computer Science}, pages 194--206. Springer, 2013.

\bibitem{DBLP:conf/approx/DraganK11}
Feodor~F. Dragan and Ekkehard K{\"o}hler.
\newblock An approximation algorithm for the tree t-spanner problem on
  unweighted graphs via generalized chordal graphs.
\newblock In Leslie~Ann Goldberg, Klaus Jansen, R.~Ravi, and Jos{\'e} D.~P.
  Rolim, editors, {\em APPROX-RANDOM}, volume 6845 of {\em Lecture Notes in
  Computer Science}, pages 171--183. Springer, 2011.

\bibitem{DraganYC06}
Feodor~F. Dragan, Chenyu Yan, and Derek~G. Corneil.
\newblock Collective tree spanners and routing in at-free related graphs.
\newblock {\em J. Graph Algorithms Appl.}, 10(2):97--122, 2006.

\bibitem{EmekP08}
Yuval Emek and David Peleg.
\newblock Approximating minimum max-stretch spanning trees on unweighted
  graphs.
\newblock {\em SIAM J. Comput.}, 38(5):1761--1781, 2008.

\bibitem{DBLP:conf/sigcomm/FaloutsosFF99}
Michalis Faloutsos, Petros Faloutsos, and Christos Faloutsos.
\newblock On power-law relationships of the internet topology.
\newblock In {\em SIGCOMM}, pages 251--262, 1999.

\bibitem{FournierHyper}
Herv{\'e} Fournier, Anas Ismail, and Antoine Vigneron.
\newblock Computing the gromov hyperbolicity of a discrete metric space.
\newblock {\em CoRR}, abs/1210.3323, 2012.

\bibitem{GaLy}
Cyril Gavoille and Olivier Ly.
\newblock Distance labeling in hyperbolic graphs.
\newblock In {\em ISAAC}, pages 1071--1079, 2005.

\bibitem{GaPe}
Cyril Gavoille and David Peleg.
\newblock Compact and localized distributed data structures.
\newblock {\em Distributed Computing}, 16(2-3):111--120, 2003.

\bibitem{GhHa}
E.~Ghys and P.~de~la Harpe~eds.
\newblock Les groupes hyperboliques d'apr\`es m. gromov.
\newblock {\em Progress in Mathematics}, 83, 1990.

\bibitem{Gromov87}
M~Gromov.
\newblock Hyperbolic groups: Essays in group theory.
\newblock {\em MSRI Publ.}, 8:75–263, 1987.

\bibitem{DBLP:conf/soda/Gupta01}
Anupam Gupta.
\newblock Steiner points in tree metrics don't (really) help.
\newblock In S.~Rao Kosaraju, editor, {\em SODA}, pages 220--227. ACM/SIAM,
  2001.

\bibitem{ppi}
H.~Jeong, S.~P. Mason, A.-L. Barabási, and Z.~N. Oltvai.
\newblock Lethality and centrality in protein networks.
\newblock {\em Nature}, 411(6833):41--42, 2001.
\newblock Avaialable at: \url{http://www3.nd.edu/~networks/resources.htm}.

\bibitem{DBLP:journals/corr/Kao13}
Mong-Jen Kao, Der-Tsai Lee, and Dorothea Wagner.
\newblock Approximating metrics by tree metrics of small distance-weighted
  average stretch.
\newblock {\em CoRR}, abs/1301.3252, 2013.

\bibitem{Kennedy2013Arch}
W.~S. {Kennedy}, O.~{Narayan}, and I.~{Saniee}.
\newblock {On the Hyperbolicity of Large-Scale Networks}.
\newblock {\em ArXiv e-prints}, June 2013.

\bibitem{california}
Jon~M. Kleinberg.
\newblock Authoritative sources in a hyperlinked environment.
\newblock {\em J. ACM}, 46(5):604--632, September 1999.
\newblock \url{http://www.cs.cornell.edu/courses/cs685/2002fa/}.

\bibitem{DBLP:conf/stoc/Kleinberg00}
Jon~M. Kleinberg.
\newblock The small-world phenomenon: an algorithm perspective.
\newblock In {\em STOC}, pages 163--170, 2000.

\bibitem{DBLP:conf/nips/Kleinberg01}
Jon~M. Kleinberg.
\newblock Small-world phenomena and the dynamics of information.
\newblock In {\em NIPS}, pages 431--438, 2001.

\bibitem{DBLP:conf/infocom/Kleinberg07}
Robert Kleinberg.
\newblock Geographic routing using hyperbolic space.
\newblock In {\em INFOCOM}, pages 1902--1909, 2007.

\bibitem{KrLe}
Robert Krauthgamer and James~R. Lee.
\newblock Algorithms on negatively curved spaces.
\newblock In {\em FOCS}, pages 119--132, 2006.

\bibitem{DBLP:journals/im/LeskovecLDM09}
Jure Leskovec, Kevin~J. Lang, Anirban Dasgupta, and Michael~W. Mahoney.
\newblock Community structure in large networks: Natural cluster sizes and the
  absence of large well-defined clusters.
\newblock {\em Internet Mathematics}, 6(1):29--123, 2009.

\bibitem{Daniel10}
Daniel Lokshtanov.
\newblock On the complexity of computing treelength.
\newblock {\em Discrete Applied Mathematics}, 158(7):820--827, 2010.

\bibitem{narayan2011large}
Onuttom Narayan and Iraj Saniee.
\newblock Large-scale curvature of networks.
\newblock {\em Physical Review E}, 84(6):066108, 2011.

\bibitem{eva}
K.~Norlen, G.~Lucas, M.~Gebbie, and J.~Chuang.
\newblock {EVA: Extraction, Visualization and Analysis of the
  Telecommunications and Media Ownership Network}.
\newblock {\em Proceedings of International Telecommunications Society 14th
  Biennial Conference (ITS2002), Seoul Korea}, August 2002.
\newblock Available at:
  \url{http://vlado.fmf.uni-lj.si/pub/networks/data/econ/Eva/Eva.htm}.

\bibitem{Pel}
D.~Peleg.
\newblock {\em Distributed Computing: A Locality-Sensitive Approach}.
\newblock SIAM Monographs on Discrete Math. Appl. SIAM, Philadelphia, 2000.

\bibitem{DBLP:conf/wg/Peleg99}
David Peleg.
\newblock Proximity-preserving labeling schemes and their applications.
\newblock In Peter Widmayer, Gabriele Neyer, and Stephan Eidenbenz, editors,
  {\em WG}, volume 1665 of {\em Lecture Notes in Computer Science}, pages
  30--41. Springer, 1999.

\bibitem{RobSey86}
N.~Robertson and P.~D. Seymour.
\newblock {Graph minors {II}: algorithmic aspects of tree-width}.
\newblock {\em Journal Algorithms}, 7:309--322, 1986.

\bibitem{dimes}
Yuval Shavitt and Eran Shir.
\newblock Dimes: Let the internet measure itself.
\newblock {\em CoRR}, abs/cs/0506099, 2005.
\newblock Avaialable at: \url{http://www.netdimes.org}.

\bibitem{DBLP:journals/ton/ShavittT08}
Yuval Shavitt and Tomer Tankel.
\newblock Hyperbolic embedding of internet graph for distance estimation and
  overlay construction.
\newblock {\em IEEE/ACM Trans. Netw.}, 16(1):25--36, 2008.

\bibitem{homo}
Chris Stark, Bobby-Joe Breitkreutz, Teresa Reguly, Lorrie Boucher, Ashton
  Breitkreutz, and Mike Tyers.
\newblock Biogrid: a general repository for interaction datasets.
\newblock {\em Nucleic Acids Research}, 34(Database-Issue):535--539, 2006.
\newblock Available at: \url{http://thebiogrid.org/}, release 3.2.99.

\bibitem{DBLP:conf/spaa/ThorupZ01}
Mikkel Thorup and Uri Zwick.
\newblock Compact routing schemes.
\newblock In {\em SPAA}, pages 1--10, 2001.

\bibitem{Watts-Colective-1998}
D.J. Watts and S.H. Strogatz.
\newblock Collective dynamics of 'small-world' networks.
\newblock {\em Nature}, (393):440--442, 1998.

\bibitem{DBLP:journals/combinatorics/WuZ11}
Yaokun Wu and Chengpeng Zhang.
\newblock Hyperbolicity and chordality of a graph.
\newblock {\em Electr. J. Comb.}, 18(1), 2011.

\bibitem{dblbAmazon}
Jaewon Yang and Jure Leskovec.
\newblock Defining and evaluating network communities based on ground-truth.
\newblock In Mohammed~Javeed Zaki, Arno Siebes, Jeffrey~Xu Yu, Bart Goethals,
  Geoffrey~I. Webb, and Xindong Wu, editors, {\em ICDM}, pages 745--754. IEEE
  Computer Society, 2012.
\newblock Avaialable at: \url{http://snap.stanford.edu/data/com-Amazon.html},
  \\ \url{http://snap.stanford.edu/data/com-DBLP.html}.

\end{thebibliography}

%

\end{document}